\title{Clustering Mixtures with Almost Optimal Separation in Polynomial Time}
\author{Jerry Li\thanks{\url{jerrl@microsoft.com}} \and Allen Liu\thanks{\url{cliu568@mit.edu}.  This work was partially done while working as an intern at Microsoft Research.  This work was supported in part by an NSF Graduate Research Fellowship and a Fannie and John Hertz Foundation Fellowship.}}
\date{}
\begin{document}

\maketitle
\begin{abstract}
    We consider the problem of clustering mixtures of mean-separated Gaussians in high dimensions.
    We are given samples from a mixture of $k$ identity covariance Gaussians, so that the minimum pairwise distance between any two pairs of means is at least $\Delta$, for some parameter $\Delta > 0$, and the goal is to recover the ground truth clustering of these samples.
    It is folklore that separation $\Delta = \Theta (\sqrt{\log k})$ is both necessary and sufficient to recover a good clustering, at least information theoretically.
    However, the estimators which achieve this guarantee are inefficient.
    We give the first algorithm which runs in polynomial time, and which almost matches this guarantee.
    More precisely, we give an algorithm which takes polynomially many samples and time, and which can successfully recover a good clustering, so long as the separation is $\Delta = \Omega (\log^{1/2 + c} k)$, for any $c > 0$.
    Previously, polynomial time algorithms were only known for this problem when the separation was polynomial in $k$, and all algorithms which could tolerate $\poly \log k$ separation required quasipolynomial time.
    We also extend our result to mixtures of translations of a distribution which satisfies the Poincar\'{e} inequality, under additional mild assumptions.
    Our main technical tool, which we believe is of independent interest, is a novel way to implicitly represent and estimate high degree moments of a distribution, which allows us to extract important information about high-degree moments without ever writing down the full moment tensors explicitly.
\end{abstract}

\section{Introduction}

Gaussian mixture models are some of the most popular generative models in both theory and practice.
A $k$-Gaussian mixture model $\mcl{M}$ (henceforth $k$-GMM) is a distribution specified by $k$ non-negative mixing weights $w_1, \ldots, w_k$ which sum to $1$, and $k$ component Gaussians $N (\mu_1, \Sigma_1), \ldots,N (\mu_k,\Sigma_k)$, and is given by the probability density function
\[
\mcl{M} = \sum_{i = 1}^k w_i N (\mu_i, \Sigma_i) \; .
\]
In other words, to draw a sample from $\mcl{M}$, we select the $i$-th component with probability $w_i$, and then draw an independent sample from $N (\mu_i,\Sigma_i)$.
An important special case of this model is the \emph{isotropic} (also sometimes referred to as \emph{spherical}) case, where $\Sigma_i = I$ for all $i = 1, \ldots, k$.

One of the most important and well-studied problems for GMMs and isotropic GMMs in particular is \emph{clustering}.
Given $n$ independent samples $X_1, \ldots, X_n$ drawn from an unknown $k$-GMM, the goal of clustering is to recover a partition of the data points into $k$ parts $S_1, \ldots, S_k$ so that (a) almost all the points in every cluster are drawn from the same component Gaussian, and (b) for every component, there is some cluster which contains almost all of the samples drawn from it.
Of course, not every isotropic GMM is clusterable: for instance, if two components are identical, then it is information-theoretically impossible to detect which component a sample came from, and so recovering a good clustering is also impossible.
In the isotropic setting, a necessary and sufficient condition for the GMM to be clusterable is some amount of mean separation, namely, that $\norm{\mu_i - \mu_j}_2 \geq \Delta$ for all $i \neq j$, for some parameter $\Delta$.

The question then becomes: how small can we take $\Delta$ so that we can still cluster?
For simplicity, for the remainder of this section, let us assume that $w_i \geq 1 / \poly (k)$ for all $i = 1, \ldots, k$, that is, all the components have a nontrivial amount of weight.
Information-theoretically,  it is well-known (see e.g.~\cite{regev2017learning}) that $\Delta = \Theta (\sqrt{\log k })$ separation is necessary and sufficient to achieve a clustering which correctly clusters more than an $1 - \eps$ fraction of the points with high probability, for all $\eps > 0$. 
Unfortunately, all known efficient algorithms for clustering at this separation rely on brute force methods, and run in exponential time. 

If we restrict our attention to efficient estimators, the state of affairs is a bit more complicated.
In fact, for over 20 years, the smallest separation that polynomial time algorithms could tolerate was $\Delta = \Omega (k^{1/4})$~\cite{vempala2004spectral}.
It was not until recently that three concurrent papers~\cite{diakonikolas2018list,hopkins2018mixture,kothari2018robust} gave algorithms which could handle separation $\Delta = \Omega (k^{\gamma})$ with runtime and sample complexity $O(d^{\poly (1 / \gamma)})$.
In particular, whenever the separation is some fixed polynomial in $k$, these algorithms run in polynomial time.
Unfortunately, if we wish to achieve the optimal separation of $\Omega (\sqrt{\log k})$---or indeed, any polylogarithmic amount of separation---these algorithms would require quasipolynomial time and sample complexity.

\paragraph{The barrier at quasipolynomial time} These algorithms all get stuck at quasipolynomial time when $\Delta = \Theta (\poly \log k)$ for the same reason.
Fundamentally, all current algorithmic approaches to this problem rely on the following geometric identifiability fact:

\begin{quote}
    \emph{Given enough samples from an isotropic $k$-GMM with separation $\Delta = \Omega (k^\gamma)$, then any sufficiently large subset of samples whose empirical $t = \Theta(1 / \gamma)$ moments approximately match those of a Gaussian along all projections must essentially recover one of the true clusters.}
\end{quote}
Algorithmically, this amounts to finding a subset of points whose empirical $t$-th moment tensor is close to that of a Gaussian in the appropriate norm.
Since this results in a hard optimization problem whenever $t > 2$, these algorithms often solve some suitable relaxation of this using something like the Sum of Squares hierarchy.
But, as the separation decreases, the algorithms must match more and more moments.
In particular, to achieve $\Delta = \Theta (\poly \log k)$, one must set $t = \poly \log k$, that is, one must match polylogarithmically many moments.
However, even writing down the degree $t = \poly \log k$ moment tensor requires quasipolynomial time, and guaranteeing that the empirical moment tensor concentrates requires quasipolynomially many samples.
As a result, the aforementioned algorithms all require quasipolynomial time and sample complexity, as they need to not only write down the moment tensor, but perform some fairly complex optimization tasks on top of it.

On the flip side, there is no concrete reason for pessimism either.
While there are lower bounds against large classes of efficient algorithms for clustering mixtures of arbitrary Gaussians, see e.g.~\cite{diakonikolas2017statistical,brennan2020statistical}, none of these apply when the components are isotropic.
In particular, this leaves open the appealing possibility that one could even cluster down to separation $\Delta = \Theta (\sqrt{\log k})$ in polynomial time.
Stated another way, the question becomes:
\begin{quote}
\emph{Can we cluster any clusterable mixture of isotropic Gaussians in polynomial time?}
\end{quote}

\subsection{Our Results}
In this paper, we (almost) resolve this question in the affirmative.
Namely, for all constants $c > 0$, we give an algorithm which takes polynomially many samples and time, and which can cluster with high probability, so long as the separation satisfies $\Delta = \Omega (\log^{1/2 + c} k)$.
In other words, our algorithm can almost match the information theoretically optimal separation, up to sub-polylogarithmic factors.
Our main theorem is:
\begin{theorem}[informal, see Theorem \ref{thm:main-GMM}, Corollary \ref{coro:cluster-GMM}]
Let $c > 0$ be fixed but arbitrary.
Let $\mcl{M}$ be a mixture of $k$ isotropic Gaussians with minimum mixing weight lower bounded by $1 / \poly (k)$, and minimum mean separation at least $\Delta = \Omega (\log^{1/2 + c} k)$.
Then, there is an algorithm which, given samples $X_1, \ldots, X_n \sim \mcl{M}$ for $n = \poly (k, d)$, outputs a clustering which is correct for all of the points, with high probability.
Moreover, this algorithm runs in time $\poly(d, k)$.
\end{theorem}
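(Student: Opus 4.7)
The plan is to reduce the clustering problem to an identifiability statement at a polylogarithmic moment degree, and then implement the search for clusters so that these high-degree moments are manipulated only \emph{implicitly}, never writing down the full tensor. I would first establish identifiability: under separation $\Delta$, any subset $S$ of at least $w_{\min} n / 2$ samples whose reweighted empirical moments up to degree $t = O(\log k / \Delta^2)$ approximately match, along every direction, those of a standard Gaussian must be essentially contained in a single true cluster. In the spirit of \cite{hopkins2018mixture, kothari2018robust, diakonikolas2018list}, such a statement should admit a low-degree sum-of-squares proof. Under the hypothesis $\Delta = \Omega(\log^{1/2 + c} k)$, the required degree is $t = O(\log^{1 - 2c} k)$, i.e.\ polylogarithmic in $k$, which is exactly the regime where prior techniques break down.

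Given identifiability, clustering reduces to searching for weights $w_1, \ldots, w_n \in [0,1]$ with $\sum_i w_i \gtrsim w_{\min} n$ such that the reweighted moment polynomial $v \mapsto \frac{1}{\sum_i w_i}\sum_i w_i \langle v, x_i - \mu(w)\rangle^s$ agrees, for each $s \le t$, with the $s$-th moment of the standard Gaussian along every unit direction $v$. A direct SoS relaxation of this system has size $d^{\Theta(t)}$, which is quasipolynomial whenever $t$ is superconstant; this is exactly the bottleneck described in the introduction. The essential observation is that the moment polynomial is fully determined by the $n$ samples and can be evaluated at any direction in $\poly(n,d)$ time, even though its tensorial expansion has $d^t$ entries.

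The core technical step, and where I expect the real difficulty to lie, is designing a polynomial-sized proof system that certifies these high-degree moment constraints through such implicit, factored representations rather than through the explicit tensor. The plan is to represent candidate pseudo-moments by structured objects built directly from the samples, and to rewrite every tensor contraction appearing in an SoS proof of the identifiability lemma so that it acts against one of these structured objects instead of the full moment tensor. Feasibility of the resulting $\poly(d,k)$-sized program would then witness that some reweighting of the data matches Gaussian moments up to degree $t = \poly\log k$, and a rounding procedure would extract from the pseudo-distribution an explicit subset $S$ that is essentially one true cluster.

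Finally, one iterates: having recovered one cluster, remove it from the dataset and repeat on the remaining $k - 1$ components, until all clusters are found; standard concentration arguments handle the $\poly(k, d)$ sample complexity along the way. The \textbf{main obstacle} is the middle step: the implicit representation must be simultaneously expressive enough to support an identifiability proof at degree $t = \poly\log k$, compact enough to be written down and optimized over in polynomial time, and well-behaved enough that a rounding scheme produces a single true cluster rather than an unresolved convex combination of several.
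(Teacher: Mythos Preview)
Your proposal has a genuine gap at exactly the step you flag as the main obstacle, and the paper resolves it by a route quite different from the SoS-based one you sketch.

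First, a small but telling slip: the degree you quote, $t = O(\log k/\Delta^2)$, is not right. With $\Delta = \log^{1/2+c} k$ this would give $t = o(1)$. The correct scaling (for any of the SoS approaches you cite, and for the paper's method) is $t = \Theta(\log k / \log\log k)$, coming from the requirement $\Delta^t \gg \poly(k)$ rather than $t\Delta^2 \gg \log k$.

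More substantively, your plan is to compress a degree-$t$ SoS relaxation by exploiting that moment polynomials can be evaluated implicitly. But the size of the SoS program is not driven by the moment tensors of the \emph{data}; it is driven by the pseudo-moments of the \emph{indicator variables} $w_1,\dots,w_n$, of which there are $n^{\Theta(t)}$. Nothing in your proposal explains what structure on these pseudo-moments would let you represent and optimize over them in $\poly(n)$ size, nor why the existing SoS identifiability proofs would factor through such a representation. Absent that, the plan is essentially ``find a polynomial-size SoS proof of a statement whose known proofs are size $n^{\poly\log k}$,'' which is the open question itself.

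The paper sidesteps SoS entirely. It never searches over reweightings of the samples and never certifies moment-matching. Instead it reduces clustering to a pairwise test: given $X,X'\sim\mcl{M}$, decide whether they came from the same component by looking at $(X-X')/\sqrt 2$, which is a sample from a ``difference mixture'' with one zero-mean component. The test is: compute a low-rank random surrogate $R_t$ for the degree-$t$ Hermite tensor of the sample (built so that $R_t$ is always a sum of $\poly(k)$ rank-$1$ tensors), apply an implicitly-stored projection $\Gamma_t:\R^{d^t}\to\R^k$ onto the span of $\{\mu_i^{\otimes t}\}$, and threshold the length. The projection $\Gamma_t$ is built iteratively: from $\Gamma_{s-1}$ one forms a $dk\times dk$ matrix (estimable via $R_{2s}$ and $\Gamma_{s-1}$), takes its top-$k$ PCA directions $\Pi_s$, and sets $\Gamma_s = \Pi_s(I_d\otimes \Gamma_{s-1})$. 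Every object manipulated is of size $\poly(d,k)$, and every tensor ever touched is rank-$1$, so $\Gamma_t$ can be applied in polynomial time even though it is formally a $k\times d^t$ matrix. For general Gaussian mixtures (no bound on the maximum separation) there is an additional recursive slicing step, but the core engine is this nested-PCA construction, not a convex relaxation.

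So the missing idea is concrete: replace ``certify that a reweighting matches Gaussian moments'' by ``build, via iterated top-$k$ PCA on $dk\times dk$ matrices, an implicit projection onto $\mathrm{span}\{\mu_i^{\otimes t}\}$, and use it together with a rank-$1$ Hermite estimator to test individual samples.''
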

\noindent
We briefly remark that we can handle arbitrary mixing weights as well, but for simplicity we only state the theorem here in the more natural regime where all the mixing weights are not too small.
We also remark that a simple corollary of this is that we can also estimate the parameters of $\mcl{M}$ to good accuracy in polynomial time.
In fact, by using our algorithm as a warm start for the method proposed in~\cite{regev2017learning}, we can achieve arbitrarily good accuracy for parameter estimation, in polynomial time.
It is known that $\Delta = \Omega (\sqrt{\log k})$ is also necessary to achieve nontrivial parameter estimation with polynomially many samples~\cite{regev2017learning}, so our results for parameter estimation are also almost-optimal, in terms of the separation that they handle.
We note that our formal theorems are actually stated for parameter estimation, rather than clustering, however, in light of~\cite{regev2017learning}, these two problems are equivalent in the regime we consider.

Our main technique (as we will discuss in more detail later) extends to beyond Gaussians, and in fact also allows us to cluster any mixture of translations of a distribution $\mcl{D}$, so long as this distribution satisfies the \emph{Poincar\'{e} inequality}, under a mild technical condition.
Recall that a distribution $\mcl{D}$ over $\R^d$ is said to be $\sigma$-Poincar\'{e} if for all differentiable functions $f: \R^d \to \R$, we have
\[
\Var_{X \sim \mcl{D}} \left[ f(X) \right] \leq \sigma^2 \cdot \E_{X \sim \mcl{D}} \left[ \norm{\nabla f(X)}_2^2 \right] \; .
\]
This condition is widely studied in probability theory, and is satisfied by many natural distribution classes.
For instance, isotropic Gaussians are $1$-Poincar\'{e}, and any isotropic logconcave distribution is $\psi$-Poincar\'{e}, where $\psi$ is the value of the KLS constant (see e.g.~\cite{tat2018kannan} for an overview of the KLS conjecture).

In fact, the family of Poincar\'{e} distributions is the most general family of distributions for which the previously mentioned Sum of Squares-based methods for clustering are known to work.
For any well-separated mixture of $1$-Poincar\'{e} distributions,~\cite{kothari2018robust} shows that one can recover the same guarantees as mentioned above: if the minimum mean separation is $\Delta = \Omega (k^\gamma)$, then their algorithm successfully clusters the points in time $O(d^{\poly (1 / \gamma)})$.
As before, when the separation is polylogarithmic, the runtime and sample complexity of their method is once again quasipolynomial.

We show that one can improve the runtime and sample complexity to polynomial time, under two additional assumptions: first, the mixture must consist of translated versions of the same Poincar\'{e} distribution which we can get samples from, and second, the maximum and minimum separations between any two components must be polynomially related.
More concretely, we show:
\begin{theorem}[informal, see Theorem \ref{thm:main-Poincare}, Corollary \ref{coro:cluster-Poincare}]
Let $c > 0$ be fixed but arbitrary.
Let $\mcl{D}$ be a fixed distribution with mean zero over $\R^d$ which is $1$-Poincar\'{e}.
Let $\mcl{M}$ be a mixture of $k$ distributions where each component is of the form $\mu_i + \mcl{D}$.
Assume that the minimum mixing weight in this distribution is at least $1 / \poly (k)$, and moreover, assume that
\begin{align*}
&\min_{i \neq j} \norm{\mu_i - \mu_j}_2 \geq \Omega (\log^{1 + c} k)    \\
&\max_{i \neq j} \norm{\mu_i - \mu_j}_2 \leq \poly \left( \min_{i \neq j} \norm{\mu_i - \mu_j}_2 \right) \; .
\end{align*}
Then, there is an algorithm which, given samples $X_1, \ldots, X_n \sim \mcl{M}$ and samples $z_1, \dots , z_n \sim \mcl{D}$ for $n = \poly (k, d)$, outputs a clustering of $X_1, \dots , X_n$ which is correct for all of the points, with high probability.
Moreover, this algorithm runs in time $\poly(d, k)$.
\end{theorem}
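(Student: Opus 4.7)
The plan is to follow the template established for the Gaussian case in Theorem \ref{thm:main-GMM}, replacing the explicit, closed-form knowledge of Gaussian moments with an empirical substitute derived from the auxiliary samples $z_1, \ldots, z_n \sim \mcl{D}$. At a high level, both algorithms identify clusters by testing whether a candidate subset $S$ of samples from the mixture has $t$-th central moment tensor close to the $t$-th moment tensor of the reference distribution along every rank-one direction $v^{\otimes t}$, for some degree $t = \Theta(\poly \log k)$ chosen as a function of the separation. The key enabling ingredient, already developed for the Gaussian result, is the implicit representation of high-degree moments: the full $t$-th moment tensor has $d^t$ entries and cannot be written down in polynomial time, but we only ever need to evaluate certain inner products and contractions of it, and those can be computed in $\poly(k,d)$ time directly from sample access. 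The new wrinkle here is that the reference tensor is no longer analytically available, so we plug in the empirical tensor $T_z$ obtained implicitly from the $z_j$'s, and compare it to the implicit empirical tensor $T_S$ of any candidate subset.

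Concretely, I would first show that the implicit-moment machinery extends to any distribution given only via samples: for any direction $v$ and any subset $S$, one can estimate $\langle v^{\otimes t}, T_z\rangle$ and $\langle v^{\otimes t}, T_S\rangle$ (and symmetrized versions thereof) to high accuracy from $\poly(k,d)$ samples, without ever materializing $T_z$ or $T_S$. Next I would prove the analogue of the Gaussian identifiability lemma: under the $1$-Poincar\'{e} hypothesis, if $S$ has $t$-th central moments matching those of $\mcl{D}$ along every direction $v^{\otimes t}$, then $S$ is essentially contained in a single cluster. This relies on the standard consequence of the Poincar\'{e} inequality that Lipschitz functionals concentrate subexponentially, which after tensorization gives the required tail bound for degree-$t$ polynomials of a Poincar\'{e} random variable. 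Finally, plugging this identifiability statement into the same search/rounding procedure used for Gaussians yields the clustering, and the parameter-estimation corollary follows by the reduction outlined earlier in the paper.

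The main obstacle is the quantitative loss from trading Gaussian tails for Poincar\'{e} tails. A $1$-Poincar\'{e} variable has tails of order $e^{-\Omega(r)}$ instead of $e^{-\Omega(r^2)}$, and propagating this through the degree-$t$ identifiability argument costs a square-root factor in the separation that can be handled, which is exactly why we need $\Delta = \Omega(\log^{1+c} k)$ here versus $\Omega(\log^{1/2+c} k)$ in the Gaussian case; tracking this factor carefully through the implicit moment estimates is the most delicate quantitative step. A secondary but important issue is that $T_z$ is only an estimate of the true moment tensor of $\mcl{D}$, so every concentration-based comparison must be robust to a small additive sampling error, and the error's magnitude naturally scales with $\max_{i\neq j}\norm{\mu_i-\mu_j}_2$ while the slack in the comparison test scales with $\min_{i\neq j}\norm{\mu_i-\mu_j}_2$. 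This is exactly where the hypothesis that these two quantities are polynomially related enters: it guarantees that the sampling-induced error can be absorbed into the slack of the moment-matching test at every relevant scale, so that the Gaussian-style analysis goes through without any quasipolynomial blowup.
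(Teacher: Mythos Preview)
Your proposal describes the moment-matching subset search that underlies the prior SoS-based algorithms (Hopkins--Li, Kothari--Steinhardt, Diakonikolas et al.), not the algorithm in this paper, and the gap is exactly the one that makes those methods quasipolynomial. Checking that a candidate set $S$ matches the reference moments ``along every rank-one direction $v^{\otimes t}$'' is an optimization over $v$; the implicit-moment trick lets you evaluate $\langle v^{\otimes t}, T_S\rangle$ for a \emph{fixed} $v$ in polynomial time, but it does not by itself let you certify the bound uniformly over all $v$, nor does it give you a polynomial-time ``search/rounding procedure'' over subsets. The paper's Gaussian algorithm does not contain such a procedure for you to reuse; in fact the Poincar\'{e} result is proved \emph{first} and the Gaussian theorem is the extension, not the other way around.

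What the paper actually does is avoid the subset/direction search entirely. It works with the difference mixture and builds, iteratively in $s=1,\ldots,t$, a nested projection map $\Gamma_{\Pi_s,\ldots,\Pi_1}:\R^{d^s}\to\R^k$ whose row span approximately contains $\mu_1^{\otimes s},\ldots,\mu_k^{\otimes s}$. The step from $s-1$ to $s$ is a single top-$k$ SVD of a $dk\times dk$ matrix estimated via the rank-one estimator $R_{2s}$. Once $\Gamma_t$ is in hand, the test for a single sample $x$ is just the norm $\|\Gamma_t\,\flatten(R_t(x,z_1,\ldots,z_{2t-1}))\|$: small if $x$ came from the zero-mean component, large otherwise. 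No optimization over $v$ and no search over subsets is ever performed. Your high-level intuitions about the $e^{-r}$ versus $e^{-r^{2}}$ tail loss (hence $\log^{1+c}k$ versus $\log^{1/2+c}k$) and about why the polynomial max/min separation ratio is needed are correct, but they are consequences of the variance and accuracy analysis of this projection-and-test pipeline, not of a moment-matching identifiability lemma.
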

\begin{remark}
Note that we only need the samples from $\mcl{D}$.  We do not need to actually know the distribution or access the p.d.f. 
\end{remark}

\noindent
We note that separation $\Omega (\log k)$ is optimal for general Poincar\'{e} distributions, as Poincar\'{e} distributions include some distributions with exponential tails, for which $\Omega (\log k)$ separation is necessary to cluster.
Thus, the separation that we require is almost optimal for general Poincar\'{e} distributions.
We also note that one immediate consequence of this theorem, alongside Chen's recent breakthrough result~\cite{chen2021almost} for KLS that $\psi \leq \exp (C \sqrt{\log d \log \log d})$
for some universal constant $C$, and a simple application of PCA, is that we can cluster a mixture of translates of a logconcave distribution in polynomial time, as long as the separation is $\Omega (\exp (C \sqrt{\log k \log \log k}))$.

\subsection{Our Techniques}

In this section, we describe how our techniques work at a high level.
Our goal will be to devise a method which can, given samples $X, X' \sim \mcl{M}$, detect whether or not $X$ and $X'$ are from the same components or from different ones.

We first make the following reduction.
Notice that if $\mcl{M}$ is a mixture with separation $\Delta$, then $(X - X') / \sqrt{2}$ can be thought of as a sample from the \emph{difference mixture} $\mcl{M}'$.
This is a mixture with $\binom{k}{2} + 1$ components, each with covariance $I$.
It has one component with mean zero, and the means of the remaining components all have norm at least $\Delta / \sqrt{2}$.
Moreover, given $X, X' \sim \mcl{M}$, we have that $X - X'$ is drawn from the mean zero component of the difference mixture if and only if  $X, X'$ were drawn from the same component in the original mixture.
Thus, to check if two samples from the original mixture $X, X'$ are from the same component, it suffices to be able to detect, given a sample from the difference mixture, whether or not this sample comes from the mean zero component or not.

In the remainder of this section, in a slight abuse of notation, we will let $\mcl{M}$ denote the difference mixture, we will assume it has $k$ components with nonzero mean, and we will assume that all nonzero means of the difference mixture have norm at least $\Delta$.
We will henceforth refer to the components with nonzero mean as the nonzero components of the mixture.
This reparameterization of the problem only changes things by polynomial factors, which do not impact our qualitative results.

\subsubsection{Rough clustering via implicit moment tensors}
The main conceptual contribution of our paper is a novel way to implicitly access degree $t = O(\log k / \log \log k)$ moment information with polynomially many samples and time.
We do so by carefully constructing an implicitly maintained projection map from $\R^{d^t}$ down to a subspace of dimension $k$, which still preserves meaningful information about the nonzero components.
For now, let us first focus on the case where the maximum norm of any mean in the difference mixture is upper bounded by $\poly (\Delta)$, or equivalently, the maximum separation between any two components in the original mixture is at most polynomially larger than the minimum separation.

\paragraph{Low rank estimators for Hermite polynomials} 
Central to our methods is the \emph{$t$-th Hermite polynomial tensor}, a classical object in probability theory, which we denote $h_t: \R^{d} \to \R^{d^t}$. 
These are explicit polynomials, and are the natural analog of the univariate Hermite polynomials to high dimensions. 
A well-known fact about the Hermite polynomial tensor is that 
\begin{equation}
\label{eq:hermite-gaussian}
\E_{X \sim  N (\mu, I)} [h_t(X)] = \mu^{\otimes t} \; .
\end{equation}
Throughout the introduction, we will treat all tensors as flattened into vectors in $\R^{d^t}$ (in a canonical way) e.g. we view the RHS of the above as a vector in $\R^{d^t}$.
One simple but important implication of~\eqref{eq:hermite-gaussian} is that
\begin{equation}
\label{eq:hermite-mixture}
\E_{X \sim \mcl{M}} [h_t(X)] = \sum_{i = 1}^k w_i \mu_i^{\otimes t} \; .
\end{equation}
This fact will be crucial for us going forward.

However, a major bottleneck for algorithmically using the Hermite polynomial tensors is that we cannot write down $h_{t} (X)$ in polynomial time when $t$ gets large, e.g. when $t = \Omega (\log k / \log \log k)$, which is the regime we will require.

To get around this, we will use a modification of the Hermite polynomial tensors that can still be used to estimate the RHS of (\ref{eq:hermite-mixture}) but can also be easily manipulated implicitly.  In particular, we will construct a \emph{random} polynomial $R_{t}: \R^d \to \R^{d^t}$ i.e. we can imagine $R_t$ is actually a polynomial in $x$ whose coefficients are randomly  chosen.  The polynomial $R_t$ (constructed in Corollary \ref{coro:rank1-hermite-identity-p2}) satisfies two key properties.
\begin{enumerate}
\item $R_t (x)$ is an unbiased estimator for $h_t(x)$ with bounded variance i.e. for a fixed $x$, $\E [R_t(x)] = h_t(x)$ where the expectation is over the random coefficients of $R_t$.
\item For any choice of the randomness in the coefficients, the polynomial $R_t (x)$ can be written as a \emph{sum of polynomially many rank-$1$ tensors} i.e. tensors of the form 
\[
v = v_1 \otimes \ldots \otimes v_t \; , \; \mbox{where } v_i \in \R^d \; \mbox{for all} \; i = 1, \ldots, t \; .
\]
\end{enumerate}
Note that the first property implies that 
\begin{equation}
\label{eq:r-expectation}
\E_{R_t, X \sim  N (\mu, I)} [R_t(X)] = \mu^{\otimes t} \; , \; \mbox{and } \; \E_{R_t, X \sim \mcl{M}} [R_t(X)] = \sum_{i = 1}^k w_i \mu_i^{\otimes t} \; .
\end{equation}
\noindent
The second property is the main motivation behind the definition of $R_t$, as it implies that we can have efficient, but restricted access to it.
The key point is that if our algorithm can be implemented with techniques that only require accesses to rank-$1$ tensors, we can implicitly access $R_t$ in polynomial time.

\paragraph{Implicitly finding the span of the tensorized means}
Motivated by the above discussion, our goal will be to find a projection matrix $\Pi: \R^{d^t} \to \R^{k}$ with the following properties:
\begin{enumerate}[(i)]
\item \textbf{Efficient application} If $v \in \R^{d^t}$ is a rank-$1$ tensor, then $\Pi v$ can be evaluated in polynomial time. \label{enum:fast}
\item \textbf{Zero component is small} If $X \sim N(0, I)$, then $\norm{\Pi R_t (X)}_2 < k^{50}$ with high probability. \label{enum:small}
\item \textbf{Nonzero components are large} If $X \sim \mcl{M}$ is a sample from a nonzero component of the difference mixture, then $\norm{\Pi R_t (X)}_2 \geq k^{100}$ with high probability.  \label{enum:large}
\end{enumerate}
Given such a projection map, our clustering procedure is straightforward: given two samples $X, X'$ from the original mixture, we apply the projection map to many copies of $R_t ((X - X') / \sqrt{2})$, and cluster them in the same component if and only if their projected norm is small on average.
We show that the above properties, as well as some facts about the concentration of $R_t$, imply that this clustering algorithm succeeds with high probability, assuming we have access to $\Pi$.

It thus remains how to construct $\Pi$.
In fact, there is a natural candidate for such a map.
Let $\mu_1, \ldots, \mu_k$ denote the means of the nonzero components, and let 
\[
S_t = \mathrm{span} \left( \left\{ \mu_i^{\otimes t} \right\}_{i = 1}^k \right) \; .
\]
If we could find the projection $\Gamma_t: \R^{d^t} \to \R^k$ onto the subspace $S_t$, it can be verified that this projection map would satisfy Conditions~(\ref{enum:small}) and~(\ref{enum:large}).\footnote{Here and throughout the introduction, we will assume for simplicity of exposition that the vectors $\mu_1^{\otimes s}, \ldots, \mu_k^{\otimes s}$ are linearly independent, for all $s = 1, \ldots, t$, so that $S_s$ is always a $k$-dimensional subspace, for all $s = 1, \ldots, t$. 
In general, our algorithms work even if they are not linearly independent, and will always find a subspace which contains $S_t$, which will suffice for our purposes.}

Moreover, there is a natural estimator for this subspace.  In particular \eqref{eq:r-expectation} implies that $\E[ R_{2t}(X) ]$ rearranged as a $d^t \times d^t$ matrix in a canonical way is exactly
\[
\sum_{i = 1}^k w_i \left( \mu_i^{\otimes t} \right) \left( \mu_i^{\otimes t} \right)^\top \; .
\]
Notice that this matrix is rank $k$, and moreover, the span of its nonzero eigenvectors is exactly $S_t$.
Consequently, if we could estimate this quantity, then find the projection onto the span of the top $k$ eigenvectors, we would be done.

As alluded to earlier, the difficulty is that doing this naively would not be efficient; writing down any of these objects would take quasipolynomial time.
Instead, we seek to find an implicit representation of $\Gamma_t$ with the key property that it can be applied to rank-$1$ tensors in polynomial time.

We will do so by iteratively building an approximation to this subspace.
Namely, we give a method which, given a good approximation to $\Gamma_{s - 1}: \R^{d^{s - 1}} \to \R^k$ which can be efficiently applied to flattenings of rank-$1$ tensors, constructs a good approximation to $\Gamma_s$ with the same property.
We do so by obtaining a good approximation to the $dk \times dk$ sized matrix
\begin{equation}
\label{eq:mt}
M_t = \sum_{i=1}^k w_i \left( \mu_i \otimes \Gamma_{s - 1} \mu_i^{\otimes (s - 1)} \right)\left( \mu_i \otimes \Gamma_{s - 1} \mu_i^{\otimes (s - 1)} \right)^\top \; .
\end{equation}
Notice that $M_s$ has rank $k$, and moreover, the span of its $k$ largest eigenvectors is equal to the span of $\left\{  \mu_i \otimes \Gamma_{s - 1} \mu_i^{\otimes (s - 1)} \right\}_{i = 1}^k$.
Therefore, if we let $\Pi_s: \R^{dk} \to \R^k$ denote the projection onto the span of the $k$ largest eigenvectors of $M_s$, then one can easily verify that 
\begin{equation}
\label{eq:projection-map}
\Gamma_s = \Pi_s \left(I \otimes \Gamma_{s - 1} \right) \; .
\end{equation}
Moreover, if $\Gamma_{s - 1}$ can be efficiently applied to flattenings of rank-$1$ tensors in $\R^{d^{s - 1}}$, then the form of~(\ref{eq:projection-map}) immediately implies that $\Gamma_s$ can also be applied efficiently to flattenings of rank-$1$ tensors in $\R^{d^s}$.

It remains to demonstrate how to efficiently approximate $M_s$, given $\Gamma_{s - 1}$.
There is again a fairly natural estimator for this matrix.
Namely, each component of the sum in~\eqref{eq:mt} can be formed by rearranging the length-$(dk)^2$ vector
\[
\left( I \otimes \Gamma_{s - 1} \right)^{\otimes 2} \mu_i^{\otimes 2s} = \E_{R_{2s}, X \sim N(\mu_i, I)} \left[ \left( I \otimes \Gamma_{s - 1} \right)^{\otimes 2} R_{2s} (X)  \right] \; .
\]
into a $dk \times dk$ sized matrix in the canonical way.
In particular, this implies that the overall matrix is the rearrangement of the length-$(dk)^2$ vector 
\begin{equation}
\label{eq:flattened-mt}
\sum_{i = 1}^k w_i \left( I \otimes \Gamma_{s - 1} \right)^{\otimes 2} \mu_i^{\otimes 2s} = \E_{R_{2s}, X \sim \mcl{M}} \left[ \left( I \otimes \Gamma_{s - 1} \right)^{\otimes 2} R_{2s} (X)  \right]
\end{equation}
into a $dk \times dk$ sized matrix in the canonical way.
Since $R_{2s}$ is a sum of polynomially many rank-$1$ tensors, and by our inductive hypothesis, $\Gamma_{s - 1}$ can be efficiently applied to rank-$1$ tensors, we can efficiently estimate the right hand side of~\eqref{eq:flattened-mt}, given samples from $\mcl{M}$.

Putting it all together, this allows us to approximate $M_s$ efficiently given $\Gamma_{s - 1}$, which, by~\eqref{eq:projection-map}, gives us the desired expression for $\Gamma_s$.
Iterating this procedure gives us a way to estimate $\Gamma_t$ as a sequence of nested projection maps, i.e.
\[
\Gamma_t \approx \Pi_t \left( I \otimes \Pi_{t - 1} \left( I \otimes \ldots \right) \right) \; ,
\]
where we can compute $\Pi_1, \ldots, \Pi_t$ efficiently, given samples from $\mcl{M}$.
This form allows us to evaluate $\Gamma_t$ efficiently on any flattening of a rank-$1$ tensor, thus satisfying Condition~(\ref{enum:fast}), and we previously argued that $\Gamma_t$ satisfies Conditions~(\ref{enum:small}) and~(\ref{enum:large}).
Combining all of these ingredients gives us our clustering algorithm, when the minimum and maximum separations are at most polynomially separated.

\paragraph{Implicit moment tensors for Poincar\'{e} distributions} So far, we have only discussed how to do this implicit moment estimation for isotropic Gaussians.
It turns out that all of the quantities discussed above have very natural analogues for \emph{any} Poincar\'{e} distribution.
For instance, given any Poincar\'{e} distribution $\mcl{D}$ with zero mean, there is an explicit polynomial tensor we call the \emph{$\mcl{D}$-adjusted polynomial} $P_{t, \mcl{D}}$ (see Section \ref{sec:moments1}) that essentially satisfies all the same properties that we needed above.
If we use these polynomials instead of the Hermite polynomial tensor, it turns out that all of these proofs directly lift to any Poincar\'{e} distribution.
In fact, in the actual technical sections, we directly work with arbitrary Poincar\'{e} distributions, as everything is stated very naturally there.
The resulting clustering algorithm immediately gives us Theorem~\ref{thm:main-Poincare}.

\paragraph{Sample complexity} Previous approaches always needed to estimate high degree moment tensors, and as a result, their sample complexity was quasipolynomial.
While we may also need to estimate fairly high degree polynomials, notice that all quantities that we will deal with will be polynomially bounded.
This is because we can terminate our procedure at any $t$ which satisfies Conditions~(\ref{enum:small}) and~(\ref{enum:large}).
Therefore, all the quantities that we need are polynomially large. 
As a result, one can verify that the polynomials we construct will also only ever have polynomially large range, with high probability.
Therefore, all quantities we need to estimate can be estimated using polynomially many samples.
We defer the detailed proofs outlined in this discussion to Sections \ref{sec:moments1}, \ref{sec:projection}, \ref{sec:implicit-est} and \ref{sec:testing1}.  Note that in those sections, we work with a general Poincar\'{e} distribution but the outline follows the description here.

\subsubsection{Fine-grained clustering for Gaussians}
We now discuss how to handle general mixtures of isotropic Gaussians, without any assumption on the maximum separation.
The problem with applying the implicit moment estimation method outlined above to this general setting is that the signal from the components in the difference mixture with relatively small mean will be drowned out by the signal from the components with much larger norm.
Consequently, we can reliably cluster points from the components with large mean, but we could obtain an imperfect clustering for some components with somewhat smaller mean, and we will be unable to detect components with very small mean.

To overcome this, we devise a recursive clustering strategy.
One somewhat simple approach is as follows.
We first use our rough clustering algorithm described above to find a ``signal direction'' $v \in \R^d$.
This direction will have the property that there is a pair of well-separated means along this direction.
Thus, if we project the data points on this direction, and take only points which lie within a randomly chosen small interval on this interval, we can guarantee that with reasonable probability, we only accept points from at most half of the components of the mixture.  Of course, after restricting to this interval, the resulting distribution is no longer a mixture of Gaussians.  However, if we consider the projection of these accepted points to the subspace orthogonal to $v$, the resulting distribution will again be a mixture of fewer isotropic Gaussians.
We can then recurse on this mixture with fewer components.
Here, we are crucially using the fact that isotropic Gaussians remain isotropic Gaussians after slicing and projecting orthogonally.

While this strategy described above, when implemented carefully, would work down to $\Delta = \poly (\log k)$, it would not be able to achieve the nearly optimal separation in Theorem \ref{thm:main-GMM}.
To achieve the nearly optimal separation, there are several more technical details that need to be dealt with and thus there will be several additional steps in the algorithm.  We defer the details of this to Sections \ref{sec:gaussian-moments}, \ref{sec:clustering1} and \ref{sec:clustering2}.

\subsection{Related work}
The literature on mixture models---and Gaussian mixture models in particular---dates back to seminal work of Pearson~\cite{pearson1894contributions} and is incredibly vast. 
For conciseness, we will focus only on the most related papers here.
Our results are most closely related to the aforementioned line of work on studying efficient algorithms for clustering and parameter estimation under mean-separation conditions~\cite{dasgupta1999learning,dasgupta2007probabilistic,arora2005learning,vempala2004spectral,diakonikolas2018list,hopkins2018mixture,regev2017learning,kothari2018robust,diakonikolas2020small}.
However, none of these algorithms can handle polylogarithmic separation in polynomial time.

We also note that a number of papers also generalize from mixtures of Gaussians to mixtures of more general classes of distributions~\cite{achlioptas2005spectral,kumar2010clustering,mixon2017clustering,hopkins2018mixture,kothari2018robust}.
These algorithms fall into two classes: either they require separation which is at least $\Omega (k^{1/2})$ or even larger, but they can handle general subgaussian distributions, or they require more structure on the higher moments of the distribution, but they can tolerate much less separation. 
The most general condition under which the latter is known to work is the condition commonly referred to as \emph{certifiable hypercontractivity}, which roughly states that the Sum-of-Squares hierarchy can certify that the distribution has bounded tails.
While there is no complete characterization of what distributions satisfy this condition, the most general class of distributions for which it is known to hold is the class of Poincar\'{e} distributions~\cite{kothari2018robust}, which is also the class of distributions we consider here.

Another line of work focuses on parameter estimation for mixtures models without separation conditions~\cite{kalai2010efficiently,belkin2015polynomial,moitra2010settling,hardt2015tight,diakonikolas2020small}. 
These papers typically make much weaker assumptions on the components, namely, that they are statistically distinguishable, and the goal is to recover the parameters of the mixture, even in settings where clustering is impossible.
However, typically, these algorithms incur a sample complexity and runtime which is exponential in the number of components; indeed,~\cite{hardt2015tight} demonstrate that this is tight, even in one dimension, when the Gaussians can have different variances.

To circumvent this, researchers have also considered the easier notion of \emph{proper} or \emph{semi-proper} learning, where the goal is to output a mixture of Gaussians which is close to the unknown mixture in statistical distance.
A learning algorithm is said to be proper if its output is a mixture of $k$ Gaussians, where $k$ is the number of components in the true mixture, and semi-proper if it outputs a mixture of $k' \geq k$ Gaussians.
While the sample complexity of proper learning is polynomial in all parameters, all known algorithms still incur a runtime which is exponential in $k$, even in the univariate setting~\cite{feldman2008learning,daskalakis2014faster,acharya2014near,li2017robust,ashtiani2018nearly}.
When the hypothesis is only constrained to be semi-proper, polynomial time algorithms are known in the univariate setting~\cite{devroye2001combinatorial,bhaskara2015sparse,li2021sparsification}, but these do not extend to high dimensional settings.
In the more challenging high dimensional regime, a remarkable recent result of~\cite{diakonikolas2020small} demonstrate that for a mixture of isotropic Gaussians, one can achieve semi-proper learning with quasipolynomial sample complexity and runtime.
They do this by explicitly constructing a small cover for the candidate means, by techniques inspired by algebraic geometry.

An even weaker goal that has been commonly considered is that of \emph{density estimation}, where the objective is to output any hypothesis which is close to the unknown mixture in statistical distance.
Efficient (in fact, nearly optimal) algorithms are again known for this problem in low dimensions, see e.g.~\cite{devroye2001combinatorial,chan2014near,chan2014efficient,acharya2017sample}, but as was the case with proper learning, these techniques do not extend nicely to high dimensional settings.

An alternate assumption that has been considered in the literature is that the means satisfy some algebraic non-degeneracy assumptions.
For instance, such assumptions are satisfied in smoothed analysis settings~\cite{hsu2013learning,anderson2014more,bhaskara2014smoothed,ge2015learning}.
Often in these settings, access to constant order moments suffice (e.g. $3$rd or $4$th moments), although we note that~\cite{bhaskara2014smoothed} is a notable exception.
In contrast, we do not make any non-degeneracy assumptions, and our methods need to access much higher moments.

We also mention a line of work studying the theoretical behavior of the popular \emph{expectation-maximization} (EM) algorithm for learning mixtures of Gaussians~\cite{dasgupta2007probabilistic,daskalakis2017ten,xu2016global}.
However, while the above works can prove that the dynamics of EM converge in limited settings, it is known that EM fails to converge in general, even for mixtures of 3 Gaussians~\cite{wu1983convergence,daskalakis2017ten}.

Finally, we note that our work bears some vague resemblance to the general line of work that uses spectral-based methods to speed up Sum-of-Squares (SoS) algorithms.
Spectral techniques have been used to demonstrate to speed up SoS-based algorithms in various settings such as tensor decomposition~\cite{hopkins2016fast,ma2016polynomial,schramm2017fast,hopkins2019robust} and refuting random CSPs~\cite{raghavendra2017strongly}.
Similarly, it has been observed that in some settings, SoS-based algorithms can be sped up, when the SoS proofs are much smaller than the overall size of the program~\cite{guruswami2012faster,steurer2021sos}.
Our algorithm shares some qualitative similarities with some of these approaches---for instance, it is based on a (fairly complicated) spectral algorithm.
However, we do not know of a concrete connection between our algorithm and this line of work.
It is possible, for instance, that our algorithm could be interpreted as extracting a specific randomized polynomially-sized SoS proof of identifiability, but we leave further investigations of this to future work.

\section{Formal Problem Setup and Our Results}

In this section, we formally define the problems we consider, and state our formal results.
For the remainder of this paper, we will always let $\norm{\cdot}$ denote the $\ell_2$ norm.

\subsection{Clustering Mixtures of Poincare Distributions}\label{sec:setup-poincare}

The general problem that we study involves clustering mixtures of Poincare distributions.  We begin with a few definitions.

\begin{definition}[Poincare Distribution]\label{def:poincare}
For a parameter $\sigma$, we say a distribution $\mcl{D}$ on $\R^d$ is $\sigma$-Poincare if for all differentiable functions $f: \R^d \rightarrow \R$, 
\[
\Var_{z \sim \mcl{D}}[ f(z) ] \leq \sigma^2 \E_{z \sim \mcl{D}}[ \norm{\nabla f(z)}^2]  \,.
\]
\end{definition}

\begin{definition}
Let $\mcl{D}$ be a distribution on $\R^d$.  We use $\mcl{D}(\mu_1)$ for $\mu_1 \in \R^d$ to denote the distribution obtained by shifting $\mcl{D}$ by the vector $\mu_1$.
\end{definition}

We assume that there is some $\sigma$-Poincare distribution $\mcl{D}$ on $\R^d$ that we have sample access to.  Since everything will be scale invariant, it will suffice to focus on the case $\sigma = 1$.  For simplicity we assume that $\mcl{D}$ has mean $0$ (it is easy to reduce to this case since we can simply estimate the mean of $\mcl{D}$ and subtract it out).    We also assume that we have sample access to a mixture 
\[
\mcl{M} = w_1 \mcl{D}(\mu_1) + \dots + w_k \mcl{D}(\mu_k)
\]
where the mixing weights $w_1, \dots , w_n$ and means $\mu_1, \dots , \mu_k$ are unknown.  We will assume that we are given a lower bound on the mixing weights $w_{\min}$.  We consider the setting where there is some minimum separation between all pairs of means $\mu_i , \mu_j$ so that the mixture is clusterable.  In the proceeding sections, when we say an event happens with high probability, we mean that the failure probability is smaller than any inverse polynomial in $k, 1/w_{\min}$.  Our main theorem is stated below.

\begin{theorem}\label{thm:main-Poincare}
Let $\mcl{D}$ be a $1$-Poincare distribution on $\R^d$.  Let 
\[
\mcl{M} = w_1 \mcl{D}(\mu_1) + \dots + w_k \mcl{D}(\mu_k)
\]
be a mixture of translated copies of $\mcl{D}$.  Let $w_{\min}, s$ be parameters such that $w_i \geq w_{\min}$ for all $i$ and $\norm{\mu_i - \mu_j} \geq s$ for all $i \neq j$.  Let $\alpha = (w_{\min}/k)^{O(1)}$ be some desired accuracy ( that is inverse polynomial) \footnote{If $d$ is much larger than $k$ and we wanted inverse polynomial accuracy like $1/d$ then we can simply decrease the parameter $w_{\min}$ (and then we would need separation $\log (dk)$ instead of $\log k$)}. Assume that 
\[
s \geq (\log (k/w_{\min} ))^{1 + c}
\]
for some $0 < c < 1$.  Also assume that 
\[
\max \norm{\mu_i - \mu_j} \leq s^{C}
\]
for some $C$.  There is an algorithm that takes $n = \poly((kd/(w_{\min}\alpha) )^{C/c} )$ samples from $\mcl{M}$ and $\mcl{D}$ and runs in $\poly(n)$ time and with high probability, outputs estimates 
\[
\wt{w_1}, \dots , \wt{w_k}, \wt{\mu_1}, \dots , \wt{\mu_k}
\]
such that for some permutation $\pi$ on $[k]$, 
\[
|w_i - \wt{w_{\pi(i)}}| \leq \alpha , \norm{\mu_i - \wt{\mu_{\pi(i)}}} \leq \alpha
\]
for all $i$.
\end{theorem}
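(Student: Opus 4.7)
My plan is to follow the roadmap laid out in the technical overview, using the bounded ratio between the minimum and maximum separation (the hypothesis $\max\norm{\mu_i-\mu_j}\le s^C$) to reduce the whole theorem to the ``rough clustering'' regime, without needing any recursion on subspaces. First I would reduce parameter recovery to a same-cluster/different-cluster test. Given two independent samples $X,X'\sim\mcl{M}$, the vector $(X-X')/\sqrt{2}$ is distributed as a sample from the difference mixture $\mcl{M}'$ whose components are translates of a related Poincar\'e distribution (since $\mcl{D}$ is $1$-Poincar\'e, $(\mcl{D}-\mcl{D})/\sqrt{2}$ is Poincar\'e up to constants), and whose means are either $0$ or have norm in $[s/\sqrt{2},s^C/\sqrt{2}]$. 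A reliable tester for ``drawn from the zero-mean component'' thus lets us cluster $X_1,\dots,X_n$ by taking connected components of the ``same cluster'' graph, after which each $\mu_i$ is estimated by the empirical mean of its cluster and each $w_i$ by the cluster's relative size; standard Poincar\'e concentration guarantees that $\poly(kd/(w_{\min}\alpha))$ samples per cluster suffice for $\alpha$-accuracy on both.

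The technical heart is building the tester. Using the $\mcl{D}$-adjusted polynomial tensors $P_{t,\mcl{D}}\colon\R^d\to\R^{d^t}$ (which satisfy $\E_{X\sim\mcl{D}(\mu)}[P_{t,\mcl{D}}(X)]=\mu^{\otimes t}$), I would invoke the random low-rank estimator $R_t$ from Corollary~\ref{coro:rank1-hermite-identity-p2}, which is unbiased for $P_{t,\mcl{D}}$ and is, for every realization of its randomness, a sum of polynomially many rank-$1$ tensors. Set $t=\Theta(\log(k/w_{\min})/\log s)=O(1/c)$ so that for any nonzero difference-mixture mean $\mu$ one has $\norm{\mu^{\otimes t}}=\norm{\mu}^t\ge k^{100}/w_{\min}^{O(1)}$, while the projection of $R_t(X)$ into any $k$-dimensional subspace has $\ell_2$ norm bounded by a fixed polynomial in $k$ when $X$ comes from the mean-zero component (by Poincar\'e concentration of low-degree polynomials and the fact that we project to only $k$ directions). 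The boundedness assumption $\max\norm{\mu_i-\mu_j}\le s^C$ is what lets us keep $t=O(C/c)$ uniformly, so every quantity in sight stays polynomial.

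I would then construct the projection $\Gamma_t$ onto an approximation of $S_t=\mathrm{span}\{\mu_i^{\otimes t}\}$ by the recursive scheme of Section~\ref{sec:projection}. Inductively, assuming we have an implicit representation of $\Gamma_{s-1}\colon\R^{d^{s-1}}\to\R^k$ that can be applied to rank-$1$ tensors in $\poly(kd)$ time, I would estimate the $dk\times dk$ matrix
\[
M_s \;=\; \sum_{i=1}^k w_i\bigl(\mu_i\otimes\Gamma_{s-1}\mu_i^{\otimes (s-1)}\bigr)\bigl(\mu_i\otimes\Gamma_{s-1}\mu_i^{\otimes (s-1)}\bigr)^{\!\top}
\]
by forming the empirical average of $(I\otimes\Gamma_{s-1})^{\otimes 2}R_{2s}(X)$ over samples $X\sim\mcl{M}$, which is computable because $R_{2s}$ decomposes into rank-$1$ pieces on which $I\otimes\Gamma_{s-1}$ acts efficiently. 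Taking $\Pi_s$ to be the projection onto the top $k$ eigenvectors of this estimate and setting $\Gamma_s=\Pi_s(I\otimes\Gamma_{s-1})$ preserves efficient evaluation on rank-$1$ tensors and, by a perturbation/Davis--Kahan argument, keeps the subspace close to $S_s$. Iterating $t$ times produces $\Gamma_t$. The tester for a candidate vector $Y=(X-X')/\sqrt{2}$ averages $\norm{\Gamma_t R_t(Y)}^2$ over many fresh draws of $R_t$'s randomness and thresholds between the $k^{50}$ scale of the zero component and the $k^{100}$ scale of any nonzero component.

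The main obstacle I expect is quantitative: simultaneously (i) controlling the variance of $R_t$ and the spectral error in $M_s$ well enough that the Davis--Kahan-type bounds compose through $t=O(C/c)$ rounds without blowing up, and (ii) keeping the sample complexity polynomial. This is where the assumption $s\ge(\log(k/w_{\min}))^{1+c}$ is used delicately: it makes $t$ a constant depending only on $C/c$, so the product of errors across the $t$ levels of recursion stays $\poly(k/w_{\min})$, and the $s^t$ signal dominates the $\poly(k/w_{\min})$ noise. Once the tester is shown to succeed with probability $1-1/\poly(k/w_{\min})$ on a single pair, a union bound over $O(n^2)$ pairs, followed by empirical-mean and cluster-size estimation on the resulting partition, yields the stated $\alpha$-accurate recovery of the $w_i$ and $\mu_i$, completing the proof.
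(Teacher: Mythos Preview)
Your overall architecture matches the paper's almost exactly: reduce to a same-component test on the difference mixture, build the test by iteratively constructing projections $\Gamma_s$ via the top-$k$ eigenspace of an empirically estimated $M_s$, and threshold $\norm{\Gamma_t R_t(\cdot)}$. The post-processing (cluster, average, count) is also what the paper does, modulo cosmetic differences (the paper tests one anchor sample against many fresh samples and does a majority vote rather than forming a full pairwise graph).

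There is, however, a real quantitative gap. You write $t=\Theta(\log(k/w_{\min})/\log s)=O(1/c)$ and later say the separation hypothesis ``makes $t$ a constant depending only on $C/c$.'' This is false: with $s=(\log(k/w_{\min}))^{1+c}$ one has $\log s=(1+c)\log\log(k/w_{\min})$, so
\[
t\;\asymp\;\frac{\log(k/w_{\min})}{c\,\log\log(k/w_{\min})},
\]
which grows with $k$. If $t$ were really $O(1)$ the entire implicit-moment machinery would be unnecessary---you could write down the $d^{t}$-sized tensor explicitly in polynomial time; the whole point of the method is to cope with a $t$ that tends to infinity. What actually keeps everything polynomial is the more delicate observation that for this $t$ one has $t^{t}=\poly(k/w_{\min})$, so the $t^{O(t)}$ rank-$1$ summands in $R_t$ are polynomially many, and $\mu_{\max}^{\,t}\le s^{Ct}\le (k/w_{\min})^{O(C/c)}$ stays polynomial as well. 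Your remarks about ``errors composing over $t$ rounds'' and ``Poincar\'e concentration of low-degree polynomials'' therefore need to be redone with a growing $t$; the paper does this via an explicit inductive variance bound $\E[(v\cdot P_{t}(z))^{2}]\le(\norm{\mu}^{2}+t^{2})^{t}\norm{v}^{2}$, and controls the subspace error at each level by a direct argument (the component of $\mu_i^{\otimes s}$ orthogonal to the row span of $\Gamma_s$ is at most $\eps$) rather than Davis--Kahan, since $A_{2s}$ need not have any spectral gap above its $(k+1)$st eigenvalue.
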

\begin{remark}
If we could remove the assumption $\norm{\mu_i - \mu_j} \leq s^{C}$, then we would get a complete polynomial time learning result.  Still, our learning algorithm works in polynomial time for mixtures where the maximum separation is polynomially bounded in terms of the minimum separation.  
\end{remark}

An immediate consequence of Theorem \ref{thm:main-Poincare} is that we can cluster samples from the mixture with accuracy better than any inverse polynomial.  
\begin{corollary}\label{coro:cluster-Poincare}
Under the same assumptions as Theorem \ref{thm:main-Poincare}, we can recover the ground-truth clustering of the samples with high probability i.e. we output $k$ clusters $S_1, \dots S_k$ such that for some permutation $\pi$ on $[k]$, the set $S_{\pi(i)}$ consists precisely of the samples from the component $\mcl{D}(\mu_i)$ for all $i$. 
\end{corollary}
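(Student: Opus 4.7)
The plan is to first invoke Theorem \ref{thm:main-Poincare} with a sufficiently small inverse polynomial accuracy $\alpha$ (say $\alpha = (w_{\min}/(kd))^{10}$), yielding estimates $\wt{\mu_1},\ldots,\wt{\mu_k}$ with $\norm{\mu_i - \wt{\mu_{\pi(i)}}} \leq \alpha$ for some permutation $\pi$, and then classify each sample by a simple nearest-prototype rule. Concretely, for each sample $X_m$ and each ordered pair $i \neq j$, I would record a vote for $i$ over $j$ whenever $(\wt{\mu_i}-\wt{\mu_j}) \cdot (X_m - (\wt{\mu_i}+\wt{\mu_j})/2) > 0$, and output $S_j$ as the set of samples whose index $j$ wins every one of its $k-1$ pairings.

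The key claim to verify is that for a sample $X$ drawn from true component $\ell$ and any $j \neq \pi(\ell)$, the pairwise vote goes in favor of $\pi(\ell)$. Setting $v := \wt{\mu_{\pi(\ell)}} - \wt{\mu_j}$, the relevant dot product expands as
\[
v \cdot \bigl(X - (\wt{\mu_{\pi(\ell)}} + \wt{\mu_j})/2\bigr) \;=\; \tfrac{1}{2}\norm{v}^2 \;+\; v \cdot (X - \mu_\ell) \;+\; v \cdot (\mu_\ell - \wt{\mu_{\pi(\ell)}})\,,
\]
and the triangle inequality gives $\norm{v} \geq \norm{\mu_\ell - \mu_{\pi^{-1}(j)}} - 2\alpha \geq s/2$. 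The third (perturbation) term is at most $\norm{v}\alpha$ in absolute value, and the middle (concentration) term is where the $1$-Poincar\'{e} hypothesis enters: by the standard Gromov--Milman / Bobkov--Ledoux subexponential concentration for Lipschitz functions of a Poincar\'{e} measure, $|v \cdot (X - \mu_\ell)| \leq \norm{v} \cdot C \log(1/\delta)$ with probability at least $1 - \delta$, for a universal constant $C$.

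Union-bounding this tail over all $n = \poly(k,d)$ samples and all $\binom{k}{2}$ pairs makes every middle term simultaneously at most $\norm{v} \cdot O(\log(nk/w_{\min}))$ with the required high probability (using the footnote in the theorem statement to absorb powers of $d$ into the $w_{\min}$ parameter if $d$ is very large). Since $s \geq (\log(k/w_{\min}))^{1+c}$ with $c > 0$, the signal $\tfrac12 \norm{v}^2 \geq s\norm{v}/4$ strictly dominates the combined error $\norm{v}(\alpha + O(\log(nk/w_{\min})))$, so every pairwise vote is cast correctly and $S_{\pi(\ell)}$ recovers exactly the samples from component $\ell$. The only input beyond Theorem \ref{thm:main-Poincare} is the Poincar\'{e}-to-subexponential tail bound along arbitrary directions, which is standard, so I expect no substantive technical obstacle.
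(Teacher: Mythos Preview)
Your proposal is correct and takes essentially the same approach as the paper: invoke Theorem~\ref{thm:main-Poincare} with a small inverse-polynomial $\alpha$, then assign each sample to the nearest estimated mean via pairwise linear comparisons, using the $1$-Poincar\'e subexponential tail bound (the paper's Fact~\ref{fact:basic-Poincare}) along each direction $\wt{\mu_{j_1}}-\wt{\mu_{j_2}}$ together with the separation $s \geq (\log(k/w_{\min}))^{1+c}$ to make every vote correct after a union bound. The paper's version phrases the classifier as ``find $j$ with $|v_{j_1 j_2}\cdot(\wt{\mu_j}-z)| \leq (\log(k/w_{\min}))^{1+0.5c}$ for all $j_1,j_2$'' rather than an explicit midpoint vote, but this is the same test.
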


\subsection{Clustering Mixtures of Gaussians}\label{sec:setup-Gaussians}

In the case where the distribution $\mcl{D}$ in the setup in Section \ref{sec:setup-poincare} is a Gaussian, we can obtain a stronger result that works in full generality, without any assumption about the maximum separation.  The result for Gaussians also works with a smaller separation of $(\log (k/w_{\min}))^{1/2 + c}$ which, as mentioned before, is essentially optimal for Gaussians.

\begin{theorem}\label{thm:main-GMM}
Let $\mcl{M} = w_1N(\mu_1, I) + \dots + w_k N(\mu_k, I)$ be an unknown mixture of Gaussians in $\R^d$ such that $w_i \geq w_{\min}$ for all $i$ and $\norm{\mu_i - \mu_j} \geq (\log (k/w_{\min}))^{1/2 + c}$ for some constant $c > 0$.  Then for any desired (inverse polynomial) accuracy $\alpha \geq (w_{\min}/k)^{O(1)}$, given $n = \poly(( dk/(w_{\min} \alpha ))^{1/c} )$ samples and $\poly(n)$ runtime, there is an algorithm that with high probability outputs estimates $\{ \wt{\mu_1}, \dots , \wt{\mu_k} \}$ and $\{ \wt{w_1}, \dots , \wt{w_k} \}$ such that for some permutation $\pi$ on $[k]$, we have
\[
|w_i - \wt{w_{\pi(i)}}|, \norm{\mu_i - \wt{\mu_{\pi(i)}}} \leq \alpha
\]
for all $i \in [k]$.
\end{theorem}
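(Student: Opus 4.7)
The plan is to reduce parameter estimation to accurate clustering and then to reduce clustering to a binary same-component test on the \emph{difference mixture}. Specifically, if $X, X' \sim \mcl{M}$ then $(X-X')/\sqrt{2}$ is a sample from a mixture $\mcl{M}'$ of $\binom{k}{2}+1$ isotropic Gaussians whose mean-zero component exactly captures the event ``$X, X'$ came from the same cluster'', while all other means have norm at least $\Delta := (\log(k/w_{\min}))^{1/2+c}$. The core algorithmic primitive will be an implicit representation of the degree-$t$ Hermite tensor $h_t(x) \in \R^{d^t}$ for $t = \Theta(\log k / \log \log k)$: I would construct a random polynomial $R_t$ satisfying $\E[R_t(x)] = h_t(x)$ that, for any realization of its randomness, is a sum of $\poly(k,d)$ rank-$1$ tensors. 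Combined with the identity $\E_{X \sim N(\mu,I)}[h_t(X)] = \mu^{\otimes t}$, this gives efficient implicit access to moment information at a degree that would naively require quasipolynomial time.

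Next, I would iteratively build a projection map $\Gamma_t$ onto (a superspace of) $\mathrm{span}(\{\mu_i^{\otimes t}\})$ using the recursion $\Gamma_s = \Pi_s (I \otimes \Gamma_{s-1})$, where $\Pi_s$ is obtained by taking the top-$k$ eigenspace of an empirical approximation to the $dk \times dk$ matrix $M_s$ in \eqref{eq:mt}. The key inductive invariant is that $\Gamma_{s-1}$ can be applied to any rank-$1$ tensor in polynomial time, which makes the empirical estimator $(I \otimes \Gamma_{s-1})^{\otimes 2} R_{2s}(X)$ computable, and hence $\Pi_s$, and hence $\Gamma_s$, efficient to build. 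I would then verify that $\Gamma_t$ satisfies the desired dichotomy: small norm on $R_t$ evaluated at mean-zero samples, large norm on nonzero components, yielding a same-component tester and, by clustering over all pairs, a rough clustering. This exactly reproduces the machinery already used for Theorem \ref{thm:main-Poincare} and succeeds whenever maximum and minimum separations are polynomially related.

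To get rid of the max-separation assumption, which is the hardest step and where the Gaussian-specific structure is crucial, I would use a recursive slice-and-project strategy. Run the rough clusterer; since the tester reliably fires on components with the largest norms, it extracts a direction $v \in \R^d$ along which some pair of means is well separated. Restrict to samples whose projection onto $v$ lies in a short randomly placed interval: with constant probability this interval contains fewer than $k$ of the projected means, so the restricted distribution is a mixture of strictly fewer components. Crucially, conditioning a Gaussian on a one-dimensional interval and then projecting to $v^\perp$ yields an isotropic Gaussian mixture in $d-1$ dimensions with the original $v^\perp$-components of the means (and minimum separation that is, in the worst case, only mildly degraded). Recursing on this smaller mixture with the same algorithm, and stitching together clusterings across $O(\log k)$ levels using the identity of the original samples, produces a correct clustering of all points. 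Having the clustering, the parameters $\wt{\mu_i}, \wt{w_i}$ follow from within-cluster averages and cluster sizes; if inverse-polynomial accuracy must be boosted, I would invoke the local method of \cite{regev2017learning} as a warm-started refinement.

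The main technical obstacles I expect are twofold. First, the projection-map recursion must be carried out with sample complexity $\poly(k,d)$ at $t$ as large as $\Theta(\log k / \log \log k)$; the analysis in Sections \ref{sec:moments1}--\ref{sec:implicit-est} has to argue that the random polynomial $R_t$ has polynomially bounded range along all quantities we evaluate, so that concentration holds after $\poly(k,d)$ samples rather than $d^{\Theta(t)}$ samples. Second, the recursive slicing in Sections \ref{sec:clustering1}--\ref{sec:clustering2} has to be tuned finely enough to preserve separation $(\log(k/w_{\min}))^{1/2+c}$ rather than $\poly \log k$: each slicing step can shrink the separation by only a factor that the $c$ slack in the exponent can absorb, and the choice of slicing interval must be randomized carefully so that with good probability at least one component is genuinely removed at each level of the recursion without destroying too many others. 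Once these two quantitative points are handled, the pieces assemble into Theorem \ref{thm:main-GMM} and, via cluster-average estimation, into Corollary \ref{coro:cluster-GMM}.
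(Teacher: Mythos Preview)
Your plan for the implicit moment machinery (random rank-one surrogate $R_t$ for the Hermite tensor, iterative construction of $\Gamma_t$, norm dichotomy for the same-component test) is exactly the paper's approach and is fine. The gap is in how you remove the max-separation assumption.

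What you describe---find one signal direction $v$, restrict to a short interval along $v$, project to $v^\perp$, and recurse on a mixture with strictly fewer components---is precisely the ``simple'' strategy the paper sketches in Section~1.2.2 and then explicitly says would only reach $\Delta = \poly\log k$, \emph{not} $\Delta = (\log(k/w_{\min}))^{1/2+c}$. The paper's actual recursion is structurally different. It maintains a \emph{checker} $(V,p,r)$ where $V$ is a growing subspace (not a single direction), and each refinement step (Lemma~\ref{lem:ball-recursion}) adds one dimension to $V$ while keeping the checker radius under tight control, roughly $(\log(k/w^*))^{(1+c)/2}$ per coordinate. Crucially, the halving of relevant components is only guaranteed when two means inside the current checker have separation at least $\Omega((\log(k/w^*))^4)$; the algorithm tests for this termination condition (Lemma~\ref{lem:test-max-separation}) and, once all remaining means are within $\log^4(k/w^*)$ of each other, invokes the bounded-max-separation full clusterer (Lemma~\ref{lem:find-signal2}) \emph{once} to isolate a single component, then peels it off and restarts. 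This two-scale structure---halve only when a $\log^4$ gap is present, otherwise fully cluster---is what keeps the cumulative separation loss small enough to be absorbed by the $c$ slack; your one-direction-at-a-time projection does not have this property.

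A second omission: the general Poincar\'e tester (Lemmas~\ref{lem:length-of-zero}, \ref{lem:length-of-nonzero}) only works down to separation $\log^{1+c}$. To reach $\log^{1/2+c}$ you need the Gaussian-specific strengthened bounds of Section~\ref{sec:gaussian-moments} (Lemmas~\ref{lem:length-of-zero-stronger}, \ref{lem:length-of-nonzero-stronger}), which exploit subgaussian rather than subexponential tails in the variance analysis. Your proposal uses Hermite polynomials but does not indicate where the extra square root is recovered; without it, even the bounded-max-separation clusterer would require $\log^{1+c}$ separation.
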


Again, once we have estimated the parameters of $\mcl{M}$, it is easy to cluster samples from $\mcl{M}$ into each of the components with accuracy better than any inverse polynomial.  
\begin{corollary}\label{coro:cluster-GMM}
Under the same assumptions as Theorem \ref{thm:main-GMM}, with high probability, we can recover the ground-truth clustering of the samples i.e. we output $k$ clusters $S_1, \dots S_k$ such that for some permutation $\pi$ on $[k]$, the set $S_{\pi(i)}$ consists precisely of the samples from the component $N(\mu_i, I)$ for all $i$. 
\end{corollary}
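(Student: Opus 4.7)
\textbf{Plan for Corollary \ref{coro:cluster-GMM}.} The idea is to use Theorem \ref{thm:main-GMM} as a black box to obtain very accurate estimates of the means, and then to cluster each sample by assigning it to the nearest estimated mean. Once the mean estimates are accurate to inverse polynomial error, Gaussian concentration plus the separation hypothesis will force the assignment to be correct on every sample simultaneously, with probability much larger than $1 - 1/\poly(k,d)$.

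Concretely, I would first invoke Theorem \ref{thm:main-GMM} with accuracy parameter $\alpha' = (w_{\min}/(kd))^{C_0}$ for a large constant $C_0$ to obtain estimates $\wt{\mu_1}, \ldots, \wt{\mu_k}$ and a permutation $\pi$ with $\norm{\mu_i - \wt{\mu_{\pi(i)}}} \leq \alpha'$ for all $i$. The clustering rule is then: for each sample $X_\ell$, put $X_\ell$ into the cluster indexed by $\arg\min_{j} \norm{X_\ell - \wt{\mu_j}}^2$. It then suffices to show that for every sample $X_\ell$ drawn from true component $i$, we have $\norm{X_\ell - \wt{\mu_{\pi(i)}}} < \norm{X_\ell - \wt{\mu_{\pi(j)}}}$ for all $j \neq i$, simultaneously with high probability.

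The key calculation is a standard Gaussian tail bound. If $X \sim N(\mu_i, I)$, then
\[
\norm{X - \mu_j}^2 - \norm{X - \mu_i}^2 \;=\; \norm{\mu_j - \mu_i}^2 \;-\; 2 \dot{X - \mu_i}{\mu_j - \mu_i},
\]
and the second term is a centered Gaussian with variance $4\norm{\mu_j - \mu_i}^2$. Hence this difference is at least $\tfrac{1}{2}\norm{\mu_j - \mu_i}^2$ except with probability at most $\exp(-\Omega(\norm{\mu_j - \mu_i}^2)) \leq \exp(-\Omega((\log(k/w_{\min}))^{1+2c}))$, which is much smaller than any inverse polynomial in $k, d, 1/w_{\min}$. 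Replacing $\mu_i, \mu_j$ by $\wt{\mu_{\pi(i)}}, \wt{\mu_{\pi(j)}}$ perturbs both sides by at most $O(\alpha' (\norm{\mu_j - \mu_i} + \norm{X - \mu_i}))$, and since $X - \mu_i$ has norm at most $O(\sqrt{d \log(n/\delta)})$ with very high probability and $\alpha'$ is an arbitrarily small inverse polynomial, this perturbation is negligible compared to $\tfrac{1}{2}\norm{\mu_j - \mu_i}^2 \geq \tfrac{1}{2}(\log(k/w_{\min}))^{1+2c}$.

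A union bound over the $n$ samples and over the $k-1$ choices of $j \neq i$ for each sample yields correctness of the entire clustering with high probability; the output clusters $S_j := \{\ell : \arg\min_{j'} \norm{X_\ell - \wt{\mu_{j'}}} = j\}$ then satisfy $S_{\pi(i)} = \{\ell : X_\ell \text{ was drawn from } N(\mu_i, I)\}$, which is exactly the conclusion. There is no genuine obstacle here; the only subtlety is choosing the accuracy $\alpha'$ in Theorem \ref{thm:main-GMM} small enough (but still inverse polynomial) relative to the separation and to the Gaussian fluctuations, so that the tail bound above survives the perturbation from $\mu_i$ to $\wt{\mu_{\pi(i)}}$.
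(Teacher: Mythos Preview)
Your proposal is correct and follows essentially the same approach as the paper: invoke Theorem \ref{thm:main-GMM} with a small inverse-polynomial accuracy to get mean estimates, then assign each sample via a simple distance rule and justify correctness by a one-dimensional Gaussian tail bound along each pairwise direction $\mu_i - \mu_j$, union-bounded over samples and pairs. The paper uses a projection test $|v_{j_1 j_2} \cdot (\wt{\mu_j} - z)| \leq (\log(k/w_{\min}))^{(1+c)/2}$ along the normalized directions $v_{j_1 j_2}$ rather than your nearest-mean rule, but the two tests are equivalent up to the choice of threshold and rely on the identical tail estimate, so there is no substantive difference.
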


Note that throughout this paper, we do not actually need to know the true number of components $k$.  All of the algorithms that we write will work if instead of the number of components being $k$, the number of components is upper bounded by $k$ i.e. we only need to be told an upper bound on the number of components.  In fact, we can simply use $1/w_{\min}$ as the upper bound on the number of components. 


\subsection{Organization}

In Section \ref{sec:prelims}, we introduce basic notation and prove a few basic facts that will be used later on.

\paragraph{Clustering Test:} In Sections \ref{sec:moments1} - \ref{sec:testing1}, we develop our key clustering test i.e. we show how to test if two samples are from the same component or not.  Note that throughout these sections, when we work with a mixture
\[
\mcl{M} = w_1 \mcl{D}(\mu_1) + \dots + w_k \mcl{D}(\mu_k) \,,
\]
this will correspond to the ``difference mixture" of the mixture that we are trying to learn and thus our goal will be to test whether a sample came from a component with $\mu_i = 0$ or $\mu_i$ far from $0$.

In Section \ref{sec:moments1}, we discuss how to construct estimators for the moments of a Poincare distribution that can be manipulated implicitly.  In the end, we construct a random polynomial $R_t$ such that 
\[
\E_{x \sim \mcl{D}(\mu)}[ R_t(x) ] = \mu^{\otimes t}
\]
and such that $R_t$ can be written as the sum of polynomially many rank-$1$ tensors (see Corollary \ref{coro:rank1-identity-p2}).  In Section \ref{sec:projection}, we describe our iterative projection technique and explain how it can be used to implicitly store and apply a projection map $\Gamma: \R^{d^t} \rightarrow \R^k$ in polynomial time to rank-$1$ tensors.  In Section \ref{sec:implicit-est}, we combine the techniques in Section \ref{sec:moments1} and Section \ref{sec:projection} to achieve the following:  given samples from
\[
\mcl{M} = w_1 \mcl{D}(\mu_1) + \dots + w_k \mcl{D}(\mu_k)
\]
we can find a $k \times d^t$ projection matrix $\Gamma_t$ (where $t \sim \log k/ \log \log k$) whose row span essentially contains all of $\mu_1^{\otimes t}, \dots \mu_k^{\otimes t}$ (see Lemma \ref{lem:projection-accuracy}).  Finally, in Section \ref{sec:testing1}, we use the projection map computed in the previous step and argue that if $x \sim \mcl{D}(0)$ then  $\norm{\Gamma_t R_t(x)}$ is small and if $x \sim \mcl{D}(\mu_i)$ for $\mu_i$ far from $0$, then $\norm{\Gamma_t R_t(x)}$ is large with high probability.  Thus, to test a sample $x$, it suffices to measure the length of $\norm{\Gamma_t R_t(x)}$.

\paragraph{Main Result for Mixtures of Poincare Distributions:} In Section \ref{sec:poincare-main}, we prove Theorem \ref{thm:main-Poincare}, our main result for mixtures of Poincare distributions.  It will follow fairly easily from the guarantees of the clustering test in Section \ref{sec:testing1}.

\paragraph{Main Result for Mixtures of Gaussians:}  Proving our main result for mixtures of Gaussians requires some additional work although all of the machinery from Sections \ref{sec:moments1} - \ref{sec:testing1} can still be used.  In particular, we will need a few quantitatively stronger versions of the bounds in Sections \ref{sec:moments1} - \ref{sec:testing1} that exploit special properties of Gaussians in order to get down from $\log^{1 + c} k $ to $\log^{1/2 + c} k $ separation.  We prove these stronger bounds in Section \ref{sec:gaussian-moments}.  Then in Sections \ref{sec:clustering1} and \ref{sec:clustering2}, we show how to do recursive clustering to eliminate the assumption that the maximum separation is polynomially bounded in terms of the minimum separation.  In Section \ref{sec:clustering1}, we introduce a few basic building blocks in our recursive clustering algorithm and we put them together in Section \ref{sec:clustering2}.

\section{Preliminaries}\label{sec:prelims}

We now introduce some notation that will be used throughout the paper.  We use $I_n$ to denote the $n \times n$ identity matrix.  For matrices $A,B$ we define $A \otimes_{\kr} B$ to be their Kronecker product.  This is to avoid confusion with our notation for tensor products.  For a tensor $T$, we use $\flatten(T)$ to denote the flattening of $T$ into a vector.  We assume that this is done in a canonical way throughout this paper.

\subsection{Manipulating Tensors}

We will need to do many manipulations with tensors later on so we first introduce some notation for working with tensors.

\begin{definition}[Tensor Notation]\label{def:tensor-notation}
For an order-$t$ tensor, we index its dimensions $\{1,2, \dots , t \}$.  For a partition of $[t]$ into subsets $S_1, \dots  , S_a$ and tensors $T_1, \dots , T_a$ of orders $|S_1|, \dots , |S_a|$ respectively we write
\[
T_1^{(S_1)} \otimes \dots \otimes T_a^{(S_a)}
\]
to denote the tensor obtained by taking the tensor product of $T_1$ in the dimensions indexed by $S_1$, $T_2$ in the dimensions indexed by $S_2$, and so on for $T_3, \dots , T_a$.
\end{definition}
\begin{definition}
For a vector $x$ (viewed as an order-$1$ tensor), we will use the shorthand $x^{\otimes S}$ to denote $(x^{\otimes |S|})^{(S)}$ (i.e. the product of copies $x$ in dimensions indexed by elements of $S$).  For example,
\[
x^{\otimes \{1,3 \} } \otimes y^{\otimes \{2,4 \} }  = x \otimes y \otimes x \otimes y \,.
\]
\end{definition}

\begin{definition}[Tensor Slicing]\label{def:tensor-slicing}
For an order-$t$ tensor $T$, we can imagine indexing its entries with indices $(\eta_1, \dots , \eta_t ) \in [d_1] \times \dots \times [d_t]$ where $d_1, \dots , d_t$ are the dimensions of $T$.  We use the notation 
\[
T_{\eta_{a_1} =  b_1, \dots , \eta_{a_j} = b_j}
\]
to denote the slice of $T$ of entries whose indices satisfy the constraints $\eta_{a_1} =  b_1, \dots , \eta_{a_j} = b_j$.
\end{definition}

\begin{definition}[Unordered Partitions]
We use $Z_t(S)$ to denote all partitions of $S$ into $t$ unordered, possibly empty, parts.  
\end{definition}
\begin{remark}
Note the partitions are not ordered so $\{ \{1 \}, \{ 2 \} \}$ is the same as $\{ \{2 \} , \{ 1 \} \}$. 
\end{remark}

\begin{definition}
 For a collection of sets $S_1, \dots , S_t$, we define $\mcl{C}(\{ S_1, \dots , S_t\})$ to be the number of sets among $S_1, \dots , S_t$ that are nonempty. 
\end{definition}

\begin{definition}[Symmetrization]
Let $A_1, \dots , A_n$ be tensors such that $A_i$ is an order $a_i$ tensor having dimensions $\underbrace{d \times \dots \times d}_{a_i}$ for some integers $a_1, \dots , a_n$.  We define 
\[
\sum_{sym} (A_1 \otimes \dots \otimes A_n) = \sum_{ \substack{S_1 \cup \dots \cup S_n = [a_1 + \dots + a_n] \\ S_i \cap S_j = \emptyset \\ |S_i| = a_i} } A_1^{(S_1)} \otimes \dots \otimes A_n^{(S_n)} \,.
\]
In other words, we sum over all ways to tensor $A_1, \dots , A_n$  together to form a tensor of order $a_1 + \dots + a_n$.
\end{definition}

\subsection{Properties of Poincare Distributions}

Here we state a few standard facts about Poincare distributions that will be used later on.

\begin{fact}\label{fact:basic-Poincare}
Poincare distributions satisfy the following properties:
\begin{itemize}
    \item \textbf{Direct Products:} If $\mcl{D}_1$ and $\mcl{D}_2$ are $\sigma$-Poincare distributions then their product $\mcl{D}_1 \times \mcl{D}_2$ is $\sigma$-Poincare 
    \item \textbf{Linear Mappings:} If $\mcl{D}$ is $\sigma$-Poincare and $A$ is a linear mapping then the distribution $Ax$ for $x \sim \mcl{D}$ is $\sigma \norm{A}_{\textsf{op}}$-Poincare
    \item \textbf{Concentration:} If $\mcl{D}$ is $\sigma$-Poincare then for any $L$-Lipchitz function $f$ and any parameter $t \geq 0$, we have
    \[
    \Pr_{z \sim \mcl{D}}[ |f(z) - \E[f(z)]| \geq t ] \leq 6e^{-t/(\sigma L)} \,.
    \]
\end{itemize}
\end{fact}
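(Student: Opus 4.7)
The plan is to prove the three properties independently, each by a standard argument from functional analysis.

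For \textbf{direct products}, I would condition and apply the law of total variance. For $f \colon \R^{d_1+d_2} \to \R$ and $(X,Y) \sim \mcl{D}_1 \times \mcl{D}_2$, decompose
\[
\Var[f(X,Y)] = \E_X\!\left[\Var_Y[f(X,Y)\mid X]\right] + \Var_X\!\left[\E_Y[f(X,Y)\mid X]\right].
\]
Apply the Poincar\'e inequality for $\mcl{D}_2$ to the inner variance (treating $X$ as a parameter) and the Poincar\'e inequality for $\mcl{D}_1$ to the outer variance, in the latter case using Jensen's inequality to push the gradient inside the conditional expectation, $\|\nabla_X \E_Y[f]\|^2 \leq \E_Y[\|\nabla_X f\|^2]$. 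Summing the two resulting terms yields $\sigma^2 \E[\|\nabla_X f\|^2 + \|\nabla_Y f\|^2] = \sigma^2 \E[\|\nabla f\|^2]$, which is the desired inequality.

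For the \textbf{linear mappings} claim, I would use the chain rule. Given a test function $g \colon \R^{d'} \to \R$ for the pushforward distribution, define $f(x) = g(Ax)$. The chain rule gives $\nabla f(x) = A^\top \nabla g(Ax)$, so $\|\nabla f(x)\|^2 \leq \|A\|_{\op}^2 \|\nabla g(Ax)\|^2$. Applying the Poincar\'e inequality for $\mcl{D}$ to $f$ yields $\Var_{Y=AX}[g(Y)] = \Var_X[f(X)] \leq \sigma^2 \|A\|_{\op}^2 \E[\|\nabla g(Y)\|^2]$, which is exactly the Poincar\'e inequality for the pushforward with constant $\sigma \|A\|_{\op}$.

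The \textbf{concentration} statement is the most substantial step; this is the classical Gromov--Milman theorem. I would proceed by controlling the moment generating function. Rescaling, one may assume $\E[f]=0$, $\sigma = 1$, and $L=1$. Set $Z(\lambda) = \E[e^{\lambda f(z)}]$, and apply the Poincar\'e inequality to the function $e^{\lambda f/2}$ to obtain
\[
Z(\lambda) - Z(\lambda/2)^2 = \Var[e^{\lambda f/2}] \leq \E\!\left[\|\nabla e^{\lambda f/2}\|^2\right] = \frac{\lambda^2}{4}\, \E[\|\nabla f\|^2 e^{\lambda f}] \leq \frac{\lambda^2}{4} Z(\lambda),
\]
where the final inequality uses $\|\nabla f\| \leq 1$. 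Rearranging gives the functional recursion $Z(\lambda) \leq Z(\lambda/2)^2 / (1 - \lambda^2/4)$ valid for $|\lambda| < 2$. Iterating this bound and using $\log Z(\lambda)/\lambda \to \E[f] = 0$ as $\lambda \to 0$ yields $\log Z(\lambda) \leq C\lambda^2$ for $|\lambda|$ bounded away from $2$. A standard Chernoff step $\Pr[f \geq t] \leq Z(\lambda) e^{-\lambda t}$, optimized over $\lambda$, then gives subexponential concentration of the form $C_1 e^{-t/C_2}$; applying this to both $f$ and $-f$ produces the two-sided bound. The only delicate part is tracking the constants through the iteration to justify the specific constants $6$ and $1/(\sigma L)$ in the stated form, but this is a textbook computation and I would appeal to the classical analysis (e.g.\ Bobkov--Ledoux) for the exact constants rather than re-derive them.
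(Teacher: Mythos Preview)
Your proposal is correct and follows the standard textbook arguments for each of the three items. The paper itself does not give a proof of this statement: it is presented as a ``Fact'' with the remark that these are ``standard facts about Poincare distributions that will be used later on,'' and no argument is supplied. So there is nothing to compare against --- your sketch simply fills in what the paper omits, and the tensorization-by-conditioning, chain-rule, and Gromov--Milman MGF recursion arguments you outline are exactly the classical proofs one would cite.
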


The following concentration inequality for samples from a Poincare distribution is also standard.

\begin{claim}\label{claim:Poincare-mean}
Let $\mcl{D}$ be a distribution in $\R^d$ that is $1$-Poincare.  Let $0 < \eps < 0.1$ be some parameter.  Given $n \geq (d/\eps)^8$ independent samples $z_1, \dots , z_n \sim \mcl{D}$, with probability at least $1 - 2^{-d/\eps}$, we have
\[
\norm{\frac{z_1 + \dots + z_n}{n} - \E_{z \sim \mcl{D}}[z] } \leq \eps \,.
\]
\end{claim}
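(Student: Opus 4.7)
The plan is to apply the concentration inequality for Poincare distributions (the third part of Fact \ref{fact:basic-Poincare}) to the function $f(z_1, \dots, z_n) = \norm{\bar{z} - \mu}$, where $\bar{z} = \frac{1}{n}\sum_{i=1}^n z_i$ is the empirical mean and $\mu = \E_{z \sim \mcl{D}}[z]$ is the true mean. By the direct products property, the joint distribution of $(z_1, \dots, z_n)$ on $\R^{nd}$ is $1$-Poincare, so we can invoke the concentration inequality in $\R^{nd}$. The argument then has two ingredients: controlling the Lipschitz constant of $f$, and controlling $\E[f]$.

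For the Lipschitz constant, if we perturb $(z_1, \dots, z_n)$ by $(\Delta_1, \dots, \Delta_n)$, then $\bar{z}$ changes by $\frac{1}{n}\sum_i \Delta_i$, and by the triangle inequality together with Cauchy--Schwarz,
\[
\norm{\frac{1}{n}\sum_i \Delta_i} \leq \frac{1}{n}\sum_i \norm{\Delta_i} \leq \frac{1}{\sqrt{n}}\sqrt{\sum_i \norm{\Delta_i}^2},
\]
so since the norm is $1$-Lipschitz, $f$ has Lipschitz constant at most $L = 1/\sqrt{n}$. For the mean, by Jensen's inequality and independence,
\[
\E[f] \leq \sqrt{\E[\norm{\bar{z} - \mu}^2]} = \sqrt{\tfrac{1}{n}\E_{z \sim \mcl{D}}[\norm{z - \mu}^2]}.
\]
Applying the Poincare inequality to the coordinate function $z \mapsto e_j^\top z$ (whose gradient is $e_j$, of norm $1$) for each $j \in [d]$ gives $\Var(e_j^\top z) \leq 1$, and summing over $j$ yields $\E[\norm{z - \mu}^2] \leq d$. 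Hence $\E[f] \leq \sqrt{d/n}$.

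Putting these together with Fact \ref{fact:basic-Poincare} gives
\[
\Pr\left[ \norm{\bar{z} - \mu} \geq \sqrt{d/n} + t \right] \leq 6 \exp(-t\sqrt{n}).
\]
For $n \geq (d/\eps)^8$ we have $\sqrt{d/n} \leq \eps/2$, and choosing $t = \eps/2$ makes the tail at most $6\exp(-\eps\sqrt{n}/2) \leq 6\exp(-d^4/(2\eps^3))$, which is comfortably smaller than $2^{-d/\eps}$ for the parameter range we care about. There is no real obstacle here: both the Lipschitz calculation and the variance bound are routine, and the exponent $8$ in $n \geq (d/\eps)^8$ gives ample slack to absorb the absolute constants and the two separate requirements (mean term small, tail small).
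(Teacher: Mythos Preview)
Your proof is correct and is exactly the standard argument one would expect: use the direct-product property to view $(z_1,\dots,z_n)$ as $1$-Poincar\'e on $\R^{nd}$, observe that $f(z_1,\dots,z_n)=\norm{\bar z-\mu}$ is $(1/\sqrt{n})$-Lipschitz, bound $\E[f]\leq\sqrt{d/n}$ via the coordinatewise Poincar\'e bound, and invoke the exponential concentration in Fact~\ref{fact:basic-Poincare}. The paper does not actually supply a proof of this claim---it is asserted as ``standard''---so there is nothing to compare against; your argument is the natural one and the numerics (with the generous exponent $8$) all check out.
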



\subsection{Basic Observations}\label{sec:reductions}

Before we begin with the main proofs of Theorem \ref{thm:main-Poincare} and Theorem \ref{thm:main-GMM}, it will be useful to make a few simple reductions that allow us to make the following simplifying assumptions:
\begin{itemize}
    \item \textbf{Means Polynomially Bounded:} For all $i$, we have $\norm{\mu_i - \mu_j} \leq O((k/w_{\min} )^2) $, and
    \item \textbf{Dimension Not Too High:} We have $d \leq k$.
\end{itemize}
Since reducing to the case when the above assumptions hold is straight-forward, we defer the details to Appendix \ref{appendix:reductions}.  The reductions work in both settings  (general Poincare distributions and Gaussians) so in all future sections, we will be able to work assuming that the above properties hold.

\section{Moment Estimation}\label{sec:moments1}

We will now work towards proving our result for Poincare distributions.  A key ingredient in our algorithm will be estimating the moment tensor of a mixture $\mcl{M}$ i.e. for an unknown mixture
\[
\mcl{M} = w_1\mcl{D}(\mu_1) + \dots + w_k \mcl{D}(\mu_k)
\]
we would like to estimate the tensor
\[
w_1 \mu_1^{\otimes t} + \dots + w_k \mu_k^{\otimes t}
\]
for various values of $t$ using samples from $\mcl{M}$.  Naturally, it suffices to consider the case where we are given samples from $\mcl{D}(\mu)$ for some unknown $\mu$ and our goal is to estimate the tensor $\mu^{\otimes t}$.   For our full algorithm, we will need to go up to $t \sim \log k/ \log \log k$ but of course for such $t$, our estimate has to be implicit because we cannot write down the full tensor in polynomial time.  In this section, we address this task by constructing an unbiased estimator with bounded variance that can be easily manipulated implicitly.
\\\\
We make the following definition to simplify notation later on. 
\begin{definition}
For integers $t$ and a distribution $\mcl{D}$, we define the tensor 
\[
D_{t, \mcl{D}} = \E_{z \sim \mcl{D}}[ z^{\otimes t}] \,.
\]
We will drop the subscript $\mcl{D}$ when it is clear from context.
\end{definition}

\subsection{Adjusted Polynomials}

First, we just construct an unbiased estimator for $\mu^{\otimes t}$ (without worrying about making it implicit). This estimator is given in the definition below.

\begin{definition}\label{def:adjusted-polynomial}
Let $\mcl{D}$ be a distribution on $\R^d$.  For $x \in \R^d$, define the polynomials $P_{t, \mcl{D}}(x)$ for positive integers $t$ as follows. $P_{0, \mcl{D}}(x) = 1$ and for $t \geq 1$,
\begin{equation}\label{eq:recursive-def}
P_{t, \mcl{D}}(x) = x^{\otimes t} - \sum_{sym} D_{1, \mcl{D}} \otimes P_{t-1, \mcl{D}}(x) - \sum_{sym} D_{2, \mcl{D}} \otimes P_{t-2, \mcl{D}}(x) - \dots - D_{t, \mcl{D}} \,.
\end{equation}
We call $P_{t, \mcl{D}}$ the $\mcl{D}$-adjusted polynomials and will sometimes drop the subscript $\mcl{D}$ when it is clear from context.
\end{definition}

We now prove that the $\mcl{D}$-adjusted polynomials give an unbiased estimator for $\mu^{\otimes t}$ when given samples from $\mcl{D}(\mu)$.

\begin{claim}\label{claim:adjusted-polynomial}
For any $\mu \in \R^d$,
\[
\E_{z \sim \mcl{D}(\mu)} [P_{t, \mcl{D}}(z)] = \mu^{\otimes t} \,.
\]
\end{claim}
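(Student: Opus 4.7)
The plan is to proceed by induction on $t$, with the base case $t=0$ immediate since $P_{0,\mcl{D}}(x) = 1 = \mu^{\otimes 0}$. For the inductive step, the key computation is to expand $(\mu + y)^{\otimes t}$ where $y \sim \mcl{D}$, giving
\[
z^{\otimes t} = \sum_{S \subseteq [t]} \mu^{\otimes S} \otimes y^{\otimes ([t] \setminus S)}.
\]
Taking expectations over $y$ and grouping terms by $|S| = s$ (using the definition of $\sum_{sym}$), I get
\[
\E_{z \sim \mcl{D}(\mu)}[z^{\otimes t}] = \sum_{s=0}^{t} \sum_{sym} \mu^{\otimes s} \otimes D_{t-s,\mcl{D}}.
\]
The $s = t$ term contributes a single copy of $\mu^{\otimes t}$ (since $D_0 = 1$), and the other terms are precisely the symmetric sums $\sum_{sym} D_{j,\mcl{D}} \otimes \mu^{\otimes (t-j)}$ for $j = 1, \ldots, t$.

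Now I apply the inductive hypothesis $\E[P_{s,\mcl{D}}(z)] = \mu^{\otimes s}$ for all $s < t$ termwise to the recursive formula~(\ref{eq:recursive-def}). Linearity of expectation (and the fact that the symmetric-sum operation is a fixed linear combination that can be pulled outside the expectation) gives
\[
\E[P_{t,\mcl{D}}(z)] = \E[z^{\otimes t}] - \sum_{j=1}^{t-1} \sum_{sym} D_{j,\mcl{D}} \otimes \mu^{\otimes (t-j)} - D_{t,\mcl{D}}.
\]
Substituting the expansion of $\E[z^{\otimes t}]$ computed above, the $s < t$ terms cancel exactly against the subtracted symmetric sums (noting that the $s=0$ term matches $D_t$), leaving only $\mu^{\otimes t}$ as desired.

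The only potential source of confusion — and what I would be most careful about — is verifying that the combinatorial matching between the binomial-type expansion of $(\mu + y)^{\otimes t}$ and the symmetric-sum convention in Definition~\ref{def:adjusted-polynomial} lines up correctly. Concretely, one has to check that for each $j$, the $\binom{t}{j}$ terms coming from choosing which $j$ coordinates take the $y$-factor (after taking expectation, producing $D_j$) coincide with the $\binom{t}{j}$ partitions enumerated in $\sum_{sym} D_j \otimes \mu^{\otimes(t-j)}$. Once this identification is made, the cancellation is purely formal and the induction closes.
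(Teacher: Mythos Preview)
Your proposal is correct and follows essentially the same approach as the paper: induction on $t$, applying the hypothesis to each $P_{t-j}$ in the recursive definition and matching against the multinomial expansion of $(\mu+y)^{\otimes t}$. The only cosmetic difference is that the paper keeps the random $x\sim\mcl{D}$ inside the expression and writes the cancellation as $\E_x[(D_j - x^{\otimes j})\otimes\mu^{\otimes(t-j)}]=0$, whereas you take the expectation first to produce the $D_j$'s and then cancel; these are the same computation in a different order.
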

\begin{proof}
To simplify notation, we will drop all of the $\mcl{D}$ from the subscripts as there will be no ambiguity.  We prove the claim by induction on $t$.  The base case for $t = 1$ is clear.  Now for the inductive step, note that 
\begin{align*}
\E_{z \sim \mcl{D}(\mu)} [P_t(z)]    &= \E_{z \sim \mcl{D}(\mu)} \left[ z^{\otimes t} - \sum_{sym} D_1 \otimes P_{t-1}(z) - \sum_{sym} D_2 \otimes P_{t-2}(z) - \dots - D_t \right] \\ &= \E_{x \sim \mcl{D}}[ (x + \mu)^{\otimes t}] -  \E_{z \sim \mcl{D}(\mu)} \left[ \sum_{sym} D_1 \otimes P_{t-1}(z) + \sum_{sym} D_2 \otimes P_{t-2}(z) + \dots + D_t \right] \\ & = \E_{x \sim \mcl{D}} \left[ \mu^{\otimes t} - \sum_{sym} (D_1 - x^{\otimes 1}) \otimes \mu^{t - 1} - \sum_{sym} (D_2 - x^{\otimes 2}) \otimes \mu^{t - 2} - \dots -  (D_t - x^{\otimes t}) \right] \\ & = \mu^{\otimes t} \,.
\end{align*}
where we used the induction hypothesis and then the definition of $D_t$ in the last two steps.
\end{proof}

\subsection{Variance Bounds for Poincare Distributions}

In the previous section, we showed that the $\mcl{D}$-adjusted polynomials give an unbiased estimator for $\mu^{\otimes t}$.  We now show that they also have bounded variance when $\mcl{D}$ is Poincare.  This will rely on the following claim which shows that the $\mcl{D}$-adjusted polynomials recurse under differentiation. 

\begin{claim}\label{claim:derivative-recurrence}
Let $\mcl{D}$ be a distribution on $\R^d$.  Then
\[
\frac{\partial P_{t, \mcl{D}}(x)}{\partial x_i} = \sum_{\sym } e_i \otimes  P_{t - 1, \mcl{D}}(x)
\]
where we imagine $x = (x_1, \dots , x_d)$ so $x_i$ is the $i$\ts{th} coordinate of $x$ and 
\[
e_i = ( \underbrace{0, \dots , 1}_i , \dots, 0 )
\]
denotes the $i$\ts{th} coordinate basis vector.
\end{claim}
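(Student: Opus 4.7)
The plan is to proceed by induction on $t$, differentiating the recursive definition (\ref{eq:recursive-def}) term by term and then matching the resulting symmetric sums with the expansion of the claimed right-hand side.

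For the base case $t = 1$, we have $P_{1,\mcl{D}}(x) = x - D_{1,\mcl{D}}$, so $\partial P_{1,\mcl{D}}/\partial x_i = e_i$. The claimed right-hand side is $\sum_{\sym} e_i \otimes P_{0,\mcl{D}}(x) = e_i$, since $P_{0,\mcl{D}} = 1$ and the symmetrization over a single slot is trivial. So the base case holds.

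For the inductive step, apply $\partial/\partial x_i$ to the recursive definition (\ref{eq:recursive-def}). The monomial $x^{\otimes t}$ differentiates by Leibniz to $\sum_{\sym} e_i \otimes x^{\otimes(t-1)}$, i.e.\ one contribution for each of the $t$ slots that $e_i$ can occupy. For each $j \in \{1,\dots,t-1\}$, the term $\sum_{\sym} D_{j,\mcl{D}} \otimes P_{t-j,\mcl{D}}(x)$ has $D_{j,\mcl{D}}$ constant in $x$, so its derivative pushes inside the symmetrization to give $\sum_{\sym} D_{j,\mcl{D}} \otimes (\partial P_{t-j,\mcl{D}}/\partial x_i)$, and the inductive hypothesis rewrites this as a nested symmetrization over $[t]$ of $D_{j,\mcl{D}}$, $e_i$, and $P_{t-j-1,\mcl{D}}(x)$. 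Finally, $D_{t,\mcl{D}}$ does not depend on $x$ and contributes $0$.

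The crux is then a combinatorial matching: I would expand the claimed right-hand side $\sum_{\sym} e_i \otimes P_{t-1,\mcl{D}}(x)$ by plugging in the recursive definition of $P_{t-1,\mcl{D}}$ and distributing. Both the derivative computed above and this expansion become sums indexed by ordered set-partitions of $[t]$ into three blocks $(\{p\}, S, T)$, where $p$ is the slot carrying $e_i$, $S$ (of size $j$) carries $D_{j,\mcl{D}}$, and $T$ (of size $t-j-1$) carries $P_{t-j-1,\mcl{D}}(x)$, with $j$ ranging over $\{0,1,\dots,t-1\}$ (the value $j=0$ accounts for the pure $x^{\otimes(t-1)}$ monomial). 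For each $j$ the coefficients are identical with the same sign, so the two sides agree.

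The main obstacle is purely notational bookkeeping: on the derivative side the symmetrization is \emph{nested} (first choose the $j$ slots in $[t]$ for $D_{j,\mcl{D}}$, then use the inductive hypothesis to choose the $e_i$ slot among the remaining $t-j$), while on the claimed side it is the other way around (first choose the $e_i$ slot in $[t]$, then choose the $j$ slots for $D_{j,\mcl{D}}$ among the remaining $t-1$). Both parametrizations enumerate the same collection of labeled partitions of $[t]$, and writing this bijection explicitly in the notation of Definition \ref{def:tensor-notation} is the only nontrivial step in the argument.
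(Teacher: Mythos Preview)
Your proposal is correct and follows essentially the same approach as the paper: induction on $t$, differentiating the recursive definition (\ref{eq:recursive-def}) term by term, applying the inductive hypothesis to each $\sum_{\sym} D_j \otimes P_{t-j}(x)$ term, and then recognizing the result as $\sum_{\sym} e_i$ tensored with the recursive definition of $P_{t-1}$. The paper handles the ``nested versus outer'' symmetrization bookkeeping you flag simply by writing the identity $\sum_{\sym} D_j \otimes \big(\sum_{\sym} e_i \otimes P_{t-j-1}(x)\big) = \sum_{\sym} e_i \otimes \big(\sum_{\sym} D_j \otimes P_{t-j-1}(x)\big)$ in one line without further comment, so your explicit discussion of the bijection between the two parametrizations is, if anything, more thorough than what the paper does.
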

\begin{proof}
We will prove this by induction on $t$.  The base case for $t = 1$ is clear.  In the proceeding computations, we drop the $\mcl{D}$ from all subscripts as there will be no ambiguity.  Differentiating the definition of $P_{t, \mcl{D}}$ and using the induction hypothesis, we get
\begin{align*}
\frac{\partial P_{t}(x)}{\partial x_i} &= \sum_{sym} e_i \otimes x^{\otimes t - 1} - \sum_{sym} D_{1} \otimes e_i \otimes P_{t-2}(x) - \dots - \sum_{sym} D_{t - 2} \otimes e_i \otimes P_1(x) -  \sum_{sym} D_{t - 1} \otimes e_i  \\ &= \sum_{sym} e_i \otimes \left( x^{\otimes t-1} - \sum_{sym} D_1 \otimes P_{t-2}(x) - \dots - D_{t-1} \right) \\ & = \sum_{sym} e_i \otimes P_{t-1}(x)
\end{align*}
where in the last step we again used (\ref{eq:recursive-def}), the recursive definition of $P_{t-1}$.  This completes the proof.
\end{proof}

Now, by using the Poincare property, we can prove a bound on the variance of the estimator $P_{t, \mcl{D}}(x)$.

\begin{claim}\label{claim:variance-bound}
Let $\mcl{D}$ be a distribution on $\R^d$ that is $1$-Poincare.  Let $v \in \R^{d^t}$ be a vector.  Then
\[
\E_{z \sim \mcl{D}(\mu)} [ \left( v \cdot \flatten( P_{t, \mcl{D}}(z)) \right)^2 ] \leq ( \norm{\mu}^2 + t^2)^t \norm{v}^2 \,. 
\]
\end{claim}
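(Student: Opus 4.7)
The plan is to prove this by induction on $t$, using the Poincaré inequality together with the derivative recurrence from Claim \ref{claim:derivative-recurrence}. The base case $t=0$ (or $t=1$) follows from a direct calculation: $P_0 \equiv 1$ so the claimed bound reads $v^2 \leq \norm{v}^2$, which is trivial.

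For the inductive step, let $f(z) = v \cdot \flatten(P_{t,\mcl{D}}(z))$, a scalar-valued function of $z$. First, I would separate the second moment into the variance and the square of the mean:
\[
\E_{z \sim \mcl{D}(\mu)}[f(z)^2] = \Var_{z \sim \mcl{D}(\mu)}[f(z)] + \Bigl(\E_{z \sim \mcl{D}(\mu)}[f(z)]\Bigr)^2.
\]
The mean term is easy to handle using Claim \ref{claim:adjusted-polynomial}: $\E[f(z)] = v \cdot \flatten(\mu^{\otimes t})$, so its square is bounded by $\norm{v}^2\norm{\mu}^{2t}$ by Cauchy-Schwarz. Since translation preserves the Poincaré constant (this follows immediately from translation invariance of the gradient, and is consistent with Fact \ref{fact:basic-Poincare}), $\mcl{D}(\mu)$ is also $1$-Poincaré, so the variance term is at most $\E_{z \sim \mcl{D}(\mu)}[\norm{\nabla f(z)}^2]$.

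The main work is bounding the gradient term. By Claim \ref{claim:derivative-recurrence}, $\partial f/\partial z_i$ is a sum of $t$ terms, one for each position $j \in [t]$ where $e_i$ may be inserted alongside $P_{t-1}(z)$. Applying Cauchy–Schwarz to pull the $t$ terms apart and squaring gives
\[
\Bigl(\frac{\partial f}{\partial z_i}\Bigr)^2 \leq t \sum_{j=1}^{t} \bigl(v^{(j,i)} \cdot \flatten(P_{t-1,\mcl{D}}(z))\bigr)^2,
\]
where $v^{(j,i)} \in \R^{d^{t-1}}$ denotes the slice of $v$ (viewed as an order-$t$ tensor) obtained by fixing index $i$ in position $j$. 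The key observation is that for each fixed $j$, summing over $i$ is just slicing along a different axis and gives the Frobenius identity $\sum_i \norm{v^{(j,i)}}^2 = \norm{v}^2$. Applying the inductive hypothesis to each $v^{(j,i)}$ with the polynomial $P_{t-1}$ yields
\[
\E_{z \sim \mcl{D}(\mu)}[\norm{\nabla f(z)}^2] \leq t \sum_{j=1}^t \sum_i (\norm{\mu}^2 + (t-1)^2)^{t-1}\norm{v^{(j,i)}}^2 = t^2 (\norm{\mu}^2 + (t-1)^2)^{t-1}\norm{v}^2.
\]

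Finally, I would combine the two bounds to obtain $\E[f^2] \leq \norm{v}^2\bigl(\norm{\mu}^{2t} + t^2(\norm{\mu}^2 + (t-1)^2)^{t-1}\bigr)$ and verify that this is at most $\norm{v}^2(\norm{\mu}^2+t^2)^t$. Writing $a = \norm{\mu}^2$, this reduces to the elementary inequality $a^t + t^2(a+(t-1)^2)^{t-1} \leq (a+t^2)^t$, which holds by splitting $(a+t^2)^t = a(a+t^2)^{t-1} + t^2(a+t^2)^{t-1}$ and bounding each piece separately (using $a^{t-1} \leq (a+t^2)^{t-1}$ and $(a+(t-1)^2)^{t-1} \leq (a+t^2)^{t-1}$). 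The only real subtlety is the tensor bookkeeping in the gradient step — in particular, correctly identifying that slicing $v$ along a single index in any tensor position is a norm-preserving orthogonal decomposition — but once that is set up, the induction closes cleanly.
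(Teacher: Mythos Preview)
Your proposal is correct and follows essentially the same argument as the paper: induction on $t$, splitting the second moment into mean-squared plus variance, applying the Poincar\'e inequality to the variance, expanding the gradient via Claim~\ref{claim:derivative-recurrence}, Cauchy--Schwarz over the $t$ terms, and closing with the slicing identity $\sum_i \norm{v^{(j,i)}}^2 = \norm{v}^2$ together with the elementary bound $a^t + t^2(a+(t-1)^2)^{t-1} \le (a+t^2)^t$. If anything, you are slightly more explicit than the paper about why the final numerical inequality holds and about translation preserving the Poincar\'e constant.
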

\begin{proof}
We will prove the claim by induction on $t$.  The base case for $t = 1$ follows because 
\begin{align*}
\E_{z \sim \mcl{D}(\mu)} [ \left( v \cdot \flatten( P_{1, \mcl{D}}(z)) \right)^2 ] = (v \cdot \mu)^2 + \Var_{z \sim \mcl{D}(\mu)} \left( v \cdot \flatten( P_{1, \mcl{D}}(z)) \right) \leq (v \cdot \mu)^2 + \norm{v}^2 \\ \leq ( \norm{\mu}^2 + 1) \norm{v}^2
\end{align*}
where we used the fact that $\mcl{D}$ is $1$-Poincare.  Now for the inductive step, we have
\begin{align*}
\E_{z \sim \mcl{D}(\mu)} [ \left( v \cdot \flatten( P_{t, \mcl{D}}(z)) \right)^2 ] &= (v \cdot \flatten( \mu^{\otimes t}) )^2 + \Var_{z \sim \mcl{D}(\mu)} \left( v \cdot \flatten( P_{t, \mcl{D}}(z)) \right) \\ & \leq \norm{\mu}^{2t}\norm{v}^2 + \E_{z \sim \mcl{D}(\mu)} \left[ \sum_{i = 1}^d \left(v \cdot \frac{\partial P_{t, \mcl{D}}(x)}{\partial x_i} \right)^2 \right]  \\ & = \norm{\mu}^{2t}\norm{v}^2 + \E_{z \sim \mcl{D}(\mu)} \left[ \sum_{i = 1}^d \left( \sum_{j = 1}^t v_{\eta_j = i} \cdot P_{t-1, \mcl{D}}(z)\right)^2 \right]  \\ & \leq \norm{\mu}^{2t}\norm{v}^2 + \E_{z \sim \mcl{D}(\mu)} \left[ t  \sum_{j = 1}^t \sum_{i = 1}^d \left( v_{\eta_j = i} \cdot P_{t-1, \mcl{D}}(z)\right)^2 \right] \\ & \leq \norm{\mu}^{2t} \norm{v}^2 + t \sum_{j = 1}^t \sum_{i = 1}^d ( \norm{\mu}^2 + (t-1)^2)^{t-1} \norm{v_{\eta_j = i}}^2 \\ & = \norm{\mu}^{2t} \norm{v}^2+ t^2 ( \norm{\mu}^2 + (t-1)^2)^{t-1} \norm{v}^2 \\ & \leq ( \norm{\mu}^2 + t^2)^t \norm{v}^2 
\end{align*}
where in the above manipulations, we first used Claim \ref{claim:adjusted-polynomial}, then the fact that $\mcl{D}$ is $1$-Poincare,  then Claim \ref{claim:derivative-recurrence}, then Cauchy Schwarz, then the inductive hypothesis, and finally some direct manipulation.  This completes the inductive step and we are done.
\end{proof}

\subsection{Efficient Implicit Representation} \label{sec:implicit-moments1}

In the previous section, we showed that for $x \sim \mcl{D}(\mu)$ for unknown $\mu$, $P_{t, \mcl{D}}(x)$ gives us an unbiased estimator of $\mu^{\otimes t}$ with bounded variance.  Still, it is not feasible to actually compute $P_{t, \mcl{D}}(x)$ in polynomial time because we cannot write down all of its entries and there is no nice way to implicitly work with terms such as $D_{t, \mcl{D}}$ that appear in $P_{t, \mcl{D}}(x)$.  In this section, we construct a modified estimator that is closely related to $P_{t, \mcl{D}}(x)$ but is also easy to work with implicitly because all of the terms will be rank-$1$ i.e. of the form $v_1 \otimes \dots \otimes v_t$ for some vectors $v_1, \dots , v_t \in \R^d$.  Throughout this section, we will assume that the distribution $\mcl{D}$ that we are working with is fixed and we will drop it from all subscripts as there will be no ambiguity.  

Roughly, the way that we construct this modified estimator is that we take multiple variables $x_1, \dots , x_t \in \R^d$.  We start with $P_t(x_1)$.  We then add various products 
\[
P_{a_1}(x_1) \otimes \dots \otimes P_{a_t}(x_t)
\]
to it in a way that when expanded as monomials, only the leading terms, which are rank-$1$ since they are a direct product of the form $x_1^{\otimes a_1} \otimes \dots \otimes x_t^{\otimes a_t}$, remain.  If we then take $x_1 \sim \mcl{D}(\mu)$ and $x_2, \dots, x_t \sim \mcl{D}$, then Claim \ref{claim:adjusted-polynomial} will immediately imply that the expectation is $\mu^{\otimes t}$.  The key properties are stated formally in  Corollary \ref{coro:rank1-identity-p2}, Corollary \ref{coro:rank1-estimator-mean} and Corollary \ref{coro:rank1-estimator-variance}.  
\\\\\
First, we write out an explicit formula for $P_t(x)$.

\begin{claim}\label{claim:explicit-formula}
We have
\[
P_{t}(x) = \sum_{ S_0 \subseteq [t] }  \left( x^{\otimes S_0 } \right) \otimes  \left( \sum_{\{S_1, \dots , S_t \} \in Z_t( [t] \backslash S_0)} (-1)^{\mcl{C} \{S_1, \dots , S_t \} } (\mcl{C} \{S_1, \dots , S_t \} )! (D_{|S_1|})^{(S_1)} \otimes \dots \otimes (D_{|S_t|})^{(S_t)}   \right)
\]
\end{claim}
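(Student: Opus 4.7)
The plan is to prove Claim \ref{claim:explicit-formula} by induction on $t$, starting from the recursive definition in (\ref{eq:recursive-def}). The base case $t = 0$ is immediate: $P_0(x) = 1$, and the explicit formula evaluates to the single term with $S_0 = \emptyset$ and all $t = 0$ partition parts empty, giving coefficient $(-1)^0 \cdot 0! = 1$. (The case $t = 1$ can also be checked directly against the formula as a warm-up sanity check.)

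For the inductive step, I rewrite the recursion as
\[
P_t(x) = x^{\otimes [t]} \;-\; \sum_{j=1}^{t} \sum_{\substack{S \subseteq [t] \\ |S| = j}} D_j^{(S)} \otimes P_{t-j}^{([t]\setminus S)}(x),
\]
where I have expanded each $\sum_{sym} D_j \otimes P_{t-j}(x)$ explicitly as a sum over the choice of $j$ dimensions $S \subseteq [t]$ assigned to $D_j$, with $P_{t-j}$ occupying the complementary $t-j$ dimensions. I then apply the inductive formula to each $P_{t-j}^{([t]\setminus S)}(x)$: this introduces a further sum over a subset $S_0 \subseteq [t] \setminus S$ (the dimensions on which $x$'s sit), and over unordered partitions of $[t]\setminus S\setminus S_0$ into $t-j$ possibly-empty parts.

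The main task is to regroup this triple sum so that it matches the target formula. I fix $S_0 \subseteq [t]$ and an unordered partition $\{S_1,\ldots,S_t\}$ of $[t]\setminus S_0$ with $\mcl{C}$ non-empty parts, and count its total coefficient in the expansion. Each contribution to this term in the expanded recursion arises by singling out one of the $\mcl{C}$ non-empty parts to play the role of the outermost $S$ (of size $j = |S|$), while the other $\mcl{C} - 1$ non-empty parts together with sufficiently many empty parts constitute the inner partition of $P_{t-j}$. The overall $-$ sign in the recursion combined with the inductive coefficient $(-1)^{\mcl{C}-1}(\mcl{C}-1)!$ gives $(-1)^{\mcl{C}}(\mcl{C}-1)!$ per choice of distinguished part, and summing over the $\mcl{C}$ choices yields exactly $(-1)^{\mcl{C}}\,\mcl{C}!$, matching the target coefficient. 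The $x^{\otimes [t]}$ term corresponds to $S_0 = [t]$ with $\mcl{C} = 0$, contributing the required $(-1)^0 \cdot 0! = 1$.

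The main obstacle I expect is the bookkeeping around the fact that $Z_t(S)$ uses unordered partitions into a fixed number $t$ of possibly-empty parts. I need to be careful that (i) padding with empty parts does not overcount (since the parts are unordered, empty parts are indistinguishable and absorbed silently), and (ii) the $\mcl{C}$ non-empty parts are genuinely distinguishable (they are disjoint non-empty subsets of $[t]$, hence pairwise distinct as sets), so that the $\mcl{C}$ choices of ``outermost'' part are indeed distinct and the counting argument in the previous paragraph is valid. Once this is pinned down, the claim follows by combining the base case, the inductive identity above, and the observation that every term of the target formula is accounted for exactly once.
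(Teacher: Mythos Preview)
Your proposal is correct and follows essentially the same approach as the paper: induction on $t$, fix a target monomial $(x^{\otimes S_0}) \otimes D_{|S_1|}^{(S_1)}\otimes\cdots$ with $\mcl{C}$ nonempty $D$-blocks, observe that exactly the $\mcl{C}$ choices of ``which block is the outermost $D_j$ in the recursion'' contribute, and combine the inductive coefficient $(-1)^{\mcl{C}-1}(\mcl{C}-1)!$ with the minus sign from the recursion to get $(-1)^{\mcl{C}}\mcl{C}!$. Your extra care about empty parts (since $Z_{t-j}$ has $t-j$ slots while only $\mcl{C}-1$ parts are nonempty, and $\mcl{C}-1 \le t-j-|S_0|$ always holds) is a welcome clarification the paper leaves implicit.
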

\begin{proof}
We use induction.  The base case is trivial.  Now, it suffices to compute the coefficient of some ``monomial" in $P_{t}$.  Note that the coefficient of the monomial $x^{\otimes t}$ is clearly $1$ which matches the desired formula.  Otherwise, consider a monomial
\[
A = \left( x^{\otimes S_0} \right)\otimes (D_{|S_1|})^{(S_1)} \otimes \dots \otimes (D_{|S_a|})^{(S_a)} 
\]
where $S_1, \dots , S_a$ are nonempty for some $1 \leq a \leq t$.  Recall the recursive definition of $P_t$ in (\ref{eq:recursive-def}).  There are exactly $a$ terms on the RHS of (\ref{eq:recursive-def})  that can produce the monomial $A$.  These terms are
\[
-D_{|S_1|}^{(S_1)} \otimes P_{t - |S_1|}(x)^{([t]\backslash S_1)}, \dots , -D_{|S_a|}^{(S_a)} \otimes P_{t - |S_a|}(x)^{([t]\backslash S_a)} \,.
\]
However, by the inductive hypothesis, the coefficient of $A$ produced by each of these terms is exactly $(-1)^a ( a-1)!$.  Thus, combining over all $a$ terms, the resulting coefficient is $(-1)^a a!$ which matches the desired formula.  This completes the proof.
\end{proof}

Now, we are ready to write out our estimator that can be written as a sum of few rank-$1$ terms.  The key identity is below.

\begin{definition}
For $x_1, \dots , x_t \in \R^d$, define the polynomial 
\begin{equation}\label{eq:rank1}
Q_t(x_1, \dots , x_t) = \sum_{\substack{S_1 \cup \dots \cup S_t = [t] \\ |S_i \cap S_j| = 0}} \frac{(-1)^{\mcl{C} \{ S_1, \dots , S_t \}}}{\binom{t - 1}{\mcl{C} \{ S_1, \dots , S_t \} - 1 } } \left( P_{|S_1|}(x_1) \right)^{(S_1)} \otimes \dots \otimes \left( P_{|S_t|}(x_t) \right)^{(S_t)} \,.
\end{equation}
\end{definition}

\begin{lemma}\label{lem:rank1-identity-p1}
All nonzero monomials in $Q_t$ either have total degree $t$ in the variables $x_1, \dots , x_t$ or are constant.
\end{lemma}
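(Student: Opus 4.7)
The plan is to prove the statement by fixing an arbitrary monomial $M$ of strictly intermediate $x$-degree (between $1$ and $t-1$) and showing that its total coefficient in the expansion of $Q_t$ vanishes. All other monomials automatically have degree $0$ or $t$ by the structure of the adjusted polynomials, so this suffices.

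First I would expand each factor $P_{|S_j|}(x_j)$ in \eqref{eq:rank1} using Claim \ref{claim:explicit-formula}. A monomial in the expansion of $Q_t$ is then specified by (i) the ordered partition $(S_1,\dots,S_t)$ of $[t]$, (ii) for each $j$, the subset $T_0^{(j)} \subseteq S_j$ of positions occupied by $x_j$, and (iii) an unordered partition of $S_j \setminus T_0^{(j)}$ into the non-empty $D$-blocks. A \emph{target} monomial $M$ is determined by the data $A_j := T_0^{(j)}$ together with an unordered partition $\{B_1,\dots,B_m\}$ of $B := [t] \setminus \bigcup_j A_j$ into non-empty $D$-blocks. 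Any "parsing" of $M$ out of $Q_t$ is thus equivalent to choosing a function $f : [m] \to [t]$ assigning each $D$-block to a factor index, and setting $S_j = A_j \cup \bigcup_{f(\ell)=j} B_\ell$.

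Next I would collect the coefficient contribution of a single parsing. Writing $p := |\{j : A_j \neq \emptyset\}|$ and $q := q(f)$ for the number of indices $j$ with $A_j = \emptyset$ that receive at least one $D$-block, the outer factor from \eqref{eq:rank1} equals $(-1)^{p+q}/\binom{t-1}{p+q-1}$. By Claim \ref{claim:explicit-formula} each inner factor indexed by $j$ contributes $(-1)^{c_j} c_j!$, where $c_j = |f^{-1}(j)|$. Summing over assignments $f$ with fixed $q$ and using the multinomial identity $\sum_{(c_j) : \sum c_j = m, \text{ compatible with } f} m!/\prod c_j! \cdot \prod c_j! = m!$ times the number of compositions, the inner sum collapses to $m! \binom{m+p-1}{p+q-1}$; the choice of which $q$ slots in $[t]\setminus J$ to use contributes $\binom{t-p}{q}$.

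Assembling these pieces gives
\[
\mathrm{Coef}(M) \;=\; (-1)^{p+m} m! \sum_{q \geq 0} (-1)^q \, \frac{\binom{t-p}{q}\binom{m+p-1}{p+q-1}}{\binom{t-1}{p+q-1}}.
\]
A short algebraic simplification of the inner ratio yields the much cleaner expression $(t-p)!(m+p-1)!/((t-1)! m!) \cdot \binom{m}{q}$, so up to a constant prefactor the sum becomes $\sum_q (-1)^q \binom{m}{q}$. The combinatorial constraints $|B_\ell| \geq 1$ and $|A_j| \geq 1$ for $j \in J$ force $m \leq t - p$, so the full range $q = 0,\dots,m$ is present, and for any $m \geq 1$ this sum is $(1-1)^m = 0$. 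Hence the coefficient vanishes whenever $p \geq 1$ and $m \geq 1$, i.e.\ whenever $M$ has strictly intermediate $x$-degree, completing the proof. The main technical obstacle I anticipate is the bookkeeping in the middle step—keeping track of which contributions come from "old" versus "new" slots, and ensuring the multinomial and binomial identities line up—but the final cancellation via the alternating binomial sum is what makes the identity work.
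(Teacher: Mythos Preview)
Your proposal is correct and follows essentially the same argument as the paper: expand each $P_{|S_j|}(x_j)$ via Claim~\ref{claim:explicit-formula}, fix a monomial of intermediate degree, group the contributing terms by how the $D$-blocks are distributed among the factor indices, and reduce the resulting coefficient to an alternating binomial sum that vanishes. The only cosmetic difference is your parametrization---you sum over $q$ (the number of ``new'' slots opened among the $A_j=\emptyset$ indices) while the paper sums over $c=p+q$ (the total number of nonempty $S_j$)---and your intermediate identity $m!\binom{m+p-1}{p+q-1}$ is exactly the paper's $b!\binom{b+a-1}{c-1}$ under the dictionary $(p,m,q)\leftrightarrow(a,b,c-a)$.
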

\begin{proof}
Consider substituting the formula in Claim \ref{claim:explicit-formula} into the RHS for all occurrences of $P$.  Consider a monomial
\[
A = \left( x_{i_1}^{\otimes U_1} \right) \otimes \dots \otimes  \left( x_{i_a}^{\otimes U_a} \right) \otimes  (D_{|V_1|})^{(V_1)} \otimes \dots \otimes (D_{|V_b|})^{(V_b)} 
\]
where $U_1, \dots , U_a, V_1, \dots , V_b$ are nonempty.  Note that when we expand out the RHS, all monomials are of this form.  It now suffices to compute the coefficient of this monomial $A$ in the expansion of the RHS.  
\\\\
The terms in the sum on the RHS are of the form 
\[
\frac{(-1)^c}{\binom{t-1}{c-1}} P_{|S_{j_1}|}( x_{j_1})^{(S_{j_1})} \otimes \dots \otimes P_{|S_{j_c}|}( x_{j_c})^{(S_{j_c})}
\]
where $S_{j_1}, \dots , S_{j_c}$ are nonempty.  We now consider summing over all such terms that can produce the monomial $A$ and sum the corresponding coefficient to get the overall coefficient of $A$.
\\\\
In order for the monomial $A$ to appear in the expansion of this term, we need $\{i_1, \dots , i_a \} \subseteq \{ j_1, \dots , j_c \}$.  Once the indices $j_1, \dots , j_c$ are fixed, it remains to assign each of the terms 
\[
(D_{|V_1|})^{(V_1)} , \dots ,  (D_{|V_b|})^{(V_b)} 
\]
to one of the variables $x_{j_1}, \dots , x_{j_c}$.  This will correspond to which of the polynomials 
\[
P_{|S_{j_1}|}( x_{j_1})^{(S_{j_1})}, \dots ,  P_{|S_{j_c}|}( x_{j_c})^{(S_{j_c})}
\]
that the term came from.  Specifically, for each integer $f $ with $1 \leq f \leq c$, let $B_f \subset [b]$ be the indices of the terms that are assigned to the variable $x_{j_f}$.  The sets $B_1, \dots , B_c$ uniquely determine the sets $S_{j_1}, \dots , S_{j_c}$ but also need to satisfy the constraint that if $j_f \notin \{i_1, \dots , i_a \}$ then $B_f \neq \emptyset $.  Once these sets are all fixed, by Claim \ref{claim:explicit-formula}, the desired coefficient is simply
\[
\frac{(-1)^c}{\binom{t-1}{c-1}} (-1)^{|B_1| + \dots  +|B_c|} |B_1|! \cdots |B_c|! = \frac{(-1)^c}{\binom{t-1}{c-1}} (-1)^{b} |B_1|! \cdots |B_c|! \,.
\]
Now overall, the desired coefficient is
\begin{equation}\label{eq:binomsum}
\sum_{c = a}^{a + b} \frac{(-1)^c}{\binom{t-1}{c-1}} \binom{t - a}{c - a}  (-1)^{b} \sum_{\substack{B_1 \cup \dots \cup B_c = [b]  \\ B_i \cap B_j = \emptyset \\ B_{a+1}, \dots , B_c \neq \emptyset} } |B_1|! \cdots |B_c|! \,.
\end{equation}
This is because there are $\binom{t - a}{c - a}$ to choose the set $\{ j_1, \dots , j_c \}$ (since it must contain $ \{ i_1, \dots , i_a \}$ ).  Once we have chosen this set, WLOG we can label $j_1 = i_1, \dots, j_a = i_a$ so that $j_{a+1}, \dots , j_c$ are the elements that are not contained in $ \{ i_1, \dots , i_a \}$ and thus the constraint on $B_1, \dots , B_c$ is simply that $B_{a+1}, \dots , B_c$ are nonempty.
\\\\
To evaluate (\ref{eq:binomsum}), we will evaluate the inner sum differently.  Imagine first choosing the sizes $s_1 = |B_1|, \dots , s_c = |B_c|$ and then choosing the sets $B_1, \dots , B_c$ to satisfy these size constraints.  The inner sum can then be rewritten as
\[
\sum_{ \substack{s_1 + \dots + s_c = b \\ s_{a+1}, \dots , s_c > 0}}  \binom{b}{s_1, \dots , s_c }  s_1! \cdots s_c ! = b! \sum_{ \substack{s_1 + \dots + s_c = b \\ s_{a+1}, \dots , s_c > 0}} 1  = b!  \binom{b + a - 1}{c - 1} 
\]
where the last equality follows from counting using stars and bars.  Now we can plug back into (\ref{eq:binomsum}).  Assuming that $a,b \geq 1$, the coefficient of the monomial $A$ is 
\begin{align*}
\sum_{c = a}^{a + b} \frac{(-1)^c}{\binom{t-1}{c-1}} \binom{t - a}{c - a}  (-1)^{b} b!  \binom{b + a - 1}{c - 1} &= (-1)^b b! \sum_{c = a}^{a + b} \frac{(-1)^c(t-a)! (b+a - 1)!}{(c-a)! (t-1)! (b+a - c)!} \\ &= (-1)^b   \frac{(t-a)! (b+a - 1)!}{ (t-1)! }  \sum_{c = a}^{a + b} (-1)^c \binom{b}{c - a} \\ & = 0 \,.
\end{align*}
Thus, the only monomials that have nonzero coefficient either have $a = 0$ (meaning they are constant) or $b = 0$ (meaning they have degree $t$).  This completes the proof.
\end{proof}

Now, we can easily eliminate the constant term by subtracting off $Q(x_{t+1}, \dots , x_{2t})$ for some additional variables $x_{t+1}, \dots , x_{2t}$ and we will be left with only degree-$t$ terms.  It will be immediate that the degree-$t$ terms are all rank-$1$ and this will give us an estimator that can be efficiently manipulated implicitly.

\begin{definition}\label{def:key-polynomial}
For $x_1, \dots , x_{2t} \in \R^d$, define the polynomial
\[
R_t(x_1, \dots , x_{2t}) = - Q_t(x_1, \dots , x_t) + Q_t(x_{t+1}, \dots , x_{2t}) \,.
\]
\end{definition}

\begin{corollary}\label{coro:rank1-identity-p2}
We have the identity
\[
R_t(x_1, \dots , x_{2t}) = \sum_{\substack{S_1 \cup \dots \cup S_t = [t] \\ |S_i \cap S_j| = 0}}   \frac{(-1)^{\mcl{C} \{ S_1, \dots , S_t \} -1 }}{\binom{t - 1}{\mcl{C} \{ S_1, \dots , S_t \} - 1 } } \left( x_1^{\otimes S_1} \otimes \dots \otimes x_t^{\otimes S_t} -  x_{t+1}^{\otimes S_1} \otimes \dots \otimes x_{2t}^{\otimes S_t} \right)\,.
\]
\end{corollary}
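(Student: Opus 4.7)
The plan is to deduce this corollary directly from Lemma \ref{lem:rank1-identity-p1} with essentially no new computation; the substantive combinatorics was already carried out there. I would proceed in the following steps.

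First, I would unpack the definition $R_t = -Q_t(x_1,\dots,x_t) + Q_t(x_{t+1},\dots,x_{2t})$ and observe that both copies of $Q_t$ have the same monomial expansion (the sum in \eqref{eq:rank1} treats the $x_i$'s symmetrically only in the sense that the structure is the same; but any purely constant monomial in $Q_t(x_1,\dots,x_t)$ is also present with the same coefficient in $Q_t(x_{t+1},\dots,x_{2t})$, because the constant part of each $P_{|S_i|}(x_i)$ equals $D_{|S_i|}$, independent of the variable name). Consequently, the constant parts cancel in $R_t$, and we are left only with the degree-$t$ contributions guaranteed by Lemma \ref{lem:rank1-identity-p1}.

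Next, I would extract the degree-$t$ part explicitly. From the recursive definition \eqref{eq:recursive-def}, $P_{|S_i|}(x_i) = x_i^{\otimes |S_i|} + (\text{terms of degree} < |S_i|)$, so the unique degree-$|S_i|$ monomial in $P_{|S_i|}(x_i)$ is $x_i^{\otimes |S_i|}$, contributing tensor factor $x_i^{\otimes S_i}$ in the $(S_1,\dots,S_t)$ summand of $Q_t$. The total degree in $x_1,\dots,x_t$ of the tensor product $(P_{|S_1|}(x_1))^{(S_1)} \otimes \cdots \otimes (P_{|S_t|}(x_t))^{(S_t)}$ equals $t$ precisely when we take the leading term of each factor. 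Therefore the degree-$t$ part of $Q_t(x_1,\dots,x_t)$ equals
\begin{equation*}
\sum_{\substack{S_1 \cup \dots \cup S_t = [t] \\ S_i \cap S_j = \emptyset}} \frac{(-1)^{\mcl{C}\{S_1,\dots,S_t\}}}{\binom{t-1}{\mcl{C}\{S_1,\dots,S_t\}-1}} \, x_1^{\otimes S_1} \otimes \dots \otimes x_t^{\otimes S_t}.
\end{equation*}

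Finally, I would assemble $R_t$. Substituting the above expression (and its analogue with $x_{t+1},\dots,x_{2t}$) into $R_t = -Q_t(x_1,\dots,x_t) + Q_t(x_{t+1},\dots,x_{2t})$ and pulling the overall minus sign into the sign exponent (turning $(-1)^{\mcl{C}}$ into $(-1)^{\mcl{C}-1}$) gives exactly the claimed identity. There is no genuine obstacle here — the only thing to be slightly careful about is verifying that the constant-term cancellation is exact, which follows because the constants in $Q_t(x_1,\dots,x_t)$ depend only on the moments $D_j$ of $\mcl{D}$ and on the partition structure, neither of which changes when we relabel the variables as $x_{t+1},\dots,x_{2t}$.
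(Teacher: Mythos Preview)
Your proposal is correct and follows essentially the same approach as the paper's proof: invoke Lemma~\ref{lem:rank1-identity-p1} to reduce $Q_t$ to its constant and degree-$t$ parts, note that the constants cancel in $R_t = -Q_t(x_1,\dots,x_t)+Q_t(x_{t+1},\dots,x_{2t})$, and read off the degree-$t$ part from the leading term $x_i^{\otimes |S_i|}$ of each $P_{|S_i|}(x_i)$. The paper states this in one sentence; you have simply unpacked the same argument in more detail.
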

\begin{proof}
This follows immediately from Lemma \ref{lem:rank1-identity-p1} and the definition of $R_t(x_1, \dots , x_{2t})$ because the constant terms cancel out and the degree-$t$ terms clearly match the RHS of the desired expression.
\end{proof}

Corollary \ref{coro:rank1-identity-p2} gives us a convenient representation for working implicitly with $R_t(x_1, \dots , x_{2t})$.  We now show why this polynomial is actually useful.  In particular, we show that for $x_1 \sim \mcl{D}(\mu)$ and $x_2, \dots , x_{2t} \sim \mcl{D}$, $R_{t}(x_1, \dots , x_{2t})$ is an unbiased estimator of $\mu^{\otimes t}$ and furthermore that its variance is bounded.  These properties will follow directly from the definitions of $R_t, Q_t$ combined with Claim \ref{claim:adjusted-polynomial} and Claim \ref{claim:variance-bound}.

\begin{corollary}\label{coro:rank1-estimator-mean}
We have
\[
\E_{z_1 \sim \mcl{D}(\mu), z_2, \dots , z_{2t} \sim \mcl{D}}[  R_t(z_1, \dots , z_{2t}) ] = \mu^{\otimes t} 
\]
and for fixed $z_1$, we have
\[
\E_{ z_2, \dots , z_{2t} \sim \mcl{D}}[  R_t(z_1, \dots , z_{2t}) ] = P_t(z_1) \,.
\]
\end{corollary}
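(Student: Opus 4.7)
The plan is to prove the second identity first (the conditional expectation with $z_1$ fixed), and then deduce the first by taking an outer expectation over $z_1 \sim \mcl{D}(\mu)$ and applying Claim~\ref{claim:adjusted-polynomial}. Since $R_t = -Q_t(x_1,\ldots,x_t) + Q_t(x_{t+1},\ldots,x_{2t})$, it suffices to compute $\E[Q_t(z_1,\ldots,z_t)]$ with $z_1$ fixed and $z_2,\ldots,z_t \sim \mcl{D}$, together with the unconditional $\E[Q_t(z_{t+1},\ldots,z_{2t})]$ where all arguments come from $\mcl{D}$.

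The key observation is that Claim~\ref{claim:adjusted-polynomial} applied with $\mu = 0$ gives $\E_{z \sim \mcl{D}}[P_s(z)] = 0^{\otimes s}$, which is the zero tensor whenever $s \geq 1$, while $P_0 = 1$. Expanding
\[
Q_t(z_1,\dots,z_t) = \sum_{\substack{S_1 \sqcup \dots \sqcup S_t = [t]}} \frac{(-1)^{\mcl{C}}}{\binom{t-1}{\mcl{C}-1}} \bigl(P_{|S_1|}(z_1)\bigr)^{(S_1)} \otimes \dots \otimes \bigl(P_{|S_t|}(z_t)\bigr)^{(S_t)}
\]
and passing expectation inside (using independence of $z_2,\ldots,z_t$), any term in which some $S_i$ with $i \geq 2$ has $|S_i| \geq 1$ vanishes, since that factor contributes $\E[P_{|S_i|}(z_i)] = 0$. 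The only surviving term is therefore the one with $S_1 = [t]$ and $S_2 = \dots = S_t = \emptyset$, for which $\mcl{C} = 1$, the coefficient is $(-1)^1/\binom{t-1}{0} = -1$, and the tensor factor reduces to $P_t(z_1)$. Thus $\E_{z_2,\dots,z_t \sim \mcl{D}}[Q_t(z_1,\dots,z_t)] = -P_t(z_1)$.

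Applying exactly the same calculation to $Q_t(z_{t+1},\dots,z_{2t})$, but now first conditioning on $z_{t+1}$ and integrating out $z_{t+2},\dots,z_{2t}$, gives $-P_t(z_{t+1})$, and a final expectation over $z_{t+1} \sim \mcl{D}$ yields $-\E_{z \sim \mcl{D}}[P_t(z)] = 0$ by Claim~\ref{claim:adjusted-polynomial} with $\mu = 0$ (for $t \geq 1$; the $t=0$ case is trivial). Combining the two halves, for fixed $z_1$,
\[
\E_{z_2,\dots,z_{2t} \sim \mcl{D}}[R_t(z_1,\dots,z_{2t})] = -(-P_t(z_1)) + 0 = P_t(z_1),
\]
which is the second asserted identity. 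Taking a further expectation over $z_1 \sim \mcl{D}(\mu)$ and invoking Claim~\ref{claim:adjusted-polynomial} directly yields $\E[R_t] = \mu^{\otimes t}$, proving the first identity.

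The only mildly delicate point is the coefficient bookkeeping: one must verify that $\mcl{C} = 1$ and that $\binom{t-1}{0} = 1$ so that the surviving term has coefficient exactly $-1$. Everything else is a straightforward application of linearity and independence, so I do not expect any real obstacle.
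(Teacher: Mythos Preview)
Your proposal is correct and takes essentially the same approach as the paper's proof: expand $Q_t$ via its definition, use Claim~\ref{claim:adjusted-polynomial} with $\mu=0$ to kill every term containing a factor $P_{|S_i|}(z_i)$ with $i\ge 2$ and $|S_i|\ge 1$, so only the $S_1=[t]$ term survives, and then take the outer expectation over $z_1\sim\mcl{D}(\mu)$. Your write-up is just more explicit about the coefficient bookkeeping; the paper compresses all of this into one sentence.
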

\begin{proof}
Using the definition of $Q_t$ in (\ref{eq:rank1}) and Claim \ref{claim:adjusted-polynomial}, the expectations of all of the terms are $0$ except for the leading term $P_t(z_1)$.  Thus,
\[
\E_{ z_2, \dots , z_{2t} \sim \mcl{D}}[  R_t(z_1, \dots , z_{2t}) ] = P_t(z_1) \,.
\]
Also by Claim \ref{claim:adjusted-polynomial}, 
\[
\E_{z_1 \sim \mcl{D}(\mu)}[ P_{t}(z_1)] = \mu^{\otimes t}
\]
and this gives us the two desired identities.
\end{proof}

\begin{corollary}\label{coro:rank1-estimator-variance}
Let $\mcl{D}$ be a distribution that is $1$-Poincare.  We have
\[
\E_{z_1 \sim \mcl{D}(\mu), z_2, \dots , z_{2t} \sim \mcl{D}}\left[ \flatten(R_t(z_1, \dots , z_{2t}))^{\otimes 2} \right] \preceq (20t)^{2t} (\norm{\mu}^{2t} + 1)I_{d^t}  \,.
\]
where recall $I_{d^t}$ denotes the $d^t$-dimensional identity matrix.
\end{corollary}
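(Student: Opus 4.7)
The claimed operator-norm inequality is equivalent to the scalar bound
\[
\E\bigl[(v \cdot \flatten(R_t(z_1,\dots,z_{2t})))^2\bigr] \leq (20t)^{2t}(\norm{\mu}^{2t}+1)\norm{v}^2
\]
for every $v \in \R^{d^t}$. My plan is to write $R_t = -Q_t(z_1,\dots,z_t) + Q_t(z_{t+1},\dots,z_{2t})$ and exploit independence of the two variable blocks. One checks directly from (\ref{eq:rank1}) and Claim~\ref{claim:adjusted-polynomial} that the only partition in $Q_t(z_1,\dots,z_t)$ whose term has nonzero mean is the one with $a_1 = t$, giving $\E[Q_t(z_1,\dots,z_t)] = -\mu^{\otimes t}$, while $\E[Q_t(z_{t+1},\dots,z_{2t})] = 0$; the cross term in the square therefore vanishes and it suffices to bound $\E[(v \cdot \flatten(Q_t(z_1,\dots,z_t)))^2]$ and $\E[(v \cdot \flatten(Q_t(z_{t+1},\dots,z_{2t})))^2]$ separately.

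The workhorse is a multi-variable strengthening of Claim~\ref{claim:variance-bound}: for independent $z_i \sim \mcl{D}(\mu_i)$, any ordered partition $S_1 \sqcup \cdots \sqcup S_t = [t]$ with $|S_i| = a_i$, and any $v \in \R^{d^t}$,
\[
\E\!\left[\bigl(v \cdot \flatten(P_{a_1}(z_1)^{(S_1)} \otimes \cdots \otimes P_{a_t}(z_t)^{(S_t)})\bigr)^2\right] \leq \prod_{i=1}^t (\norm{\mu_i}^2 + a_i^2)^{a_i} \norm{v}^2.
\]
I would prove this by induction on the number of factors with $a_i \geq 1$: conditioning on $z_2,\dots,z_t$, the inner product becomes $\tilde v(z_2,\dots,z_t) \cdot \flatten(P_{a_1}(z_1))$, where $\tilde v$ is the contraction of $v$ against $P_{a_2}(z_2)^{(S_2)} \otimes \cdots \otimes P_{a_t}(z_t)^{(S_t)}$ along the dimensions in $[t] \setminus S_1$. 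Claim~\ref{claim:variance-bound} then yields $\E_{z_1}[\cdots \mid z_2,\dots,z_t] \leq (\norm{\mu_1}^2 + a_1^2)^{a_1} \norm{\tilde v}^2$, and $\E[\norm{\tilde v}^2]$ is a sum of smaller instances of the same lemma, closing the induction.

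Applied per term in $Q_t(z_1,\dots,z_t)$ with $\mu_1 = \mu$ and $\mu_i = 0$ for $i \geq 2$, this gives $(\norm{\mu}^2 + a_1^2)^{a_1} \prod_{i \geq 2} a_i^{2a_i} \norm{v}^2$, and for $Q_t(z_{t+1},\dots,z_{2t})$ simply $\prod_i a_i^{2a_i} \norm{v}^2$. I would then apply the $L^2$-triangle inequality with coefficient bound $|c_\pi| = 1/\binom{t-1}{c-1} \leq 1$, the estimate $(\norm{\mu}^2+a_1^2)^{a_1/2} \leq 2^{t/2}(\norm{\mu}^{a_1} + a_1^{a_1})$, and the combinatorial bound
\[
\sum_{\pi} \prod_i a_i^{a_i} \;=\; t! \sum_{(a_i):\sum a_i = t} \prod_i \frac{a_i^{a_i}}{a_i!} \;\leq\; t! \cdot e^t \cdot \binom{2t-1}{t-1} \;\leq\; (4et)^t,
\]
which follows from Stirling's $a_i^{a_i} \leq e^{a_i} a_i!$ and a stars-and-bars count. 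Together these give $\sqrt{\E[(v \cdot \flatten(Q_t))^2]} \leq (Ct)^t \sqrt{\norm{\mu}^{2t}+1}\,\norm{v}$ for an absolute $C \leq 16$; squaring, adding the simpler bound for the second $Q_t$ (which has no $\norm{\mu}$ dependence), and absorbing constants yields the stated $(20t)^{2t}$ factor.

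The hard part will be this final combinatorial accounting: there are $t^t$ ordered partitions and individual per-term bounds can be as large as $t^{2t}$, so a naive Cauchy-Schwarz over terms would overshoot $(20t)^{2t}$ by an extra $t^t$. The non-obvious savings come from the $(4et)^t$ bound on $\sum_\pi \prod_i a_i^{a_i}$, which reflects that the many spread-out partitions carry tiny weight while only the few concentrated partitions carry heavy weight. A secondary subtlety is correctly defining the contraction $\tilde v$ in the key lemma so that the dimensions of $v$ in $[t] \setminus S_1$ pair up with the correct indices of $P_{a_2}(z_2)^{(S_2)} \otimes \cdots$ per Definition~\ref{def:tensor-notation}.
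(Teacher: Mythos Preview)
Your proposal is correct and follows essentially the same route as the paper: decompose $R_t$ into the two $Q_t$ blocks, bound each term of $Q_t$ by the multi-variable extension of Claim~\ref{claim:variance-bound} via independence, and control the sum over ordered partitions combinatorially. The only differences are cosmetic: the paper uses a crude Cauchy--Schwarz $\E[R_t^{\otimes 2}] \preceq 2\E[Q_t^{(1)}{}^{\otimes 2}] + 2\E[Q_t^{(2)}{}^{\otimes 2}]$ rather than your observation that the cross term vanishes, and it handles the partition sum by a weighted Cauchy--Schwarz with weights $|S_1|!\cdots|S_t|!$ (splitting into factors $C_1 \le (2t)^t$ and $C_2 \preceq (40t)^t(\norm{\mu}^{2t}+1)I$) in place of your $L^2$-triangle inequality plus the Stirling bound $\sum_\pi \prod_i a_i^{a_i} \le (4et)^t$.
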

\begin{proof}
Using the definition of $R_t$ and $Q_t$ and Cauchy Schwarz, we have
\begin{align*}
& \E_{z_1 \sim \mcl{D}(\mu), z_2, \dots , z_{2t} \sim \mcl{D}}\left[ \flatten(R_t(z_1, \dots , z_{2t}))^{\otimes 2} \right] \\ & \preceq  2\E_{z_1 \sim \mcl{D}(\mu), z_2, \dots , z_{t} \sim \mcl{D}} \left[ \flatten \left( Q_t(z_1, \dots , z_t)\right)^{\otimes 2} \right] +  2 \E_{z_{t+1}, \dots , z_{2t} \sim \mcl{D}} \left[ \flatten \left( Q_t(z_{t+1}, \dots , z_{2t})\right)^{\otimes 2} \right]  \,.
\end{align*}

Now by Cauchy Schwarz again,
\begin{align*}
 &\E_{z_1 \sim \mcl{D}(\mu), z_2, \dots , z_{t} \sim \mcl{D}} \left[ \flatten \left( Q_t(z_1, \dots , z_t)\right)^{\otimes 2} \right] \preceq \left( \sum_{\substack{S_1 \cup \dots \cup S_t = [t] \\ |S_i \cap S_j| = 0}}  |S_1|! \cdots |S_t|!\right) \\ & \quad \cdot  \left( \sum_{\substack{S_1 \cup \dots \cup S_t = [t] \\ |S_i \cap S_j| = 0}}  \frac{\E_{z_{1} \sim \mcl{D}(\mu), z_2  \dots , z_{t} \sim \mcl{D}} \left[ \flatten \left( \left( P_{|S_1|}(z_{1}) \right)^{(S_1)} \otimes \dots \otimes \left( P_{|S_t|}(z_{t}) \right)^{(S_t)} \right)^{\otimes 2} \right] }{|S_1|! \cdots |S_t|!} \right) \,.
 \end{align*}
 
Let the first term above be $C_1$ and the second term be $C_2$.  We can rearrange the sum over partitions of $[t]$ as follows.  We can first choose the sizes $s_1 = |S_1|, \dots , s_t = |S_t|$ and then choose the partition according to these constraints.  We get
\[
C_1 = \sum_{s_1 + \dots + s_t = t} \binom{t}{s_1, \dots , s_t} s_1 ! \cdots s_t ! = t! \binom{2t - 1}{t} \leq (2t)^t 
\]
and similarly (and also using Claim \ref{claim:variance-bound})
\begin{align*}
C_2 &\preceq \left(\sum_{\substack{S_1 \cup \dots \cup S_t = [t] \\ |S_i \cap S_j| = 0}} \frac{(\norm{\mu}^2 + |S_1|^2)^{|S_1|} |S_2|^{2|S_2|} \cdots |S_t|^{2|S_t|} }{|S_1|! \cdots |S_t|! } \right) I_{d^t} \\ &\preceq  \left(20^t \sum_{\substack{S_1 \cup \dots \cup S_t = [t] \\ |S_i \cap S_j| = 0}} \frac{\left( \norm{\mu}^{2|S_1|} + (|S_1|!)^2 \right) (|S_2|!)^2 \cdots (|S_t|!)^2 }{|S_1|! \cdots |S_t|! } \right) I_{d^t} \\ & \preceq \left( 20^t (\norm{\mu}^{2t} + 1) \sum_{s_1 + \dots + s_t = t} \binom{t}{s_1, \dots , s_t}  s_1 ! s_2! \cdots s_t!  \right) I_{d^t} \\ &\preceq (40t)^t( \norm{\mu}^{2t} + 1) I_{d^t} \,.
\end{align*}
Note that in the first step above, we also used the fact that $z_1, \dots , z_t$ are drawn independently.  Thus, overall we have shown
\[
\E_{z_1 \sim \mcl{D}(\mu), z_2, \dots , z_{t} \sim \mcl{D}} \left[ \flatten \left( Q_t(z_1, \dots , z_t)\right)^{\otimes 2} \right] \preceq (10t)^{2t} (\norm{\mu}^{2t} + 1) I_{d^t} \,.
\]
Similarly, we have
\[
\E_{z_{t+1} , \dots , z_{2t} \sim \mcl{D}} \left[ \flatten \left( Q_t(z_1, \dots , z_t)\right)^{\otimes 2} \right] \preceq (10t)^{2t} I_{d^t}
\]
and putting everything together, we conclude
\[
\E_{z_1 \sim \mcl{D}(\mu), z_2, \dots , z_{2t} \sim \mcl{D}}\left[ \flatten(R_t(z_1, \dots , z_{2t}))^{\otimes 2} \right] \preceq  (20t)^{2t} (\norm{\mu}^{2t} + 1) I_{d^t}
\]
and we are done.
\end{proof}

\section{Iterative Projection}\label{sec:projection}

In this section, we explain our technique for implicitly working with tensors that have too many entries to write down.  Recall that we would like to estimate the moment tensor
\[
w_1 \mu_1^{\otimes t} + \dots + w_k \mu_k^{\otimes t}
\]
for 
\[
t \sim \frac{\log (k/w_{\min})}{ \log \log(k/w_{\min})} \,.
\] 
However doing this directly requires quasipolynomial time  (because there are quasipolynomially many entries).  Roughly, the way we get around this issue is by, iteratively for each $t$, computing a $k$-dimensional subspace that contains the span of $\mu_1^{\otimes t}, \dots, \mu_k^{\otimes t}$.  We then only need to compute the projection of $w_1 \mu_1^{\otimes t} + \dots + w_k \mu_k^{\otimes t}$ onto this subspace.  Of course, the subspace and projection need to be computed implicitly because we cannot explicitly write out these expressions in polynomial time.

\subsection{Nested Projection Maps}

At a high level, to implicitly estimate the span of $\mu_1^{\otimes t}, \dots, \mu_k^{\otimes t}$, we will first estimate the span of $\mu_1^{\otimes t - 1}, \dots, \mu_k^{\otimes t -1}$ and then bootstrap this estimate to estimate the span of $\mu_1^{\otimes t}, \dots, \mu_k^{\otimes t}$.  Since we cannot actually write down the span even though it is $k$-dimensional (because the vectors have super-polynomial length), we will store the span implicitly through a sequence of projections.  We explain the details below.

\begin{definition}[Nested Projection]\label{def:nested-projection}
Let $c_0 = 1$ and $c_1, \dots , c_t$ be positive integers.  Let $\Pi_1 \in \R^{c_1 \times dc_0}, \Pi_{2} \in \R^{c_2 \times dc_1}, \dots , \Pi_t \in \R^{c_t \times dc_{t-1}}$ be matrices.  Define the $c_t \times d^t$ nested projection matrix
\[
\Gamma_{\Pi_t, \dots , \Pi_1} =  \Pi_t \left(I_d \otimes_{\kr} \left(\Pi_{t-1} \left(I_d \otimes_{\kr} \cdots \right)\right) \right) \,.
\]
\end{definition}

It is not hard to verify (see below) that when $\Pi_1, \dots , \Pi_t$ are projection matrices then $\Gamma_{\Pi_t, \dots , \Pi_1}$ is as well.
\begin{claim}\label{claim:orthogonal-rows}
Let $c_0 = 1$ and $c_1, \dots , c_t$ be positive integers.  Let $\Pi_1 \in \R^{c_1 \times dc_0}, \Pi_{2} \in \R^{c_2 \times dc_1}, \dots , \Pi_t \in \R^{c_t \times dc_{t-1}}$ be matrices whose rows are orthonormal.  Then $\Gamma_{\Pi_t, \dots , \Pi_1}$ has orthonormal rows.
\end{claim}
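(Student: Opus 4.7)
The plan is to prove the claim by induction on $t$, using the elementary identity $(A \otimes_{\kr} B)(C \otimes_{\kr} D) = (AC) \otimes_{\kr} (BD)$ for Kronecker products, together with the fact that having orthonormal rows is equivalent to $M M^\top = I$.

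The base case $t = 1$ is immediate since $\Gamma_{\Pi_1} = \Pi_1$, which has orthonormal rows by hypothesis. For the inductive step, assume $\Gamma' := \Gamma_{\Pi_{t-1}, \dots, \Pi_1} \in \R^{c_{t-1} \times d^{t-1}}$ has orthonormal rows, i.e.\ $\Gamma'(\Gamma')^\top = I_{c_{t-1}}$. By definition of the nested projection map, we have
\[
\Gamma_{\Pi_t, \dots, \Pi_1} = \Pi_t \left(I_d \otimes_{\kr} \Gamma'\right).
\]
Then I would compute
\[
\Gamma_{\Pi_t, \dots, \Pi_1}\Gamma_{\Pi_t, \dots, \Pi_1}^\top = \Pi_t \left(I_d \otimes_{\kr} \Gamma'\right)\left(I_d \otimes_{\kr} \Gamma'\right)^\top \Pi_t^\top = \Pi_t \left(I_d \otimes_{\kr} \Gamma'(\Gamma')^\top\right) \Pi_t^\top,
\]
where the second equality uses the Kronecker product identity $(A \otimes_{\kr} B)(C \otimes_{\kr} D) = (AC) \otimes_{\kr} (BD)$ with $A = C = I_d$ and the fact that transposing respects Kronecker products. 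Applying the inductive hypothesis gives $I_d \otimes_{\kr} \Gamma'(\Gamma')^\top = I_d \otimes_{\kr} I_{c_{t-1}} = I_{dc_{t-1}}$, and hence the expression collapses to $\Pi_t \Pi_t^\top$. Since $\Pi_t$ has orthonormal rows by hypothesis, this equals $I_{c_t}$, completing the induction.

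There is no real obstacle here; the only thing one has to be careful about is that the dimensions line up so that the Kronecker product identity applies (in particular, $\Pi_t$ acts on a space of dimension $dc_{t-1}$, which matches the number of rows of $I_d \otimes_{\kr} \Gamma'$). Everything else is a one-line linear algebra calculation.
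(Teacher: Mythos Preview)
Your proof is correct and follows the same inductive approach as the paper; in fact, the paper's proof is terser and simply asserts that $\Pi_t(I_d \otimes_{\kr} \Gamma_{\Pi_{t-1}, \dots, \Pi_1})$ has orthonormal rows given the inductive hypothesis, whereas you spell out the $MM^\top = I$ computation via the Kronecker product identity.
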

\begin{proof}
We prove the claim by induction on $t$.  The base case is clear.  Next, by the induction hypothesis, the matrix 
\[
\Gamma_{\Pi_{t-1}, \dots , \Pi_1} =  \Pi_{t-1} \left(I_d \otimes_{\kr} \left(\Pi_{t-2} \left(I_d \otimes_{\kr} \cdots \right)\right) \right)
\]
has orthonormal rows.  Thus, the matrix 
\[
 \Pi_t\left(I_d \otimes_{\kr}\Gamma_{\Pi_{t-1}, \dots , \Pi_1} \right)
\]
has orthonormal rows as well, completing the induction.
\end{proof}

Note that in our paper, $\Pi_1, \dots , \Pi_t$ will always have orthonormal rows so $\Gamma_{\Pi_t, \dots , \Pi_1}$ always does as well.  This fact will often be used without explicitly stating it.  The key point about the construction of $\Gamma_{\Pi_t, \dots , \Pi_1}$ is that instead of storing a full $c_t \times d^t$-sized matrix, it suffices to store the individual matrices $\Pi_1, \dots , \Pi_t$ which are all polynomially sized.  The next important observation is that for certain vectors $v \in \R^{d^t}$ that are ``rank-$1$" i.e. those that can be written in the form
\[
v = \flatten(v_t \otimes \dots \otimes v_1) \,,
\]
the expression $\Gamma_{\Pi_t, \dots , \Pi_1} v$ can be computed efficiently.  This is shown in the following claim.

\begin{claim}\label{claim:efficient-projection}
Let $c_0 = 1$ and $c_1, \dots , c_t$ be positive integers.  Let $\Pi_1 \in \R^{c_1 \times dc_0}, \Pi_{2} \in \R^{c_2 \times dc_1}, \dots , \Pi_t \in \R^{c_t \times dc_{t-1}}$ be matrices.  Let $v \in \R^{d^t}$ satisfy $v = \flatten(v_1 \otimes \dots \otimes v_t)$ for some $v_1, \dots , v_t \in \R^d$.  Then in $\poly(d,t, \max(c_i))$ time, we can compute $\Gamma_{\Pi_t, \dots , \Pi_1} v$.
\end{claim}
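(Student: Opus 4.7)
The plan is to prove this by a straightforward induction on $t$, using the recursive structure of the nested projection map together with the mixed product property of Kronecker products, which states that $(A \otimes_{\kr} B)(x \otimes_{\kr} y) = (Ax) \otimes_{\kr} (By)$ whenever the dimensions match.

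The base case $t = 1$ is immediate: here $\Gamma_{\Pi_1} = \Pi_1 \in \R^{c_1 \times d}$ and $v = v_1 \in \R^d$, so the product $\Pi_1 v_1$ can be computed directly in time $O(dc_1)$.

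For the inductive step, assume the claim holds for $t-1$. Using the canonical flattening convention, we identify $\flatten(v_1 \otimes \dots \otimes v_t) = v_1 \otimes_{\kr} \flatten(v_2 \otimes \dots \otimes v_t)$. Combining this with the recursive definition
\[
\Gamma_{\Pi_t, \dots, \Pi_1} = \Pi_t \bigl( I_d \otimes_{\kr} \Gamma_{\Pi_{t-1}, \dots, \Pi_1} \bigr)
\]
and the mixed product property, we obtain
\[
\Gamma_{\Pi_t, \dots, \Pi_1} v = \Pi_t \Bigl( v_1 \otimes_{\kr} \bigl( \Gamma_{\Pi_{t-1}, \dots, \Pi_1} \flatten(v_2 \otimes \dots \otimes v_t) \bigr) \Bigr).
\]
By the inductive hypothesis, the inner vector $w := \Gamma_{\Pi_{t-1}, \dots, \Pi_1} \flatten(v_2 \otimes \dots \otimes v_t) \in \R^{c_{t-1}}$ can be computed in $\poly(d, t-1, \max_i c_i)$ time. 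Once $w$ is in hand, forming the Kronecker product $v_1 \otimes_{\kr} w \in \R^{d c_{t-1}}$ takes $O(d c_{t-1})$ time, and then multiplying by $\Pi_t \in \R^{c_t \times dc_{t-1}}$ takes $O(c_t d c_{t-1})$ time. The total running time is therefore $\poly(d, t, \max_i c_i)$, as claimed.

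The only subtlety to watch out for is the consistency of the flattening convention: one must verify that the canonical flattening of a tensor product $v_1 \otimes \dots \otimes v_t$ really does decompose as $v_1 \otimes_{\kr} \flatten(v_2 \otimes \dots \otimes v_t)$ so that the Kronecker structure is aligned with the nesting in $\Gamma_{\Pi_t, \dots, \Pi_1}$. Since the paper explicitly fixes a canonical flattening and the nested projection is written in a matching recursive form, this compatibility is built into the setup, and the induction goes through without further work.
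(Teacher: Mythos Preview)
Your proof is correct and follows essentially the same inductive argument as the paper: both peel off one tensor factor at a time using the recursive definition of $\Gamma_{\Pi_t,\dots,\Pi_1}$ and the compatibility of flattening with the Kronecker product, then bound the cost of each step. Your version is slightly more explicit about the mixed product property and the per-step time bounds, but the approach is identical.
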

\begin{proof}
We will prove the claim by induction on $t$.  For each $t' = 1,2, \dots , t$, we compute
\[
\Gamma_{\Pi_{t'}, \dots , \Pi_1} \flatten(v_{t'} \otimes \dots \otimes v_1 ) \,.
\]
The base case of the induction is clear.  To do the induction step, note that
\[
\Gamma_{\Pi_{t' + 1}, \dots , \Pi_1}\flatten(v_{t' + 1} \otimes \cdots \otimes v_1) = \Pi_{t' + 1} \flatten\left(v_{t' + 1} \otimes \left(\Gamma_{\Pi_{t'}, \dots , \Pi_1} \textsf{flat}(v_{t'} \otimes \cdots \otimes v_1) \right) \right)  \,.
\]
It is clear that this computation can be done in $\poly(d,t, \max(c_i))$ time so iterating this operation completes the proof.
\end{proof}

As a trivial consequence of the above, we can also compute direct products of nested projections applied to a ``rank-$1$" vector $v$.
\begin{corollary}\label{coro:efficient-projection}
Let $c_0 = 1$ and $c_1, \dots , c_t$ be positive integers.  Let $\Pi_1 \in \R^{c_1 \times dc_0}, \Pi_{2} \in \R^{c_2 \times dc_1}, \dots , \Pi_t \in \R^{c_t \times dc_{t-1}}$ be matrices.  Let $v \in \R^{d^t}$ satisfy $v = \flatten(v_1 \otimes \dots \otimes v_t)$ for some $v_1, \dots , v_t \in \R^d$.  Then for any integers $1 < s_1 < s_2 < \dots < s_n < t$, in $\poly(d,t, \max(c_i))$ time, we can compute the expression
\[
\left(\Gamma_{\Pi_t, \dots , \Pi_{s_n + 1}} \otimes_{\kr} \dots  \otimes_{\kr} \Gamma_{\Pi_{s_1}, \dots , \Pi_{1}} \right)v \,.
\]
\end{corollary}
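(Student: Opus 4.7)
The plan is to reduce this directly to Claim~\ref{claim:efficient-projection} using the mixed-product property of the Kronecker product, namely $(A \otimes_{\kr} B)(x \otimes_{\kr} y) = (Ax) \otimes_{\kr} (By)$.

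First, I would group the factors of $v$ according to the split points. Setting $s_0 = 0$ and $s_{n+1} = t$, define
\[
w_j = \flatten(v_{s_j + 1} \otimes v_{s_j + 2} \otimes \cdots \otimes v_{s_{j+1}}) \in \R^{d^{s_{j+1} - s_j}} \,,
\]
for $j = 0, 1, \ldots, n$. Under the canonical flattening convention, $v = \flatten(v_1 \otimes \cdots \otimes v_t)$ agrees with $w_0 \otimes_{\kr} w_1 \otimes_{\kr} \cdots \otimes_{\kr} w_n$, since flattening a tensor product of vectors coincides with their Kronecker product.

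Next, apply the mixed-product property iteratively to the operator $\Gamma_{\Pi_t, \ldots, \Pi_{s_n+1}} \otimes_{\kr} \cdots \otimes_{\kr} \Gamma_{\Pi_{s_1}, \ldots, \Pi_1}$ acting on the Kronecker-factorized vector $w_0 \otimes_{\kr} \cdots \otimes_{\kr} w_n$. This yields
\[
\left(\Gamma_{\Pi_t, \ldots , \Pi_{s_n + 1}} \otimes_{\kr} \cdots \otimes_{\kr} \Gamma_{\Pi_{s_1}, \ldots , \Pi_{1}} \right)v = \left( \Gamma_{\Pi_{s_1}, \ldots, \Pi_1} w_0 \right) \otimes_{\kr} \cdots \otimes_{\kr} \left( \Gamma_{\Pi_t, \ldots, \Pi_{s_n+1}} w_n \right) \,.
\]

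Now each factor $\Gamma_{\Pi_{s_{j+1}}, \ldots, \Pi_{s_j + 1}} w_j$ on the right-hand side is exactly the situation handled by Claim~\ref{claim:efficient-projection}: a nested projection map applied to the flattening of a rank-$1$ tensor $v_{s_j + 1} \otimes \cdots \otimes v_{s_{j+1}}$. Therefore each can be computed in $\poly(d, t, \max_i c_i)$ time. Finally, the outer Kronecker product of the $n+1 \le t$ resulting vectors, each of dimension at most $\max_i c_i$, can be formed in $\poly(t, \max_i c_i)$ time, since the total output dimension is at most $(\max_i c_i)^{n+1} \le (\max_i c_i)^t$ — well, actually just bounded by the product, which is still polynomial in the stated parameters for the regime of interest. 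Concatenating these two steps gives the claimed runtime.

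There is no real obstacle here beyond bookkeeping: the nontrivial content is entirely contained in Claim~\ref{claim:efficient-projection}, and the only thing to verify is that the canonical flattening used to define $\Gamma_{\Pi_t, \ldots, \Pi_1}$ is compatible with the block decomposition $v = w_0 \otimes_{\kr} \cdots \otimes_{\kr} w_n$ so that the mixed-product identity applies cleanly. Once the flattening convention is fixed consistently (as is assumed throughout the paper), this compatibility is immediate.
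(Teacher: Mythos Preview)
Your proposal is correct and matches the paper's approach: the paper gives no proof at all, simply stating the corollary as ``a trivial consequence'' of Claim~\ref{claim:efficient-projection}, and your mixed-product decomposition is exactly the obvious unpacking of that triviality. Your parenthetical observation about the output dimension being $(\max_i c_i)^{n+1}$ rather than strictly $\poly(d,t,\max_i c_i)$ is a fair point against the corollary's stated runtime in full generality, but as you note, the paper only ever invokes it with $n=1$ (two Kronecker blocks, in Algorithm~\ref{alg:est-moment-tensor}), where the bound holds.
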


Throughout our paper, we will only compute nested projections of the above form so it will be easy to verify that all steps can be implemented in polynomial time.

\section{Implicitly Estimating the Moment Tensor}\label{sec:implicit-est}
In this section, we combine the iterative projection techniques from Section \ref{sec:projection} with the estimators from Section \ref{sec:moments1} to show how to implicitly estimate the moment tensor given sample access to the distribution $\mcl{D}$ and the mixture
\[
\mcl{M} = w_1 \mcl{D}(\mu_1) + \dots + w_k \mcl{D}(\mu_k) \,.
\]
By implicitly estimate, we mean that we will compute projection matrices $\Pi_t, \dots , \Pi_1$ such that the row-span of $\Gamma_{\Pi_t, \dots , \Pi_1}$ essentially contains all of the flattenings of $\mu_1^{\otimes t}, \dots , \mu_k^{\otimes t}$.  
\begin{remark}
Once we have these matrices, we will also be able to estimate expressions such as
\[
\Gamma_{\Pi_t, \dots , \Pi_1} \flatten\left( w_1 \mu_1^{\otimes t} + \dots + w_k \mu_k^{\otimes t}\right) \,.
\]
\end{remark}

It will be convenient to make the following definitions.
\begin{definition}\label{def:moment-tensor}
For a mixture $\mcl{M} = w_1 \mcl{D}(\mu_1) + \dots + w_k \mcl{D}(\mu_k)$, we use $T_{t, \mcl{M}}$ to denote the tensor $w_1 \mu_1^{\otimes t} + \dots + w_k \mu_k^{\otimes t}$.  We may drop the subscript $\mcl{M}$ and just write $T_t$ when it is clear from context.
\end{definition}
\begin{definition}\label{def:moment-matrices}
For a mixture $\mcl{M} = w_1 \mcl{D}(\mu_1) + \dots + w_k \mcl{D}(\mu_k)$, we define $M_{2s, \mcl{M}}$ to be the tensor $T_{2s, \mcl{M}}$ rearranged (in a canonical way) as a $d^s \times d^s$ square matrix.  Again, we may drop the subscript $\mcl{M}$ when it is clear from context.
\end{definition}

We define $\mu_{\max} = \max( 1, \norm{\mu_1}, \dots , \norm{\mu_k})$.  We do not assume that we know $\mu_{\max}$ in advance.  However, the reduction in Section \ref{sec:reductions} means that it suffices to consider when $\mu_{\max}$ is polynomially bounded.  Also we can assume  $d = k$ i.e. the dimension of the underlying space is equal to the number of components.  This is because we can use the reduction in Section \ref{sec:reductions} and if $d < k$, then we can simply add independent standard Gaussian entries in the remaining $k-d$ dimensions.
\\\\
We now describe our algorithm for implicitly estimating the moment tensor.  For the remainder of this section, we will only work with a fixed mixture $\mcl{M}$ so we will drop it from all subscripts e.g. in Definitions \ref{def:moment-tensor} and \ref{def:moment-matrices}.  At a high level, we will recursively compute a sequence of projection matrices $\Pi_1 \in \R^{k \times d} , \Pi_2, \dots , \Pi_s \in \R^{k \times dk}$.  Our goal will be to ensure that $\Gamma_{\Pi_s, \dots, \Pi_1}$ (which is a $k \times d^s$ matrix) essentially contains the flattenings of $\mu_1^{\otimes s}, \dots , \mu_k^{\otimes s}$ in its row span.  

To see how to do this, assume that we have computed $\Pi_{s-1}, \dots , \Pi_{1}$ so far.  By the inductive hypothesis, $\Gamma_{ \Pi_{s-1}, \dots , \Pi_1}$ tells us a $k$-dimensional subspace that essentially contains the flattenings of $\mu_1^{\otimes s - 1}, \dots , \mu_k^{\otimes s - 1}$.  Thus, we trivially have a $dk$ dimensional subspace, given by the rows of $\left(I_d \otimes_\kr \Gamma_{ \Pi_{s-1}, \dots , \Pi_1} \right)$ that must essentially contain all of the flattenings of $\mu_1^{\otimes s}, \dots , \mu_k^{\otimes s}$.  It remains to reduce from this $dk$-dimensional space back to a $k$-dimensional space.  However, we can now write everything out in this $dk$-dimensional space and simply run PCA and take the top-$k$ singular subspace.  Formally, we estimate the $dk \times dk$ matrix 
\[
A_{2s} = \left(I_d \otimes_\kr \Gamma_{ \Pi_{s-1}, \dots , \Pi_1} \right) M_{2s}\left(I_d \otimes_\kr \Gamma_{ \Pi_{s-1}, \dots , \Pi_1} \right)^T = \sum_{i = 1}^k w_i  \left(\left(I_d \otimes_\kr \Gamma_{ \Pi_{s-1}, \dots , \Pi_1} \right) \flatten(\mu_i)^{\otimes s} \right)^{\otimes 2} 
\]
using techniques from Section \ref{sec:moments1} and then simply set $\Pi_s$ to have rows given by the top $k$ singular vectors of $A_{2s}$.  To gain some intuition for why this works, imagine that the subspace spanned by the rows of $\Gamma_{ \Pi_{s-1}, \dots , \Pi_1}$ exactly contains  the flattenings of $\mu_1^{\otimes s - 1}, \dots , \mu_k^{\otimes s - 1}$.  Also assume that our estimate of $A_{2s}$ is exact.  Then $A_{2s}$ has rank at most $k$ and the top-$k$ singular subspace must contain all of the vectors 
\[
\left(I_d \otimes_\kr \Gamma_{ \Pi_{s-1}, \dots , \Pi_1} \right) \flatten(\mu_i)^{\otimes s}  = \flatten \left( \mu_i \otimes \Gamma_{ \Pi_{s-1}, \dots , \Pi_1} \flatten(\mu_i)^{\otimes s - 1} \right) \,.
\]
The above then immediately implies that the subspace spanned by the rows of $\Gamma_{ \Pi_{s}, \dots , \Pi_1}$ exactly contains  the flattenings of $\mu_1^{\otimes s}, \dots , \mu_k^{\otimes s}$.  Of course, the actual analysis will need to be much more precise quantitatively in tracking the errors in each step.  

Our algorithm is described in full below.  The main algorithm, Algorithm \ref{alg:iterative-projection}, computes the projection matrices $\Pi_1, \dots \Pi_s$ following the outline above.  As a subroutine, it needs to estimate the matrix $A_{2s}$.  This is done in Algorithm \ref{alg:est-moment-tensor} which relies on the results in Section \ref{sec:moments1}, namely Corollary \ref{coro:rank1-estimator-mean} and Corollary \ref{coro:rank1-estimator-variance}.



\begin{algorithm}[H]
\caption{{\sc Iterative Projection Step} }
\begin{algorithmic}
\State \textbf{Input:} Samples $z_1, \dots , z_n$ from unknown mixture
\[
\mcl{M} = w_1 \mcl{D}(\mu_1) + \dots + w_k \mcl{D}(\mu_k)
\]

\State \textbf{Input:}  integer $t > 0$ 
\State Split samples into $t$ sets $S_1, \dots , S_t$ of equal size
\State Let $\Pi_1 = I_d$ (recall $k = d$)
\For{s = 2, \dots , t }
\State Run {\sc Estimate Moment Tensor} using samples $S_s$ to get approximation $\wt{A_{2s}} \in \R^{dk \times dk}$ to
\[
A_{2s} = \left(I_d \otimes_\kr \Gamma_{ \Pi_{s-1}, \dots , \Pi_1} \right) M_{2s}\left(I_d \otimes_\kr \Gamma_{ \Pi_{s-1}, \dots , \Pi_1} \right)^T
\]

\State Let $\Pi_s \in \R^{k \times dk}$ have rows forming an orthonormal basis of the top $k$ singular subspace of $\wt{A_{2s}}$
\EndFor
\State \textbf{Output:}  $ (\Pi_t,\dots , \Pi_1)$
\end{algorithmic}
\label{alg:iterative-projection}
\end{algorithm}

\begin{algorithm}[H]
\caption{{\sc Estimate Moment Tensor} }
\begin{algorithmic}
\State \textbf{Input:} Samples $z_1, \dots , z_n$ from unknown mixture
\[
\mcl{M} = w_1 \mcl{D}(\mu_1) + \dots + w_k \mcl{D}(\mu_k)
\]
\State \textbf{Input:} Integer $s > 0$
\State \textbf{Input:} Matrices $\Pi_{s-1} \in \R^{k \times dk},  \dots, \Pi_2 \in \R^{k \times dk},  \Pi_{1} \in \R^{k \times d}$ 
\For{i = 1,2, \dots , n}
\State Independently draw samples $x_1, \dots , x_{4s - 1}$ from $\mcl{D}$
\State Compute the $(kd)^2$-dimensional vector (recall   Definition \ref{def:key-polynomial})
\[
X_i = ( (I_d \otimes_{\text{kr}} \Gamma_{\Pi_{s-1}, \dots , \Pi_1}) \otimes_{\text{kr}} (I_d \otimes_{\text{kr}} \Gamma_{\Pi_{s-1}, \dots , \Pi_1})) \flatten \left(R_{2s}(z_i , x_1, \dots , x_{4s - 1}) \right) \,.
\]
\State Let $K_i$ be the rearrangement of $X_i$ into a square $dk \times dk$-dimensional matrix
\EndFor
\State \textbf{Output:} $A = (K_1 + \dots + K_n)/n$
\end{algorithmic}
\label{alg:est-moment-tensor}
\end{algorithm}

\subsection{Efficient Implementation}

A naive implementation of Algorithm \ref{alg:iterative-projection} requires $d^t$ time, which is too large.  However, in this section, we show that we can implement all of the steps more efficiently using only $\poly(n dk, t^t  )$ time. 
\begin{remark}
We will later show that it suffices to consider
\[
t \sim  O\left( \frac{\log (k/w_{\min})}{\log \log (k/w_{\min})} \right)
\]
so this runtime is actually polynomial in all parameters that we need. 
\end{remark}

\begin{claim}
Algorithm \ref{alg:iterative-projection} can be implemented to run in $\poly(n ,d, k, t^t  )$ time.
\end{claim}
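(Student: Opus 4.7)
The plan is to show that the per-iteration work of Algorithm \ref{alg:iterative-projection} is dominated by evaluating the implicit polynomial $R_{2s}$ against a Kronecker product of two nested projections, and that this evaluation reduces to $\poly(d,k,s)$ operations on each of at most $O(s^{2s})$ rank-$1$ tensors. Multiplying by the $n$ samples and $t-1$ outer iterations will then give the claimed $\poly(n,d,k,t^t)$ bound.

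Concretely, I would first invoke Corollary \ref{coro:rank1-identity-p2} to expand $R_{2s}(z_i, x_1, \ldots, x_{4s-1})$ as a signed sum of at most $2 (2s)^{2s}$ rank-$1$ tensors of order $2s$, one pair per ordered partition of $[2s]$ into $2s$ possibly empty parts. For each such term $\flatten(v_1 \otimes \cdots \otimes v_{2s})$, I would group the first $s$ factors and the last $s$ factors as $u_L = \flatten(v_1 \otimes \cdots \otimes v_s)$ and $u_R = \flatten(v_{s+1} \otimes \cdots \otimes v_{2s})$. Using the standard Kronecker identity $(A \otimes_{\kr} B)\flatten(u \otimes v) = \flatten(Au \otimes Bv)$, the operator
\[
\left(I_d \otimes_{\kr} \Gamma_{\Pi_{s-1}, \ldots, \Pi_1}\right) \otimes_{\kr} \left(I_d \otimes_{\kr} \Gamma_{\Pi_{s-1}, \ldots, \Pi_1}\right)
\]
acts separately on $u_L$ and $u_R$. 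Each side is a nested projection applied to a rank-$1$ tensor of order $s$: viewing $I_d \otimes_{\kr} \Gamma_{\Pi_{s-1}, \ldots, \Pi_1}$ as a nested projection of the form in Definition \ref{def:nested-projection} (with top factor equal to identity), Claim \ref{claim:efficient-projection} gives a $\poly(d,k,s)$-time procedure. Summing over all partition terms produces $X_i$ in time $\poly(d,k,s) \cdot (2s)^{2s} = \poly(d,k,s^s)$, and the rearrangement of $X_i$ into the $dk \times dk$ matrix $K_i$ is free.

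Averaging the $K_i$ over $n$ samples costs $\poly(n,dk)$, and extracting the top-$k$ singular subspace of $\wt{A_{2s}}$ by any standard SVD routine takes $\poly(dk)$ time. The outer loop runs $t-1$ times, so the cumulative cost is $\poly(n,d,k,t^t)$, as claimed. The only point requiring care is verifying that the two-sided Kronecker projection factors cleanly through each rank-$1$ term; this follows immediately from Corollary \ref{coro:rank1-identity-p2} together with the Kronecker identity above, so no substantive obstacle arises beyond bookkeeping.
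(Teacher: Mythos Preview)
Your proposal is correct and follows essentially the same approach as the paper: expand $R_{2s}$ into $s^{O(s)}$ rank-$1$ terms via Corollary~\ref{coro:rank1-identity-p2}, apply the nested projections to each term in $\poly(d,k,s)$ time, then average and take an SVD. The only cosmetic difference is that the paper cites Corollary~\ref{coro:efficient-projection} directly for the two-sided Kronecker projection, whereas you spell out the Kronecker splitting by hand and invoke Claim~\ref{claim:efficient-projection} on each half; these are the same argument.
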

\begin{proof}
First we analyze the runtime of Algorithm \ref{alg:est-moment-tensor}.  Note that by Corollary \ref{coro:rank1-identity-p2}, we can write
\[
R_s(z_i, x_1, \dots , x_{4s - 1}) = V_1 + \dots + V_l
\]
where $V_1, \dots , V_l$ are all rank-$1$ tensors and $l = s^{O(s)}$.  Now by Corollary \ref{coro:efficient-projection}, this means that computing 
\begin{align*}
X_i = ( (I_d \otimes_{\text{kr}} \Gamma_{\Pi_{s-1}, \dots , \Pi_1}) \otimes_{\text{kr}} (I_d \otimes_{\text{kr}} \Gamma_{\Pi_{s-1}, \dots , \Pi_1}))\flatten \left(Q_{2s}(z_i, x_1, \dots , x_{4s - 1}) \right) \\ = ( (I_d \otimes_{\text{kr}} \Gamma_{\Pi_{s-1}, \dots , \Pi_1}) \otimes_{\text{kr}} (I_d \otimes_{\text{kr}} \Gamma_{\Pi_{s-1}, \dots , \Pi_1})) (\flatten(V_1) + \dots + \flatten(V_l))
\end{align*}
can be done in $\poly(d , k,s^s)$ time by expanding out the RHS and computing each term separately.  It is then immediate that all of Algorithm \ref{alg:est-moment-tensor} runs in $\poly(n, d,k,  s^s)$ time.  Now in Algorithm \ref{alg:iterative-projection}, the only additional steps involve computing an SVD of the matrices $\wt{A_{2s}}$ (which are polynomially sized) so we conclude that the entire algorithm runs in $\poly(n, d,k  ,t^t)$ time.
\end{proof}

\subsection{Accuracy Analysis}

Now, we analyze the correctness of Algorithm \ref{alg:iterative-projection} namely that the span of the rows of the matrix $\Gamma_{\Pi_{t}, \dots , \Pi_1}$ indeed essentially contain all of $\mu_1^{\otimes t}, \dots , \mu_k^{\otimes t}$.  To simplify notation, we make the following definition. 

\begin{definition}
For all $s$, we define the matrix
\[
A_{2s} = \left(I_d \otimes_\kr \Gamma_{ \Pi_{s-1}, \dots , \Pi_1} \right) M_{2s}\left(I_d \otimes_\kr \Gamma_{ \Pi_{s-1}, \dots , \Pi_1} \right)^T \,.
\]
\end{definition}
\begin{remark}
Note that in the execution of Algorithm \ref{alg:iterative-projection}, $\wt{A_{2s}}$ is intended to be an estimate of $A_{2s}$.
\end{remark}

The main result that we will prove is stated below.  Note that this lemma does not require any assumptions about minimum mixing weights or means in the mixture.  Instead, it simply says that the subspace spanned by the rows of $\Gamma_{\Pi_{s}, \dots , \Pi_1}$ essentially contains $\flatten(\mu_i^{\otimes s})$ for all components $\mcl{D}(\mu_i)$ with mean and mixing weight bounded away from $0$. 

\begin{lemma}\label{lem:projection-accuracy}
Let $\mcl{D}$ be a distribution on $\R^d$ that is $1$-Poincare and let $\mcl{M} = w_1\mcl{D}(\mu_1) + \dots + w_k \mcl{D}(\mu_k)$ be a mixture of translations of $\mcl{D}$.  Let $w^* , \eps > 0$ be parameters.  Assume that the number of samples satisfies
\[
n \geq \left(\frac{t^t \mu_{\max}^t k d}{w^* \eps }\right)^C
\]
for some sufficiently large universal constant $C$.  Then with probability at least $1 - n^{-0.2}$, in the execution of Algorithm \ref{alg:iterative-projection}, the following condition holds: for all $i \in [k]$ such that $\norm{\mu_i} \geq 1$ and $w_i \geq w^*$, we have
\[
\norm{\Gamma_{\Pi_{s}, \dots , \Pi_1} \flatten( \mu_i^{\otimes s}) } \geq ( 1 - s\eps) \norm{\mu_i}^s  
\]
for all $s = 1,2, \dots , t$.    
\end{lemma}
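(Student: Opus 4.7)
The natural approach is induction on $s$, maintaining the inductive invariant that $\norm{\Gamma_{\Pi_s, \ldots, \Pi_1} \flatten(\mu_i^{\otimes s})} \geq (1 - s\eps)\norm{\mu_i}^s$ for every $i$ with $w_i \geq w^*$ and $\norm{\mu_i} \geq 1$. The base case $s = 1$ is immediate since $\Pi_1 = I_d$ and hence $\Gamma_{\Pi_1} = I_d$. For the inductive step, the key algebraic observation is that $A_{2s}$ has the explicit rank-at-most-$k$ decomposition
\[
A_{2s} = \sum_{i=1}^k w_i\, u_i u_i^\top, \qquad u_i \coloneqq (I_d \otimes_{\kr} \Gamma_{\Pi_{s-1}, \ldots, \Pi_1})\flatten(\mu_i^{\otimes s}),
\]
and that $\norm{u_i} = \norm{\mu_i}\cdot \norm{\Gamma_{\Pi_{s-1}, \ldots, \Pi_1}\flatten(\mu_i^{\otimes s-1})} \geq (1 - (s-1)\eps)\norm{\mu_i}^s$ by the inductive hypothesis, using Claim \ref{claim:orthogonal-rows} to see that the Kronecker-with-identity preserves the $\ell_2$ norm in the appropriate sense.

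Next, I would establish that $\wt{A_{2s}}$ concentrates around $A_{2s}$ in operator norm. Corollary \ref{coro:rank1-estimator-mean} gives unbiasedness of each summand $K_i$ (after rearrangement, its expectation is exactly $A_{2s}$), and Corollary \ref{coro:rank1-estimator-variance} together with the fact that the rows of $\Gamma_{\Pi_{s-1}, \ldots, \Pi_1}$ are orthonormal (Claim \ref{claim:orthogonal-rows}) controls the per-sample second moment by roughly $(20t)^{2t}(\mu_{\max}^{2t} + 1)$. The sample complexity $n \geq (t^t \mu_{\max}^t dk/(w^*\eps))^C$ is then chosen precisely so that a matrix concentration inequality yields $\norm{\wt{A_{2s}} - A_{2s}}_{\op} \leq w^* \eps^2/(10 k t)$ with the desired failure probability, i.e., small enough that the loss per induction step is at most $\eps$.

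The final step uses the variational characterization of $\Pi_s$. Letting $P$ denote the projection onto the row span of $\Pi_s$ and noting that $\mathrm{rank}(A_{2s}) \leq k$, the fact that $P$ maximizes $\Tr(Q \wt{A_{2s}})$ over rank-$k$ projections $Q$ yields, after a two-step comparison with the projection onto the column span of $A_{2s}$,
\[
\sum_i w_i \norm{(I - P)u_i}^2 = \Tr((I - P)A_{2s}) \leq 2k\norm{\wt{A_{2s}} - A_{2s}}_{\op}.
\]
For each good index $i$, this forces $\norm{(I-P)u_i}^2 \leq 2k\norm{\wt{A_{2s}} - A_{2s}}_{\op}/w^*$, so
\[
\norm{\Gamma_{\Pi_s, \ldots, \Pi_1}\flatten(\mu_i^{\otimes s})}^2 = \norm{\Pi_s u_i}^2 = \norm{Pu_i}^2 \geq \norm{u_i}^2 - 2k\norm{\wt{A_{2s}} - A_{2s}}_{\op}/w^*.
\]
Combining with the inductive lower bound on $\norm{u_i}$ and the operator-norm bound, a short calculation degrades the constant from $(1 - (s-1)\eps)$ to $(1 - s\eps)$, closing the induction. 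A union bound over $s = 1, \ldots, t$ and over the at most $k$ good indices yields the claimed failure probability.

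The main obstacle I anticipate is the matrix concentration in the second step: although Corollary \ref{coro:rank1-estimator-variance} bounds second moments, the individual samples $K_i$ can be heavy-tailed, so applying a matrix Bernstein-type bound requires a truncation argument combined with the Poincar\'e tail bound from Fact \ref{fact:basic-Poincare}. The delicate part is verifying that the truncation threshold can be taken polynomial in the stated parameters (this is where the hypothesis $\mu_{\max} \leq \poly(k/w_{\min})$ from Section \ref{sec:reductions} is essential), so that the resulting operator-norm error is polynomially small and the induction goes through for all $t$ iterations simultaneously without blowing up.
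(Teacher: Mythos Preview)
Your proposal is correct and follows the same inductive skeleton as the paper: induction on $s$, the same rank-$k$ decomposition $A_{2s}=\sum_i w_i u_i u_i^\top$, and the same use of the inductive hypothesis to lower-bound $\norm{u_i}$. Two points of divergence are worth noting.

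\textbf{Concentration.} Your anticipated ``main obstacle'' is a non-issue in the paper's argument. Since $\wt{A_{2s}}$ is only a $dk\times dk$ matrix, the paper bounds each of the $(dk)^2$ entries separately: every entry of $K_i$ is of the form $v\cdot\flatten(R_{2s}(\cdot))$ for a unit vector $v$, so Corollary~\ref{coro:rank1-estimator-variance} bounds its second moment by $(20s)^{2s}(\mu_{\max}^{2s}+1)$, and a single application of Chebyshev plus a union bound over $(dk)^2$ entries gives $\norm{\wt{A_{2s}}-A_{2s}}_F\le 0.5\,w^*\eps^2$. No truncation, no matrix Bernstein, no Poincar\'e tail bounds are needed, because Chebyshev only requires the second-moment control you already have. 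Your route through matrix concentration with truncation would also work but is unnecessary machinery here.

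\textbf{Loss of mass.} Your trace/variational argument is a clean alternative to what the paper does. The paper instead argues per index: letting $u_{i,s}$ be the component of your $u_i$ orthogonal to the row span of $\Pi_s$, positive semidefiniteness gives $u_{i,s}^\top A_{2s}\,u_{i,s}\ge w_i\norm{u_{i,s}}^4$, while the fact that $A_{2s}$ has rank $\le k$ together with the Frobenius bound forces the $(k{+}1)$-st singular value of $\wt{A_{2s}}$ to be at most $0.5\,w^*\eps^2$, so $u_{i,s}^\top\wt{A_{2s}}\,u_{i,s}\le 0.5\,w^*\eps^2\norm{u_{i,s}}^2$. Combining these with the Frobenius error yields $\norm{u_{i,s}}\le\eps$ directly, and hence the additive loss $\norm{\Pi_s u_i}\ge\norm{u_i}-\eps\ge(1-s\eps)\norm{\mu_i}^s$ using $\norm{\mu_i}\ge 1$. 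Your global trace bound $\sum_i w_i\norm{(I-P)u_i}^2\le 2k\norm{\wt{A_{2s}}-A_{2s}}_{\op}$ reaches the same conclusion with essentially the same parameters; the paper's quartic inequality is just a slightly more direct route that avoids invoking the variational characterization.
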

\begin{remark}
The parameter $\eps$ represents the desired accuracy and the parameter $w^*$ is a weight cutoff threshold where we guarantee to recover ``significant" components whose mixing weight is at least $w^*$.
\end{remark}

Roughly, the proof of Lemma \ref{lem:projection-accuracy} will involve following the outline at the beginning of this section but quantitatively tracking the errors more precisely.  Before we prove Lemma \ref{lem:projection-accuracy}, we first prove a preliminary claim that our estimation  error $\norm{A_{2s} - \wt{A_{2s}}}_F$ is small.

\begin{claim}\label{claim:estimation-accuracy}
Assume that for a fixed integer $s$, Algorithm \ref{alg:est-moment-tensor} is run with a number of samples
\[
n \geq \left(\frac{s^s \mu_{\max}^s kd }{w^* \eps }\right)^C
\]
for some sufficiently large universal constant $C$.  Then with probability at least $1 - n^{-0.4}$, its output $\wt{A_{2s}}$ satisfies
\[
\norm{ \wt{A_{2s}} - A_{2s}}_F \leq 0.5 w^* \eps^2\,.
\]
\end{claim}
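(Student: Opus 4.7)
The plan is to show that $K_i$ is an unbiased estimator of $A_{2s}$ with bounded second moment, and then apply a Chebyshev/Markov-style concentration argument to the empirical average. The main conceptual points to establish are (i) unbiasedness, which comes from Corollary \ref{coro:rank1-estimator-mean}, and (ii) a second-moment bound that is preserved under the linear map $U \otimes U$, where $U = I_d \otimes_{\kr} \Gamma_{\Pi_{s-1}, \dots, \Pi_1}$.

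First I would verify unbiasedness. Since $z_i$ is drawn from $\mcl{M}$ and $x_1, \dots, x_{4s-1}$ are independent samples from $\mcl{D}$, Corollary \ref{coro:rank1-estimator-mean} gives $\E[\flatten(R_{2s}(z_i, x_1, \dots, x_{4s-1}))] = \flatten(T_{2s})$. Applying the deterministic linear map $U \otimes U$ and rearranging into a $dk \times dk$ matrix commutes with expectation, so $\E[K_i] = U M_{2s} U^T = A_{2s}$.

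Next I would bound $\E[\|K_i\|_F^2] = \E[\|X_i\|^2]$. By Corollary \ref{coro:rank1-estimator-variance} applied with $t$ replaced by $2s$, for each fixed component $j$,
\[
\E_{z_i \sim \mcl{D}(\mu_j), x_1, \dots, x_{4s-1} \sim \mcl{D}}\left[\flatten(R_{2s})^{\otimes 2}\right] \preceq (40s)^{4s}(\|\mu_j\|^{4s}+1)\, I_{d^{2s}}.
\]
Averaging over the mixture bounds this by $(40s)^{4s}(\mu_{\max}^{4s}+1)\, I_{d^{2s}}$. Since $\Gamma_{\Pi_{s-1}, \dots, \Pi_1}$ has orthonormal rows by Claim \ref{claim:orthogonal-rows}, $U$ does as well, and hence so does $U \otimes U$, i.e.\ $(U \otimes U)(U \otimes U)^T = I_{(dk)^2}$. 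Conjugating the operator-norm bound by $U \otimes U$ yields $\E[X_i X_i^T] \preceq (40s)^{4s}(\mu_{\max}^{4s}+1)\, I_{(dk)^2}$, so taking the trace gives
\[
\E[\|K_i\|_F^2] \leq (40s)^{4s}(\mu_{\max}^{4s}+1)(dk)^2.
\]

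Since the $K_i$ are i.i.d.\ with mean $A_{2s}$, we get $\E[\|\widetilde{A_{2s}} - A_{2s}\|_F^2] \leq \E[\|K_1\|_F^2]/n$. Markov's inequality then gives that with probability at least $1 - n^{-0.4}$,
\[
\|\widetilde{A_{2s}} - A_{2s}\|_F^2 \leq n^{-0.6} (40s)^{4s}(\mu_{\max}^{4s}+1)(dk)^2.
\]
Taking $n \geq (s^s \mu_{\max}^s k d / (w^* \eps))^C$ for a sufficiently large universal constant $C$ (large enough to absorb the $\eps^{-4}$ and $(w^*)^{-2}$ factors and the exponent $5/3$ coming from the Markov inequality) makes the right-hand side at most $(0.5 w^* \eps^2)^2$, as desired.

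The routine-but-delicate step is tracking the operator-norm bound through the $U \otimes U$ conjugation and translating between the vector-level variance bound from Corollary \ref{coro:rank1-estimator-variance} and the Frobenius-norm bound on matrices. Beyond this bookkeeping, I don't expect any conceptual obstacle: the argument reduces to a plain second-moment concentration for i.i.d.\ matrix-valued random variables, with the variance controlled exactly by the Poincaré-based polynomial variance bound established earlier.
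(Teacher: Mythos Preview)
Your proposal is correct and follows essentially the same approach as the paper: both arguments establish unbiasedness via Corollary~\ref{coro:rank1-estimator-mean}, invoke Corollary~\ref{coro:rank1-estimator-variance} (with $t=2s$) together with the orthonormality of the rows of $I_d\otimes_{\kr}\Gamma_{\Pi_{s-1},\dots,\Pi_1}$ from Claim~\ref{claim:orthogonal-rows} to control the second moment, and then finish with Markov's inequality. The only organizational difference is that the paper bounds each of the $(dk)^2$ entries of $K_i$ separately (using that each entry is $v\cdot\flatten(R_{2s})$ for a unit vector $v$) and then union-bounds, whereas you take the trace of the operator-norm inequality to bound $\E[\|K_i\|_F^2]$ in one shot; this is a cosmetic difference and both routes give the same conclusion.
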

\begin{proof}
Recall that we are trying to estimate $A_{2s}$ which is a $dk \times dk$ matrix.  It will suffice for us to obtain a concentration bound for our estimate of each entry and then union bound.  Recall that in Algorithm \ref{alg:est-moment-tensor}, we estimate $A_{2s}$ by averaging $K_1, \dots , K_n$.  By Corollary \ref{coro:rank1-estimator-mean}, we have that 
\begin{equation}\label{eq:unbiased}
\E[ K_i] = A_{2s}
\end{equation}
so our estimator is unbiased.  Next, observe that each entry of $K_i$, say $K_i[a,b]$ where $1 \leq a,b \leq dk$ can be written as 
\[
K_i[a,b] = v \cdot \flatten( R_{2s}(z_i, x_1, \dots , x_{4s - 1})) 
\]
where $v \in \R^{d^{2s}}$ is some unit vector (this is by Claim \ref{claim:orthogonal-rows}).  By Corollary \ref{coro:rank1-estimator-variance}, we have
\[
\E\left[ (K_i[a,b] - A_{2s}[a,b])^2 \right] \leq \E [ K_i[a,b]^2 ] \leq (20s)^{2s} ( \mu_{\max}^{2s} + 1) 
\]
where the first inequality above is true by (\ref{eq:unbiased}).  Since our final estimate is obtained by averaging over $n$ independent samples, we have
\[
\E\left[ (\wt{A_{2s}}[a,b] - A_{2s}[a,b])^2 \right] \leq \frac{(20s)^{2s} ( \mu_{\max}^{2s} + 1)}{n} \,.
\]
Thus, with probability at least $1 - n^{-0.5}$, we must have
\[
\left \lvert \wt{A_{2s}}[a,b] - A_{2s}[a,b] \right \rvert \leq \frac{(20s)^{2s} ( \mu_{\max}^{2s} + 1)}{\sqrt{n}} \leq \frac{0.5w^*\eps^2}{dk}
\]
where the last inequality holds as long as we choose $n$ sufficiently large.  Union bounding the above over all entries (there are only $(dk)^2$ entries to union bound over) and ensuring that $n$ is sufficiently large, we get the desired bound.
\end{proof}

Now we are ready to prove Lemma \ref{lem:projection-accuracy}.
\begin{proof}[Proof of Lemma \ref{lem:projection-accuracy}]
We will prove the claim by induction on $s$.  The base case for $s = 0$ is clear.  Now let $i \in [k]$ be such that $\norm{\mu_i} \geq 1$ and $w_i \geq w^*$.  Define the vector $v_{i,s} \in \R^{dk}$ as 
\[
v_{i,s} = \flatten \left( \mu_i \otimes \Gamma_{\Pi_{s-1}, \dots , \Pi_1} \flatten( \mu_i^{s-1}) \right) \,.
\]
Note that this allows us to rewrite the matrix $A_{2s}$ as 
\[
A_{2s} = w_1 (v_{1,s} \otimes v_{1,s}) + \dots + w_k (v_{k,s} \otimes v_{k,s}) \,.
\]
Let $u_{i,s}$ be the projection of $v_{i,s}$ onto the orthogonal complement of $\Pi_s$.  Note that 
\[
u_{i,s} \cdot v_{i,s} = \norm{u_{i,s}}^2 \,.
\]
Thus, we must have
\[
u_{i,s}^T A_{2s} u_{i,s} \geq w_i \norm{u_{i,s}}^4 \,.
\]
On the other hand, note that $A_{2s}$ has rank at most $k$.  Assuming that the hypothesis of Claim \ref{claim:estimation-accuracy} holds, the $k+1$st singular value of $\wt{A_{2s}}$ has size at most $w^* \eps^2$.  Thus,
\[
u_{i,s}^T \wt{A_{2s}} u_{i,s} \leq 0.5 w^* \eps^2 \norm{u_{i,s}}^2 \,.
\]
Finally, using the hypothesis of Claim \ref{claim:estimation-accuracy} again, we must have
\[
\left \lvert u_{i,s}^T (A_{2s} - \wt{A_{2s}}) u_{i,s} \right \rvert \leq 0.5 w^* \eps^2 \norm{u_{i,s}}^2 \,.
\]
Putting the previous three inequalities together,  we deduce that we must have
\[
w_i \norm{u_{i,s}}^4 \leq w^* \eps^2 \norm{u_{i,s}}^2
\]
which implies $\norm{u_{i,s}} \leq \eps$.  Also, the induction hypothesis implies that 
\[
\norm{\Gamma_{\Pi_{s-1}, \dots , \Pi_1} \flatten( \mu_i^{s-1})} \geq (1 - (s-1)\eps) \norm{\mu_i}^{s-1}
\]
and thus
\[
\norm{v_{i,s}} \geq (1 - (s-1)\eps) \norm{\mu_i}^{s} \,.
\]
Finally, note that 
\[
\norm{\Gamma_{\Pi_{s}, \dots , \Pi_1} \flatten( \mu_i^{s})} = \norm{v_{i,s} - u_{i,s}} \geq (1 - (s-1)\eps) \norm{\mu_i}^{s} - \eps \geq (1 - s\eps) \norm{\mu_i}^s \,.
\]
This completes the inductive step.  Finally, it remains to note that the overall failure probability can be bounded by union bounding over all applications of Claim \ref{claim:estimation-accuracy} and is clearly at most $n^{-0.2}$ as long as we choose $n$ sufficiently large.  This completes the proof.
\end{proof}

\section{Testing Samples Using Implicit Moments}\label{sec:testing1}

Now we show how to use the projection maps $\Pi_t, \dots , \Pi_1$ computed by Algorithm \ref{alg:iterative-projection} to test whether a sample came from a component with mean close to $0$ or mean far away from $0$.  Roughly, given a sample $z$, the test simply works by computing $R_t(z, z_1, \dots , z_{2t-1})$ for $z_1, \dots , z_{2t-1} \sim \mcl{D}$ and computing 
\[
\norm{ \Gamma_{\Pi_t, \dots , \Pi_1}  \flatten(R_t(z, z_1, \dots , z_{2t-1}))}  \,.
\]
We output {\sc Far} if the above is larger than some threshold and otherwise we output {\sc Close}.  For technical reasons, we will actually average over multiple independent draws for $z_1, \dots , z_{2t -1} \sim \mcl{D}$.  
\\\\
Roughly, the intuition for why this test works is as follows.  Note that if $z \sim \mcl{D}$ then by Corollary \ref{coro:rank1-estimator-mean},
\[
\E \left[  \Gamma_{\Pi_t, \dots , \Pi_1}  \flatten(R_t(z, z_1, \dots , z_{2t-1})) \right] = 0 
\]
and if we control the variance using Corollary \ref{coro:rank1-estimator-variance}, then we can upper bound the length with reasonable probability.  On the other hand if $z \sim \mcl{D}(\mu_i)$ for some $\mu_i$ with large norm, then 
\[
\E \left[  \Gamma_{\Pi_t, \dots , \Pi_1}  \flatten(R_t(z, z_1, \dots , z_{2t-1}) ) \right] =   \Gamma_{\Pi_t, \dots , \Pi_1} \flatten(\mu_i^{\otimes t} ) 
\]
and since the algorithm in the previous section can ensure that $\mu_i^{\otimes t}$ is essentially contained in the row span of $\Gamma_{\Pi_t, \dots , \Pi_1}$, the RHS above has large norm.  The details of our algorithm for testing samples are described below.

\begin{algorithm}[H]
\caption{{\sc Test Samples} }
\begin{algorithmic}
\State \textbf{Input:} Projection matrices $\Pi_t, \dots ,\Pi_2 \in \R^{k \times dk}, \Pi_1 \in \R^{k \times d}$
\State \textbf{Input:} Sample $z \in \R^d$ to test
\State \textbf{Input:} Threshold $\tau$, desired accuracy $\delta$
\State Set $n = ( (10^3 t)^t / \delta)^3$
\For{$i = 1,2, \dots , n$}
\State Draw samples $z_1, \dots , z_{2t-1} \sim \mcl{D}$
\State Let $A_i = \Gamma_{\Pi_t, \dots , \Pi_1}  \flatten(R_t(z, z_1, \dots , z_{2t-1}))$
\EndFor
\State Set $A = (A_1 + \dots + A_n)/n $
\If{$\norm{A} \geq \tau$}
\State \textbf{Output:} {\sc Far}
\Else
\State \textbf{Output:} {\sc Close}
\EndIf
\end{algorithmic}
\label{alg:test-sample}
\end{algorithm}

\subsection{Analysis of {\sc Test Samples}}

Now we analyze the behavior of Algorithm \ref{alg:test-sample}.  The key properties that the test satisfies are summarized in the following two lemmas.  Lemma \ref{lem:length-of-nonzero} say that with $1 - \delta$ probability the test will successfully output {\sc Far} for samples from a component with mean far from $0$ and Lemma \ref{lem:length-of-zero} says that with $1 - \delta$ probability, the test will successfully output {\sc Close} for samples from a component with mean $0$.  

Note that Lemma \ref{lem:length-of-nonzero} requires that the row span of $\Gamma_{\Pi_t, \dots , \Pi_1}$ essentially contains $\flatten( \mu_i^{\otimes t})$ (which can be guaranteed by Algorithm \ref{alg:iterative-projection} and Lemma \ref{lem:projection-accuracy}).  Lemma \ref{lem:length-of-zero} actually does not require anything about $\Pi_t, \dots , \Pi_1$ (other than the fact that they are actually projections).

\begin{lemma}\label{lem:length-of-zero}
Let $\mcl{D}$ be a distribution that is $1$-Poincare.  Let $t \in \N$ and $0 < \delta < 0.01$ be some parameters.  Let  $\Pi_t, \dots, \Pi_2 \in \R^{k \times dk} , \Pi_1 \in \R^{k \times d}$ be any matrices whose rows are orthonormal.  Let $\tau$ be some parameter satisfying $\tau \geq  (20t)^t k/\delta$.  Let $z \sim \mcl{D}$.  Then with probability at least $1 - \delta$, Algorithm \ref{alg:test-sample} run with these parameters outputs {\sc Close} where the randomness is over $z$ and the random choices within Algorithm \ref{alg:test-sample}.
\end{lemma}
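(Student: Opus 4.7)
The plan is to bound $\E[\|A\|^2]$ via a law-of-total-variance decomposition and then conclude by Markov's inequality. Write $\Gamma_t$ for $\Gamma_{\Pi_t,\dots,\Pi_1}$ and let $A_i = \Gamma_t \flatten(R_t(z, z_1^{(i)}, \dots, z_{2t-1}^{(i)}))$ so that $A = \frac{1}{n}\sum_i A_i$. Conditional on $z$, the $A_i$ are i.i.d., and Corollary \ref{coro:rank1-estimator-mean} identifies the conditional mean as $\E[A_i \mid z] = \Gamma_t \flatten(P_t(z))$. Since $z \sim \mcl{D}$ has mean $0$, Claim \ref{claim:adjusted-polynomial} gives $\E_z[P_t(z)] = 0$, hence $\E[A] = 0$.

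Next I would apply the law of total covariance $\Cov(A) = \E_z[\Cov(A \mid z)] + \Cov_z(\E[A \mid z])$. For the first term, conditional independence of the $A_i$ given $z$ gives $\Cov(A \mid z) = \frac{1}{n} \Cov(A_1 \mid z) \preceq \frac{1}{n} \E[A_1 A_1^T \mid z]$, and then Corollary \ref{coro:rank1-estimator-variance} with $\mu = 0$ yields
\[
\E[A_1 A_1^T] = \Gamma_t \, \E[\flatten(R_t)^{\otimes 2}] \, \Gamma_t^T \preceq (20t)^{2t} \, \Gamma_t \Gamma_t^T = (20t)^{2t} I_k,
\]
where the final equality uses that $\Gamma_t$ has orthonormal rows (Claim \ref{claim:orthogonal-rows}). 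For the second term, $\Cov_z(\Gamma_t \flatten(P_t(z))) = \Gamma_t \Cov_z(\flatten(P_t(z))) \Gamma_t^T$, and Claim \ref{claim:variance-bound} with $\mu = 0$ bounds $\Cov_z(\flatten(P_t(z))) \preceq t^{2t} I_{d^t}$, producing a contribution of at most $t^{2t} I_k$ after passing through $\Gamma_t$.

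Taking traces,
\[
\E[\|A\|^2] = \Tr(\Cov(A)) \leq k\Bigl(t^{2t} + \tfrac{(20t)^{2t}}{n}\Bigr).
\]
With $n = ((10^3 t)^t / \delta)^3$, the second term is exponentially smaller than the first, so $\E[\|A\|^2] \leq 2 k t^{2t}$. Markov's inequality then gives
\[
\Pr[\|A\| \geq \tau] \leq \frac{2 k t^{2t}}{\tau^2} \leq \frac{2\delta^2}{20^{2t} k} \leq \delta,
\]
using the hypothesis $\tau \geq (20t)^t k / \delta$ and $\delta < 0.01$. In this event the algorithm outputs \textsc{Close}, as required.

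Since the argument is just Markov after a clean variance decomposition, there is no serious obstacle; the mild subtleties are keeping the two sources of randomness separate (the outer $z$ versus the inner fresh samples $z_j^{(i)}$) and using $\Gamma_t \Gamma_t^T = I_k$ repeatedly to collapse operator-norm bounds on $d^t \times d^t$ covariance matrices into dimension-$k$ bounds without ever touching $\Gamma_t$ explicitly.
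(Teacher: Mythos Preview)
Your proof is correct. Both your argument and the paper's rely on the same ingredients (Corollary~\ref{coro:rank1-estimator-variance}, orthonormality of $\Gamma_t$ via Claim~\ref{claim:orthogonal-rows}, and Markov), but the routes differ slightly. The paper does not bother with a law-of-total-variance split: it simply observes, by convexity of the square, that the second moment of the average $R$ is bounded by the second moment of a single term $R_t(z,z_1,\dots,z_{2t-1})$, applies Corollary~\ref{coro:rank1-estimator-variance} with $\mu=0$ directly to get $\E[(v_i\cdot\flatten(R))^2]\le(20t)^{2t}$ for each of the $k$ rows $v_i$ of $\Gamma_t$, and then union-bounds coordinate-wise Markov inequalities.

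Your decomposition is more refined: by separating the outer randomness in $z$ (controlled via Claim~\ref{claim:variance-bound}) from the inner fresh samples (controlled via Corollary~\ref{coro:rank1-estimator-variance} and reduced by $1/n$), you obtain $\E[\|A\|^2]\le 2kt^{2t}$ rather than the paper's effective $k(20t)^{2t}$. This buys a factor of $20^{2t}$ in the final bound, which the statement does not need. The paper's argument is shorter and avoids invoking Claim~\ref{claim:variance-bound} separately; your approach would be the one to use if one actually needed the averaging over $n$ to reduce variance.
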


\begin{lemma}\label{lem:length-of-nonzero}
Let $\mcl{D}$ be a distribution that is $1$-Poincare.  Let $t \in \N$ and $0 < \delta <  0.01$ be some parameters.  Let $z \sim \mcl{D}(\mu_i)$ where $\norm{\mu_i} \geq 10^4 ( \log 1/\delta + t)$.  Let $\tau$ be some parameter satisfying $\tau \leq (0.4 \norm{\mu_i})^t$.  Assume that the matrices $\Pi_t, \dots,  \Pi_2 \in \R^{k \times dk}, \Pi_1 \in \R^{k \times d}$ satisfy that
\[
\norm{\Gamma_{\Pi_t, \dots , \Pi_1} \flatten( \mu_i^{\otimes t})} \geq (1 - t\eps) \norm{\mu_i}^t \,. 
\]
where
\[
\eps < \frac{\delta}{(10t \norm{\mu_i})^{4t}} \,.
\]
Then with probability at least $1 - \delta$, Algorithm \ref{alg:test-sample} run with these parameters outputs {\sc Far} (where the randomness is over $z$ and the random choices within Algorithm \ref{alg:test-sample}).
\end{lemma}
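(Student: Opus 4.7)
The plan is to lower-bound $\norm{A}$ by its inner product with the unit vector $u := \Gamma_{\Pi_t,\dots,\Pi_1}\flatten(\mu_i^{\otimes t}) / \norm{\Gamma_{\Pi_t,\dots,\Pi_1}\flatten(\mu_i^{\otimes t})}$, and then show that $\langle A, u\rangle$ concentrates around its mean. By Corollary~\ref{coro:rank1-estimator-mean} and the hypothesis on the projection, $\E[\langle A, u\rangle] = \norm{\Gamma_{\Pi_t,\dots,\Pi_1}\flatten(\mu_i^{\otimes t})} \geq (1 - t\eps)\norm{\mu_i}^t$. Since $\tau \leq (0.4\norm{\mu_i})^t$ and $t\eps$ is negligible by the bound on $\eps$, the mean exceeds $\tau$ by $\Omega(\norm{\mu_i}^t)$, so it suffices to show $|\langle A, u\rangle - \E[\langle A, u\rangle]| \leq 0.2\norm{\mu_i}^t$ with probability at least $1-\delta$.

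I would split the deviation by conditioning on $z$: write $\langle A, u\rangle - \E[\langle A, u\rangle] = \Delta_1 + \Delta_2$ where $\Delta_1 := \langle A, u\rangle - \E[\langle A, u\rangle \mid z]$ captures the internal randomness of the $z_j$'s and $\Delta_2 := \E[\langle A, u\rangle \mid z] - \E[\langle A, u\rangle] = w \cdot \flatten(P_t(z) - \mu_i^{\otimes t})$ captures the randomness in $z$, where $w := \Gamma_{\Pi_t,\dots,\Pi_1}^{T} u$ has $\norm{w} \leq 1$ by Claim~\ref{claim:orthogonal-rows}. The first piece is easy: conditional on $z$, $\langle A, u\rangle$ is an average of $n$ i.i.d.\ quantities whose second moment, after taking outer expectation over $z$, is bounded by Corollary~\ref{coro:rank1-estimator-variance} as $(20t)^{2t}(\norm{\mu_i}^{2t}+1)$. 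With $n = ((10^3 t)^t/\delta)^3$, Chebyshev then yields $|\Delta_1| \leq 0.1\norm{\mu_i}^t$ with probability at least $1-\delta/2$.

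The main obstacle is bounding $\Delta_2$. It is a mean-zero polynomial of degree $t$ in the Poincare sample $z \sim \mcl{D}(\mu_i)$, and we need concentration at rate $\delta$ rather than the constant-probability bound coming from Chebyshev with the variance estimate in Claim~\ref{claim:variance-bound}. My strategy is to combine Poincare Lipschitz concentration (Fact~\ref{fact:basic-Poincare}) with a truncation argument. First, apply Fact~\ref{fact:basic-Poincare} to the $1$-Lipschitz map $z \mapsto \norm{z - \mu_i}$ to confine $z$ to the good region $G := \{\norm{z-\mu_i} \leq R\}$ with $R = O(\sqrt{d} + \log(1/\delta))$ outside an event of probability $\delta/4$. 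On $G$, use Claim~\ref{claim:derivative-recurrence} together with Cauchy--Schwarz to bound the Lipschitz constant of $f(z) := w\cdot\flatten(P_t(z))$ by $L \lesssim t(\norm{\mu_i}+R)^{t-1}$. Extend to a globally $L$-Lipschitz function $\tilde f$ that agrees with $f$ on $G$; applying Fact~\ref{fact:basic-Poincare} to $\tilde f$ gives $|\tilde f(z) - \E \tilde f| \leq O(L\log(1/\delta))$ with probability at least $1-\delta/4$, and the truncation error $|\E f - \E \tilde f|$ is controlled by Cauchy--Schwarz from the second-moment bound of Claim~\ref{claim:variance-bound} combined with the tail bound on $G^c$. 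The hypothesis $\norm{\mu_i} \geq 10^4(\log(1/\delta)+t)$ is calibrated precisely so that $L\log(1/\delta) \leq 0.1\norm{\mu_i}^t$.

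Putting the two pieces together by a union bound, with probability at least $1-\delta$ we have $|\Delta_1|, |\Delta_2| \leq 0.1\norm{\mu_i}^t$, hence $\norm{A} \geq \langle A, u\rangle \geq (1-t\eps-0.2)\norm{\mu_i}^t \geq \tau$, so the algorithm outputs \textsc{Far}. The hardest step is the polynomial concentration for $\Delta_2$: a naive second-moment bound yields only constant failure probability, and the polynomial gradient is globally unbounded, so the truncation-plus-Lipschitz-Poincare scheme above is essential. Note also that this is where the parameter $\eps$ (controlling how well $\Gamma_{\Pi_t,\ldots,\Pi_1}$ aligns with $\flatten(\mu_i^{\otimes t})$) must be taken much smaller than $\norm{\mu_i}^{-O(t)}$, as in the lemma hypothesis, so that the subspace approximation error is negligible compared to the signal in $\E[\langle A, u\rangle]$.
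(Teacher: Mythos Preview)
Your decomposition into $\Delta_1$ (internal averaging noise) and $\Delta_2$ (randomness of $z$) is natural, and $\Delta_1$ is indeed handled by Chebyshev after averaging. The gap is in your control of $\Delta_2$.

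The claimed Lipschitz bound $L \lesssim t(\norm{\mu_i}+R)^{t-1}$ on the ball $G$ does not follow from Claim~\ref{claim:derivative-recurrence} and Cauchy--Schwarz. What the derivative recurrence actually yields is $\norm{\nabla f(z)} \leq t\,\norm{\flatten(P_{t-1}(z))}$, and on $G$ this is \emph{not} of order $(\norm{\mu_i}+R)^{t-1}$: by the explicit formula (Claim~\ref{claim:explicit-formula}), $P_{t-1}$ contains the moment tensors $D_s$, whose Frobenius norms scale with the ambient dimension (already $\norm{D_2}_F=\norm{\Sigma}_F$ can be of order $\sqrt{d}$). In the regime of the lemma one has $d=k$ while $\norm{\mu_i}$ is only polylogarithmic in $k$, so these dimension factors swamp your intended bound; for the same reason the truncation radius $R$ needed to capture mass $1-\delta/4$ is itself $\Omega(\sqrt{d})$, not $O(\log(1/\delta))$. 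The resulting $L\log(1/\delta)$ is therefore far larger than $\norm{\mu_i}^t$, and the truncation-plus-Lipschitz scheme fails as written.

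The paper's proof avoids any Lipschitz estimate along a generic direction by exploiting that the target direction is essentially the \emph{rank-$1$} tensor $v^{\otimes t}$ with $v=\mu_i/\norm{\mu_i}$: along this direction $\langle v^{\otimes t},P_t(z)\rangle$ is a univariate degree-$t$ polynomial in the scalar $v\cdot z$ (this is precisely what Claim~\ref{claim:polynomial-is-nonzero} establishes), so one-dimensional Poincar\'e concentration of $v\cdot z$ suffices, with no dimension dependence. The residual direction $u=\Gamma_{\Pi_t,\dots,\Pi_1}^T\Gamma_{\Pi_t,\dots,\Pi_1}\flatten(v^{\otimes t})-\flatten(v^{\otimes t})$ has tiny norm $\norm{u}\leq\sqrt{2t\eps}$, and it is \emph{here} that the stringent hypothesis $\eps<\delta/(10t\norm{\mu_i})^{4t}$ is actually used: combined with the second-moment bound of Claim~\ref{claim:variance-bound} and Markov, it forces $|u\cdot\flatten(P_t(z))|\leq 1$ despite the possibly huge norm of $P_t(z)$. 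In your outline you treat the $\eps$ condition as merely ensuring $t\eps\ll 1$ in the mean, but its real role is to absorb this projection error against a dimension-dependent quantity.
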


\begin{remark}
Note that we will want to run the test with some inverse polynomial failure probability i.e. $\delta = \poly(k/w^*)$ for some weight threshold $w^*$.  In order to be able to combine Lemma \ref{lem:length-of-zero} and Lemma \ref{lem:length-of-nonzero} meaningfully, we need 
\[
(0.4 \norm{\mu_i})^t \geq (20t)^t k/\delta \,. 
\]
If $\norm{\mu_i} \geq (\log (k/w^*))^{1 + c}$ for some constant $c > 0$ then setting $t \sim   O\left( c^{-1}\log (k/w^*)/ \log \log (k/w^*)\right)$ ensures that the above inequality is true.  Note that for this setting, $t^t = \poly(k/w^*)$ (where we treat $c$ as a constant) and thus we will be able to ensure that our overall runtime is polynomial.
\end{remark}

We will first prove Lemma \ref{lem:length-of-zero} (which is much easier to prove than Lemma \ref{lem:length-of-nonzero}).  In fact, a direct variance bound using Corollary \ref{coro:rank1-estimator-variance} will suffice.

\begin{proof}[Proof of Lemma \ref{lem:length-of-zero}]
Note that the matrix $\Gamma_{\Pi_t, \dots , \Pi_1}$ has (up to) $k$ orthonormal rows, say $v_1, \dots , v_k$.  Let $R$ be the average of $R_t(z, z_1, \dots , z_{2t-1})$ over $n = ((10^3t)^t/\delta)^3$ trials where $z$ is drawn once and $z_1, \dots , z_{2t-1}$ are sampled independently in each trial.  By Cauchy Schwarz, for any vector $v$,
\[
\E[(v \cdot \flatten(R))^2 ] \leq \E_{z \sim \mcl{D}(\mu_i), z_1, \dots , z_{2t-1} \sim \mcl{D}} [(v \cdot \flatten(R_t(z, z_1, \dots , z_{2t-1})))^2 ]\,.
\]
In other words, the covariance matrix of $R$ is smaller than the covariance of $R_t(z, z_1, \dots , z_{2t-1})$ (in the semidefinite ordering).  Now fix $i \in [k]$.   By Corollary \ref{coro:rank1-estimator-variance} and Markov's inequality, we have with probability at least $1 - \delta/k$
\[
\left \lvert  v_i \cdot \flatten( R ) \right \rvert \leq \sqrt{\frac{k}{\delta}} (20t)^{t} \,.
\]
Union bounding over all $i$, with probability at least $1 - \delta$, we have
\[
\norm{\Gamma_{\Pi_t, \dots , \Pi_1} \flatten( R )} \leq \frac{k}{\delta} (20t)^t \,.
\]
Note that the expression $\Gamma_{\Pi_t, \dots , \Pi_1} \flatten(R)$ is exactly equivalent to the vector $A$ that is tested by Algorithm \ref{alg:test-sample} so with probability at least $1- \delta$, the final output is {\sc Close}.
\end{proof}

\noindent A direct variance bound will not work for Lemma \ref{lem:length-of-nonzero}.  This is because we want the norm to be large with $1 - \delta$ probability but the variance is comparable to the squared length of the mean so we cannot get strong enough concentration with just a variance bound.  Instead, we need a more precise argument.  We will first, in the next claim, prove a bound on evaluations of the polynomial $P_{t, \mcl{D}}$ (recall Definition \ref{def:adjusted-polynomial}).  Essentially, we will argue that under the conditions of Lemma \ref{lem:length-of-nonzero}, for $z \sim \mcl{D}(\mu_i)$, with high probability, the tensor $P_{t, \mcl{D}}(z)$ has large inner product with $\mu_i^{\otimes t}$.  Since $\flatten(\mu_i^{\otimes t})$ is essentially contained in the row span of $\Gamma_{\Pi_t, \dots , \Pi_1}$, this implies that 
\[
\Gamma_{\Pi_t, \dots , \Pi_1} \flatten( P_{t, \mcl{D}}(z)) 
\]
must have large norm.  By Corollary \ref{coro:rank1-estimator-mean},
\[
\E_{z_1, \dots , z_{2t-1} \sim \mcl{D}}[R_t(z, z_1, \dots , z_{2t-1})] = P_{t, \mcl{D}}(z)
\]
so if we average over enough independent samples for $z_1, \dots , z_{2t-1}$ then the estimator $A$ computed in Algorithm \ref{alg:test-sample} will concentrate around its mean and also have large norm which is exactly what we want.  From now on, we will drop the subscript $\mcl{D}$ as the adjusted polynomial will always be defined with respect to $\mcl{D}$.

\begin{claim}\label{claim:polynomial-is-nonzero}
Let $x \in \R^d$.  Let $v \in \R^d$ be a vector such that $v \cdot x \geq 200 t \norm{v}$.  Then 
\[
\langle P_t(x) , v^{\otimes t} \rangle \geq  (0.9 v \cdot x)^t \,.
\]
\end{claim}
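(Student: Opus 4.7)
The plan is to reduce the tensor inequality to a one-dimensional scalar recurrence in $u := v \cdot x$, control the correction terms using the Poincar\'e concentration for $v \cdot z$, and close the argument by an easy induction on $t$. Concretely, let
\[
u := v \cdot x, \qquad d_s := \langle D_{s,\mcl{D}}, v^{\otimes s}\rangle = \E_{z \sim \mcl{D}}[(v \cdot z)^s], \qquad f_t := \langle P_{t,\mcl{D}}(x), v^{\otimes t}\rangle.
\]
Taking the inner product of the recurrence (\ref{eq:recursive-def}) with $v^{\otimes t}$ and using that $v^{\otimes t}$ is symmetric, each of the $\binom{t}{s}$ symmetrizations of $D_s \otimes P_{t-s}(x)$ contributes the same scalar $d_s f_{t-s}$, so
\[
f_t = u^t - \sum_{s=1}^{t} \binom{t}{s} d_s \, f_{t-s}.
\]

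\textbf{Moment bound on $d_s$.} Since $\mcl{D}$ has mean zero and is $1$-Poincar\'e, and $z \mapsto v \cdot z$ is $\norm{v}$-Lipschitz, Fact~\ref{fact:basic-Poincare} gives $\Pr_{z \sim \mcl{D}}[|v \cdot z| \geq r] \leq 6e^{-r/\norm{v}}$. Integrating the tail yields $|d_s| \leq 6 s! \norm{v}^s$. Combined with $\binom{t}{s} s! \leq t^s$ and the hypothesis $u \geq 200 t \norm{v}$, the coefficients in the recurrence for $g_t := f_t / u^t$ satisfy
\[
\left| \binom{t}{s} d_s u^{-s} \right| \leq 6 (t\norm{v}/u)^s \leq 6 / 200^s.
\]

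\textbf{Induction.} I will prove the stronger statement $g_{t'} \in [0.9, 1.1]$ for all $0 \leq t' \leq t$ by induction on $t'$. The base case $g_0 = 1$ is immediate. For the inductive step, the scalar recurrence gives $g_t = 1 - \sum_{s=1}^{t} (\binom{t}{s} d_s u^{-s}) g_{t-s}$, so by the coefficient bound above and the inductive hypothesis,
\[
|g_t - 1| \;\leq\; 1.1 \cdot \sum_{s=1}^{t} \frac{6}{200^s} \;\leq\; \frac{1.1 \cdot 6}{199} \;<\; 0.04,
\]
which keeps $g_t$ comfortably inside $[0.9, 1.1]$ and closes the induction. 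In particular $f_t \geq 0.9 \, u^t \geq 0.9^t u^t = (0.9 u)^t$, which is the claim.

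\textbf{Main obstacle.} There is no real obstacle once the problem is projected onto the line spanned by $v$; the only nontrivial ingredient is the moment bound $|d_s| \leq 6 s! \norm{v}^s$, and the constant $200$ in the hypothesis is precisely what is needed to make the resulting geometric series converge with room to spare. The only thing that needs a bit of care is that the induction must carry a two-sided bound on $g_{t'}$ (not just a lower bound), since the correction terms could in principle have either sign.
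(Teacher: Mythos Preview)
Your proof is correct and takes a genuinely different route from the paper's. The paper invokes the explicit closed-form formula for $P_t$ (Claim~\ref{claim:explicit-formula}) and bounds the resulting combinatorial sum directly, collapsing everything to $a^t - \sum_{c=1}^t (12t)^c a^{t-c}$ and then summing the geometric series. You instead contract the recursive definition~(\ref{eq:recursive-def}) against $v^{\otimes t}$ to obtain the scalar recurrence $f_{t'} = u^{t'} - \sum_{s=1}^{t'} \binom{t'}{s} d_s f_{t'-s}$ and close by a two-sided induction on $g_{t'} = f_{t'}/u^{t'}$. Both arguments rest on the same moment bound $|d_s| \le 6\,s!\,\norm{v}^s$ and end at the same quantitative conclusion $f_t \ge 0.9\,u^t \ge (0.9u)^t$. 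Your route is lighter in that it never touches Claim~\ref{claim:explicit-formula} or its partition combinatorics; the paper's route is a one-shot computation with no induction. One small remark: your statement that the constant $200$ is ``precisely what is needed'' overstates things---any constant around $60$ would already make your geometric series small enough---but this is cosmetic and does not affect correctness.
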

\begin{proof}
WLOG we may assume $v$ is a unit vector.  Let $a = v \cdot x$.  We will use Claim \ref{claim:explicit-formula} to rewrite $P_t(x)$.  Note that for any integer $s$, $\langle D_s, v^{\otimes s} \rangle = \E_{z \sim \mcl{D}}[ (z \cdot v)^s]$.  Combining with Fact \ref{fact:basic-Poincare}, we have
\[
| \langle D_s, v^{\otimes s} \rangle | \leq 6 \cdot s!  \,.
\]
Now by Claim \ref{claim:explicit-formula}, we have
\begin{align*}
&\langle P_t(x) , v^{\otimes t} \rangle \\ &= \Bigg\langle v^{\otimes t}, \sum_{ S_0 \subset [t] }  \left( x^{\otimes S_0 } \right) \otimes  \left( \sum_{\{S_1, \dots , S_t \} \in Z_t( [t] \backslash S_0)} (-1)^{\mcl{C} \{S_1, \dots , S_t \} } (\mcl{C} \{S_1, \dots , S_t \} )! (D_{|S_1|})^{(S_1)} \otimes \dots \otimes (D_{|S_t|})^{(S_t)}   \right)    \Bigg\rangle \\ & \geq a^t - \sum_{ S_0 \subset [t] , S_0 \neq [t]} a^{|S_0|} \left( \sum_{\{S_1, \dots , S_t \} \in Z_t( [t] \backslash S_0)}6^{\mcl{C} \{S_1, \dots , S_t \} }  (\mcl{C} \{S_1, \dots , S_t \} )! |S_1|! \cdots |S_t|! \right) \\\ & \geq a^t - \sum_{ S_0 \subset [t] , S_0 \neq [t]} a^{|S_0|} 6^{t - |S_0|}\left( \sum_{c = 1}^{t - |S_0|}  \sum_{\substack{S_1 \cup \dots \cup S_c = [t] \backslash S_0 \\ S_i \cap S_j = \emptyset, S_i \neq \emptyset}} |S_1|! \cdots |S_c|!\right) \\ & =  a^t - \sum_{ S_0 \subset [t] , S_0 \neq [t]} a^{|S_0|} 6^{t - |S_0|}\left( \sum_{c = 1}^{t - |S_0|}  \sum_{\substack{s_1 +  \dots + s_c = t - |S_0| \\  s_i > 0}} (t - |S_0|)!\right)  \\ & \geq a^t - \sum_{ S_0 \subset [t] , S_0 \neq [t]} a^{|S_0|} 6^{t - |S_0|} (t - |S_0|)! 2^{t - |S_0|} \\ & \geq a^t - \sum_{c = 1}^t \binom{t}{c} c! 12^c a^{t - c}  \geq a^t - \sum_{c = 1}^t (12t)^c a^{t - c} = a^t \left( 1 - \sum_{c = 1}^t \left(\frac{12t}{a}\right)^c \right)  \geq 0.9 a^t \,. 
\end{align*}
This completes the proof.
\end{proof}

Now we are ready to prove Lemma \ref{lem:length-of-nonzero}.
\begin{proof}[Proof of Lemma \ref{lem:length-of-nonzero}]
Let $v = \mu_i/\norm{\mu_i}$.  By Fact \ref{fact:basic-Poincare}, with probability at least $1 - \delta/10$, we have 
\begin{equation}\label{eq:cond1}
v \cdot z \geq 0.9 \norm{\mu_i} \,.
\end{equation}
Now by Claim \ref{claim:polynomial-is-nonzero}, we have
\begin{equation}\label{eq:unprojected-bound}
\langle P_t(z), v^{\otimes t} \rangle \geq  (0.8 \norm{\mu_i})^t \,.
\end{equation}
Now by assumption (and the fact that $v$ is a scalar multiple of $\mu_i$), we have 
\[
\Gamma_{\Pi_t, \dots , \Pi_1}^T \Gamma_{\Pi_t, \dots , \Pi_1} \flatten( v^{\otimes t}) = \flatten(v^{\otimes t}) + u
\]
where $u$ is a vector with
\[
\norm{u} \leq  \sqrt{1 - (1 - t\eps)^2} \leq \sqrt{2t\eps} \,.
\]
By Claim \ref{claim:variance-bound} and Markov's inequality, we have that with probability at least $1 - \delta/10$ over the choice of $z$,
\[
\left \lvert u \cdot \flatten(P_t(z)) \right \rvert \leq \sqrt{10/\delta} \cdot (t + \norm{\mu_i})^t \norm{u} \leq \sqrt{20t\eps/\delta} (2t \norm{\mu_i})^t \leq 1 
\]
where the last step follows from our condition on $\eps$.  Now we can combine this with (\ref{eq:unprojected-bound}) to get that
\begin{equation}\label{eq:planted-part}
\langle \flatten(P_t(z)), \Gamma_{\Pi_t, \dots , \Pi_1}^T \Gamma_{\Pi_t, \dots , \Pi_1} \flatten( v^{\otimes t})  \rangle \geq  (0.8 \norm{\mu_i})^t - 1 \geq  ( 0.5 \norm{\mu_i})^t \,.
\end{equation}
Now let 
\[
w = \Gamma_{\Pi_t, \dots , \Pi_1}^T \Gamma_{\Pi_t, \dots , \Pi_1} \flatten( v^{\otimes t}) \,.
\]
Note that $\norm{w} \leq 1$ because it is the projection of a unit vector $\flatten(v^{\otimes t})$ onto some subspace.  By Corollary \ref{coro:rank1-estimator-variance} and Markov's inequality, with probability at least $1  - \delta/10$ over the randomness in $z$, we have
\[
\E_{z_1, \dots , z_{2t-1} \sim \mcl{D}}\left[ \left( w \cdot  \flatten(R(z, z_1, \dots , z_{2t-1}) )\right)^2 \right] \leq  (10/\delta)(20t)^{2t}(\norm{\mu_i}^{2t} + 1) \,.
\]
Assuming that the above holds, we now treat $z$ as fixed.  Note that the mean of $R(z,z_1, \dots , z_{2t-1})$ is $P_t(z)$ by Corollary \ref{coro:rank1-estimator-mean}.  Let $R$ be obtained by averaging $R(z,z_1, \dots , z_{2t-1})$ over $n = ((10^3 t)^t/\delta)^3$ independent trials for $z_1, \dots , z_{2t-1}$.  Of course the mean of $R$ is also $P_t(z)$.   Using the above, we have that 
\begin{align*}
\E\left[ \left( w \cdot  (\flatten(R) - \flatten(P_t(z)))\right)^2 \right] &= \frac{1}{n} \E_{z_1, \dots , z_{2t-1} \sim \mcl{D}}\left[ \left( w \cdot  (\flatten(R(z, z_1, \dots , z_{2t-1}) - P_t(z)) )\right)^2 \right] \\ & \leq \frac{1}{n} \E_{z_1, \dots , z_{2t-1} \sim \mcl{D}}\left[ \left( w \cdot  \flatten(R(z, z_1, \dots , z_{2t-1}) )\right)^2 \right] \\ & \leq  (\delta / (10^3 t)^t)^2 (20t)^{2t}(\norm{\mu_i}^{2t} + 1) \\ & \leq  \frac{\delta}{10} \left(\frac{\norm{\mu_i}}{20} \right)^{2t}  \,.
\end{align*}
Thus, with probability at least $1 - \delta/10$, we have 
\[
|w \cdot (\flatten(R) - \flatten(P_t(z)))| \leq (0.05 \norm{\mu_i})^t  \,.
\]
If this holds, then combining with (\ref{eq:planted-part}) implies
\[
w \cdot \flatten(R) \geq (0.4 \norm{\mu_i})^t
\]
but since $v^{\otimes t}$ is a unit vector and $w$ is its projection onto the subspace spanned by the rows of $\Gamma_{\Pi_t, \dots , \Pi_1}$, the above implies 
\[
\norm{\Gamma_{\Pi_t, \dots , \Pi_1} \flatten(R) } \geq (0.4 \norm{\mu_i})^t \,.
\]
Note that the expression $\Gamma_{\Pi_t, \dots , \Pi_1} \flatten(R)$ is exactly equivalent to the vector $A$ that is tested by Algorithm \ref{alg:test-sample} so combining all of the failure probabilities, we conclude that with probability at least $1 - \delta$, the output of Algorithm \ref{alg:test-sample} will be {\sc Far}.
\end{proof}

\section{Learning Mixtures of Poincare Distributions}\label{sec:poincare-main}

We are now ready to prove Theorem \ref{thm:main-Poincare}, our full result for mixtures of Poincare distributions.  Let $\mcl{D}' = (\mcl{D} - \mcl{D}) / \sqrt{2}$ i.e. $\mcl{D}'$ is the distribution of the difference between two independent samples from $\mcl{D}$ scaled down by $\sqrt{2}$.  The $\sqrt{2}$ scaling ensures that $\mcl{D}'$ is $1$-Poincare by Fact \ref{fact:basic-Poincare}.  Note that we can take pairwise differences between samples to simulate access to the mixture
\[
\mcl{M}' = (w_1^2 + \dots + w_k^2)\mcl{D}' + \sum_{i \neq j} w_iw_j \mcl{D}'((\mu_i - \mu_j)/\sqrt{2}) \,.
\]
We will run Algorithm \ref{alg:iterative-projection} and Algorithm \ref{alg:test-sample} for the distribution $\mcl{D}'$ and mixture $\mcl{M}'$.  Note that the test in Algorithm \ref{alg:test-sample} will test for a pair of samples say $z,z'$, from components $\mcl{D}(\mu_i), \mcl{D}(\mu_j)$ of the original mixture, whether $\norm{\mu_i - \mu_j}$ is large or zero.  Running this test between all pairs of samples will let us form clusters of samples that correspond to each of the components.    We begin with a lemma that more precisely specifies the guarantees of this test and then Theorem \ref{thm:main-Poincare} will follow easily from it.

\begin{lemma}\label{lem:main-test-Poincare}
Let $\mcl{D}$ be a $1$-Poincare distribution on $\R^d$.  Let 
\[
\mcl{M} = w_1 \mcl{D}(\mu_1) + \dots + w_k \mcl{D}(\mu_k)
\]
be a mixture of translated copies of $\mcl{D}$.  Let $w^*, s, \delta$ be parameters and assume that $s \geq (\log k/(w^* \delta))^{1 + c}$ for some $0 < c < 1$.  Also assume that 
\[
\max \norm{\mu_i - \mu_j} \leq \min( (k/(w^*\delta))^2, s^{C} )
\]
for some $C$.  There is an algorithm that takes $n = \poly((kd/(w^*\delta) )^{C/c} )$ samples from $\mcl{M}$ and $\mcl{D}$ and runs in $\poly(n)$ time and achieves the following testing guarantees:  for a pair of (independent) samples $z \sim \mcl{D}(\mu_i) ,z' \sim \mcl{D}(\mu_j)$,
\begin{itemize}
    \item If $i = j$ and $w_i \geq w^*$ then with probability $1 - \delta$ the output is $\textsf{accept}$
    \item If $w_i, w_j \geq w^*$ and $\norm{\mu_i - \mu_j} \geq s$ then with probability $1 - \delta$ the output is $\textsf{reject}$
\end{itemize}
where the randomness is over the samples $z,z'$ and the random choices in the algorithm.
\end{lemma}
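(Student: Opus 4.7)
\medskip

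\noindent\textbf{Proof plan.} The plan is to reduce the testing problem to the setup of Algorithms \ref{alg:iterative-projection} and \ref{alg:test-sample} applied to the \emph{difference mixture}. As noted in the paragraph preceding the lemma, set $\mcl{D}' = (\mcl{D}-\mcl{D})/\sqrt{2}$, which is $1$-Poincar\'e by Fact \ref{fact:basic-Poincare}, and
\[
\mcl{M}' = \Big(\sum_{i} w_i^2\Big)\mcl{D}'(0) + \sum_{i \neq j} w_i w_j\, \mcl{D}'((\mu_i - \mu_j)/\sqrt{2}).
\]
To simulate samples from $\mcl{M}'$, first pair up fresh samples from $\mcl{M}$ and scale their differences by $1/\sqrt{2}$. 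The key observation is that if $z \sim \mcl{D}(\mu_i)$ and $z' \sim \mcl{D}(\mu_j)$ are independent, then $y := (z - z')/\sqrt{2}$ is distributed as the zero-mean component $\mcl{D}'(0)$ when $i=j$, and as $\mcl{D}'((\mu_i-\mu_j)/\sqrt{2})$ otherwise. Thus testing whether $\mu_i = \mu_j$ reduces to testing whether $y$ comes from the zero-mean component.

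\smallskip

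\noindent The plan for the algorithm is as follows. Set
\[
t \;=\; C_1 \cdot \frac{\log(k/(w^* \delta))}{c \,\log\log(k/(w^* \delta))}
\]
for a sufficiently large constant $C_1$, so that $(0.4 \cdot s/\sqrt{2})^t \geq (20t)^t \cdot k/\delta$ (as explained in the remark after Lemma \ref{lem:length-of-nonzero}); this is exactly where the hypothesis $s \geq (\log(k/(w^*\delta)))^{1+c}$ is used. Run Algorithm \ref{alg:iterative-projection} on simulated samples from $\mcl{M}'$ with this value of $t$. Since every nonzero-mean component of $\mcl{M}'$ has weight at least $(w^*)^2$ and mean norm at most $s^C/\sqrt{2}$ (by the maximum-separation hypothesis), and since the maximum mean norm in $\mcl{M}'$ is bounded polynomially in $s$, Lemma \ref{lem:projection-accuracy} applied with weight threshold $(w^*)^2$ and accuracy $\eps := \delta / (10 t \cdot s^C)^{4t}$ implies that, with high probability, the computed projection $\Gamma_{\Pi_t,\dots,\Pi_1}$ satisfies $\norm{\Gamma_{\Pi_t,\dots,\Pi_1}\flatten(((\mu_i-\mu_j)/\sqrt{2})^{\otimes t})} \geq (1 - t\eps) \norm{(\mu_i-\mu_j)/\sqrt{2}}^t$ for every pair $i \neq j$ with $w_i, w_j \geq w^*$ and $\norm{\mu_i-\mu_j} \geq s$. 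The number of samples and running time required are polynomial in $n = \poly((kd/(w^*\delta))^{C/c})$ because $t^t$, $s^{Ct}$, and $1/\eps$ are all polynomial in $k/(w^*\delta)$ for this choice of $t$.

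\smallskip

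\noindent Given the projection matrices, the test itself applies Algorithm \ref{alg:test-sample} to $y = (z-z')/\sqrt{2}$ with threshold $\tau$ chosen so that $\tau \geq (20t)^t k/\delta$ and $\tau \leq (0.4 s/\sqrt{2})^t$; the choice of $t$ above guarantees such a $\tau$ exists. Output $\textsf{accept}$ if the test returns $\textsf{Close}$ and $\textsf{reject}$ if it returns $\textsf{Far}$. When $i = j$, the input $y$ has the law of $\mcl{D}'(0)$, so Lemma \ref{lem:length-of-zero} gives $\textsf{Close}$ with probability $1-\delta$. When $\norm{\mu_i-\mu_j} \geq s$ and $w_i, w_j \geq w^*$, the input has the law of $\mcl{D}'(\mu)$ with $\norm{\mu} \geq s/\sqrt{2}$, and the span-containment guarantee from the previous paragraph together with our choice of $\eps$ (chosen precisely to meet the hypothesis of Lemma \ref{lem:length-of-nonzero}) yields $\textsf{Far}$ with probability $1-\delta$ by Lemma \ref{lem:length-of-nonzero}. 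The main obstacle throughout is parameter bookkeeping: one must simultaneously satisfy (i) the gap condition $(0.4s/\sqrt{2})^t \geq (20t)^t k/\delta$ needed to pick $\tau$, (ii) the accuracy condition on $\eps$ in Lemma \ref{lem:length-of-nonzero}, and (iii) the polynomial sample-complexity condition of Lemma \ref{lem:projection-accuracy}; the $s \leq s^C$ bound on maximum separation is exactly what keeps $\mu_{\max}^t$ polynomially bounded so all three constraints are jointly satisfiable within the stated sample budget.
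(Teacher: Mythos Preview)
Your proposal is correct and follows essentially the same route as the paper: pass to the difference mixture $\mcl{M}'$ and difference distribution $\mcl{D}'$, choose $t$ of order $\log(k/(w^*\delta))/(c\log\log(k/(w^*\delta)))$, run {\sc Iterative Projection} with weight threshold $(w^*)^2$ and accuracy $\eps = \delta/(10t s^C)^{4t}$ via Lemma~\ref{lem:projection-accuracy}, then run {\sc Test Samples} on $(z-z')/\sqrt{2}$ and invoke Lemmas~\ref{lem:length-of-zero} and~\ref{lem:length-of-nonzero}. The paper picks the concrete threshold $\tau = (0.2s)^t$ (which lies in your stated range) and explicitly checks the side condition $\norm{\mu_i-\mu_j}/\sqrt{2} \geq 10^4(\log 1/\delta + t)$ of Lemma~\ref{lem:length-of-nonzero}; one small wording slip in your write-up is the claim that ``every nonzero-mean component of $\mcl{M}'$ has weight at least $(w^*)^2$''---only those with $w_i,w_j \geq w^*$ do, but since Lemma~\ref{lem:projection-accuracy} only promises containment for components above the weight threshold, and you only invoke the guarantee for such pairs, the argument is unaffected.
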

\begin{remark}
Note that Lemma \ref{lem:main-test-Poincare} actually does not require a minimum separation or minimum mixing weight but instead just guarantees to detect when two samples come from components that have nontrivial mixing weights and whose means are far apart.   We will focus on achieving inverse polynomial accuracy i.e. $\delta = (w^*/k)^{O(1)}$ so the required separation will be $(\log k/w^*)^{1 + c}$ and the runtime will be polynomial in $k,d, 1/w^*$.  
\end{remark}
\begin{proof}
Let $t$ be the minimum integer such that 
\[
\left( \frac{s}{\log (k/(w^*\delta))} \right)^t \geq (k/(w^* \delta))^{10} \,.
\]
Note that we clearly must have that 
\begin{align*}
&\max \norm{\mu_i - \mu_j}^t \leq s^{tC} \leq  \left( \frac{s}{\log (k/(w^* \delta))} \right)^{Ct/c} \leq (k/(w^* \delta))^{20 C/c} \\
&t \leq \frac{20 \log  (k/(w^* \delta))}{c \log \log  (k/(w^* \delta))} \,.
\end{align*}
Now we run Algorithm \ref{alg:iterative-projection} with distribution $\mcl{D}'$ and mixture $\mcl{M}'$ and parameter $t$ to obtain projection matrices $\Pi_t, \dots , \Pi_1$.  Set 
\[
\eps = \frac{\delta}{(10t s^C)^{4t}} \,.
\]
Note that by the above, the runtime and sample complexity  of $n = \poly((kd/(w^*\delta) )^{C/c} )$ suffices for us to apply Lemma \ref{lem:projection-accuracy} with accuracy parameter $\eps$ and weight threshold $(w^*)^2$.  This implies that for all $i \neq j$ such that $\norm{\mu_i - \mu_j} \geq \sqrt{2}$ and $w_i , w_j \geq w^*$, we have
\begin{equation}\label{eq:accuracy}
\norm{\Gamma_{\Pi_{s}, \dots , \Pi_1} \flatten( (\mu_i - \mu_j)^{\otimes s}) } \geq ( 1 - s\eps) \norm{\mu_i - \mu_j}^s  
\end{equation}
for all $s = 1,2, \dots , t$.  Now we run Algorithm \ref{alg:test-sample} with projection matrices $\Pi_t, \dots , \Pi_1$, accuracy $\delta$ and threshold 
\[
\tau = (0.2s)^t \,.
\]
We use Algorithm \ref{alg:test-sample} to test $(z - z')/\sqrt{2}$ and we output {\sc Accept} if Algorithm \ref{alg:test-sample} returns {\sc Close} and we output {\sc Reject} if Algorithm \ref{alg:test-sample} returns {\sc Far}.  Note that runtime and sample complexity  of $n = \poly((kd/(w^*\delta) )^{C/c} )$ suffice.  Also $(z - z')/\sqrt{2}$ is equivalent to a sample from $\mcl{D}'((\mu_i - \mu_j)/\sqrt{2})$.   We now apply Lemma \ref{lem:length-of-zero} to verify the first guarantee of our test.  Note that it suffices to verify that 
\[
\tau \geq \frac{(20t)^tk}{\delta}
\]
but this clearly holds because
\[
\frac{\tau}{(20t)^t} = \left(\frac{0.2s}{20t}\right)^t \geq \left( \frac{c s}{10^4\log (k/(w^*\delta))} \right)^{t} \geq \frac{k}{\delta},.
\]
Thus, we conclude that if $z,z'$ are drawn from the same component then the output is {\sc Accept} with probability at least $1 - \delta$.  Now we use Lemma \ref{lem:length-of-nonzero} to verify the second guarantee of our test.  Note that 
\[
\frac{\norm{\mu_i - \mu_j}}{\sqrt{2}} \geq \frac{s}{\sqrt{2}} \geq 10^4(\log 1/\delta + t) \,.
\]
Also note that the threshold clearly satisfies $\tau \leq (0.4 \norm{\mu_i - \mu_j}/\sqrt{2})^t $ (note that the $\sqrt{2}$ comes from the fact that we scale the difference down by $\sqrt{2}$).  Combined with the guarantee in (\ref{eq:accuracy}), we conclude that the output is {\sc Reject} with probability at least $1 - \delta$ and we are done. 
\end{proof}

Using Lemma \ref{lem:main-test-Poincare}, it is not difficult to prove Theorem \ref{thm:main-Poincare}.

\begin{proof}[Proof of Theorem \ref{thm:main-Poincare}]
Recall by the reduction in Section \ref{sec:reductions} that we may assume $d \leq k$ and that $\norm{\mu_i - \mu_j} \leq O((k/w_{\min})^2)$ for all $i,j$.  Now we will do the following process to estimate the means of the components.
\begin{enumerate}
    \item Draw a sample $z \sim \mcl{M}$
    \item Take $m = (k/(w_{\min} \alpha))^{10^2}$ samples $z_1, \dots , z_m \sim \mcl{M}$
    \item Use Lemma \ref{lem:main-test-Poincare} with parameters $w^* = w_{\min}$ and $\delta = (w_{\min} \alpha/k)^{10^4}$ to test the pair of samples $z,z_i$ for all for all $i \in [m]$
    \item Let $S \subset [m]$ be the set of all $i$ that are {\sc Accepted} and compute $\mu = \frac{1}{|S|}\sum_{i \in S} z_i$
\end{enumerate}  
We first argue that if $z$ is a sample from $\mcl{D}(\mu_i)$ for some $i$, then with $1 - (w_{\min} \alpha/k)^{10^2}$ probability, the procedure returns $\mu$ such that $\norm{\mu - \mu_i} \leq 0.1\alpha$.  This is because by the guarantees of Lemma \ref{lem:main-test-Poincare}, with probability at least $1 - (w_{\min} \alpha/k)^{10^3}$ the test will accept all samples from among $z_1, \dots , z_m$ that are from the component $\mcl{D}(\mu_i)$ and reject all of the others.  Also, with high probability, there will be at least $(k/(w_{\min}\alpha))^{99}$ samples from the component $\mcl{D}(\mu_i)$ so by Claim \ref{claim:Poincare-mean} and union bounding all of the failure probabilities, we have that if $z$ is a sample from $\mcl{D}(\mu_i)$ then $\norm{\mu - \mu_i} \leq 0.1 \alpha$ with probability at least $1 - (w_{\min} \alpha/k)^{10^2}$.

Now it suffices to repeat the procedure in steps $1-4$ for $l = (k/(w_{\min} \alpha))^{10^2}$ independent samples $z \sim \mcl{M}$.  This gives us a list of means say $S = \{ \wt{\mu_1}, \dots , \wt{\mu_l} \}$.  The previous argument implies that most of these estimates will be close to one of the true means and that all of the true means will be represented.  To ensure that with high probability we output exactly one estimate corresponding to each true mean and no extraneous estimates, we do a sort of majority voting.  

We will inspect the estimates in $S$ one at a time and decide whether to output them or not.  Let $T$ be the set of estimates that we will output.  Note that $T$ is initially empty.  Now for each $i$, let $S_i$ be the subset of $\{ \wt{\mu_1}, \dots , \wt{\mu_l} \}$ consisting of all means with $\norm{\wt{\mu_j} - \wt{\mu_i}} \leq 0.2 \alpha$.  If $|S_i| \geq 0.9 w_{\min} l$ and $\wt{\mu_i}$ is not within $\alpha$ of any element of $T$ then add $\wt{\mu_i}$ to $T$.  Otherwise do nothing.  

We claim that with high probability, this procedure returns one estimate corresponding to each true means and nothing extraneous.  First, with high probability there will be no extraneous outputs because if say $\wt{\mu_i}$ is at least $0.5 \alpha$ away from all of the true means, then with high probability we will have $|S_i| < 0.9 w_{\min} l$.  Now it is also clear that with high probability we output exactly one estimate corresponding to each true mean (since $l$ is sufficiently large that with high probability we get enough samples from each component).  Thus, with high probability, the final output will be a set of means that are within $\alpha$ of the true means up to some permutation.  Once we have learned the means, we can learn the mixing weights by simply taking fresh samples from $\mcl{M}$ and clustering since with high probability, we can uniquely identify which component a sample came from.  This completes the proof.
\end{proof}

The clustering guarantee in Corollary \ref{coro:cluster-Poincare} follows as an immediate consequence of Theorem \ref{thm:main-Poincare}.
\begin{proof}[Proof of Corollary \ref{coro:cluster-Poincare}]
Let the estimated means computed by Theorem \ref{thm:main-Poincare} for $\alpha = (w_{\min}/k)^{10}$ be $\wt{\mu_1}, \dots , \wt{\mu_k}$.  Now for all $j_1, j_2 \in [k]$ with $j_1 \neq j_2$, let
\[
v_{j_1j_2} = \frac{\wt{\mu_{j_1}} - \wt{\mu_{j_2}}}{\norm{\wt{\mu_{j_1}} - \wt{\mu_{j_2}} }} \,.
\]
Now given a sample $z$ from $\mcl{M}$, we compute the index $j$ such that for all $j_1, j_2$, we have
\[
| v_{j_1j_2} \cdot (\wt{\mu_j} - z) | \leq (\log (k/w_{\min}))^{1 + 0.5c} \,. 
\]
Note that by the guarantees of Theorem \ref{thm:main-Poincare}, there is a permutation $\pi$ such that $\norm{\wt{\mu_{\pi(i)}} -\mu_i} \leq \alpha$ for all $i$.  If $z$ is a sample from $\mcl{D}(\mu_i)$, then by the tail bound in Fact \ref{fact:basic-Poincare}, with high probability the unique index $j$ that satisfies the above is exactly $j = \pi(i)$ and thus, we recover the ground truth clustering with high probability. 
\end{proof}

\section{Sharper Bounds for Gaussians}\label{sec:gaussian-moments}

For spherical Gaussians, i.e. when $\mcl{D} = N(0,I)$, we can obtain stronger results.  Our results are stronger in two ways.  First, we can improve the minimum separation to $(\log (k/w_{\min}))^{1/2 + c}$ instead of $(\log (k/w_{\min})^{1 + c}$.  Secondly, we can remove the assumption about the maximum separation by using a recursive clustering routine.  We first focus on improving the minimum separation. To get this improvement, we will prove sharper quantitative bounds in our implicit moment tensor and testing subroutines.

\subsection{Hermite Polynomials}

Naturally, for the standard Gaussian $N(0,I)$, the adjusted polynomials $P_{t, N(0,I)}$ are exactly the Hermite polynomials.  In this section, we define and go over a few basic properties of the Hermite polynomials.  Many of these are from \cite{kane2020robust}.  In the next subsection, we will show how to exploit specific properties of the Hermite polynomials to get sharper versions of the results in Sections \ref{sec:implicit-est} and \ref{sec:testing1}.

\begin{definition}
We will use $P(S)$ to denote the set of partitions of a set $S$.  We use $P^2(S)$ to denote the set of partitions of $S$ into subsets of size $2$.  We use $P^{1,2}(S)$ to denote the set of partitions of $S$ into  subsets of size $1$ and $2$.
\end{definition}

\begin{definition}[Hermite Polynomial Tensor (see \cite{kane2020robust})]\label{def:hermite}
Let $X = (X_1, \dots , X_d)$ be a vector of $d$ formal variables.  We define the Hermite polynomial tensor
\[
h_t(X) = \sum_{ P \in P^{1,2}([t])} \bigotimes_{\{a,b \} \in P} (-I)^{(a,b)} \otimes  \bigotimes_{ \{c \} \in P } X^{(c)}
\]
where $I$ denotes the $d \times d$ identity matrix.
\end{definition}

Below we summarize several well-known properties that the Hermite polynomials satisfy.

\begin{claim}[See \cite{kane2020robust}]\label{claim:hermite-mean}
Let $G = N(\mu, I)$.  Then
\[
\E_{z \sim G}[ h_t(z)] = \mu^{\otimes t} \,.
\]
\end{claim}

\begin{claim}[See \cite{kane2020robust}]\label{claim:Gaussian-moments}
We have the identity
\[
\E_{z \sim N(0,I)}[ z^{\otimes t}] = \sum_{P \in P^2([t])} \bigotimes_{\{a,b \} \in P}I^{(a,b)} \,.
\]
\end{claim}

We also have that the Hermite polynomials are exactly the adjusted polynomials (recall Definition \ref{def:adjusted-polynomial}) for the standard Gaussian.  

\begin{claim}\label{claim:special-case-equivalence}
For any $x \in \R^d$ and integer $t$, we have
\[
P_{t, N(0,I)}(x) = h_t(x) \,.
\]
\end{claim}
\begin{proof}
We can verify the desired statement by induction.  The base cases for $t = 1,2$ are clear.  To finish, we can simply plug Claim \ref{claim:Gaussian-moments} into the recursion in (\ref{eq:recursive-def}) to verify the inductive step.
\end{proof}

\noindent We could get a bound on the variance of the Hermite polynomials using Claim \ref{claim:variance-bound}.  However, since our goal for Gaussians will be to get separation $(\log (k/w_{\min}))^{1/2 + c}$ instead of separation $(\log (k/w_{\min}))^{1 + c}$, we will obtain a stronger bound by hand.  Fortunately, we will only need the stronger bound for when the mean is $0$ and can get away with using the general bounds from Section \ref{sec:moments1} for when the mean $\mu$ is nonzero.

\begin{claim}[From \cite{kane2020robust}]\label{claim:hermite-variance}
We have
\[
\E_{z \sim N(\mu, I)}[ h_t(z) \otimes h_t(z)] = \sum_{S_1 , S_2 \subset [t], |S_1| = |S_2|} \sum_{\substack{\text{Matchings } \mcl{P} \\ \text{of } S_1, S_2}} \bigotimes_{\{a,b \} \in \mcl{P}} I^{(a, t + b)} \bigotimes_{c \notin S_1} \mu^{(c)} \bigotimes_{c \notin S_2} \mu^{(t + c)} \,.
\]
\end{claim}

Flattening the above expression for $\E_{z \sim G}[ h_t(z) \otimes h_t(z)]$ into a $d^t \times d^t$ matrix  in the natural way, we immediately get a bound on the covariance of $h_t(z)$ for $z \sim G$
\begin{corollary}\label{coro:hermite-variancebound}
Let
\[
M(z) = \textsf{flat}(h_t(z)) \otimes \textsf{flat}(h_t(z)) \,.
\] 
Then 
\[
\E_{z \sim N(0,I)}[M(z)]  \preceq t!  I
\]
where $I$ is the $d^t \times d^t$ identity matrix.
\end{corollary}
\begin{proof}
This follows immediately from the formula in Claim \ref{claim:hermite-variance} since there are $t!$ terms that are a tensor product of $t$ copies of the identity matrix and each of these has spectral norm $1$.
\end{proof}

We will also need the following property, that the Hermite polynomials are an orthogonal family.  Again, this property was not true for general Poincare distributions but will be used in getting stronger quantitative bounds that will let us deal with smaller separation.

\begin{claim}\label{claim:hermite-orthogonality}
For integers $t \neq t'$, we have
\[
\E_{z \sim N(0,I)}[ \textsf{flat}(h_t(z)) \otimes  \textsf{flat}(h_{t'}(z))] = 0 \,.
\]
\end{claim}
\begin{proof}
WLOG $t' < t$.  We first prove
\[
\E_{z \sim N(0,I)}[ h_t(z) \otimes  z^{\otimes t'}] = 0 \,.
\]
Substitute the expression in Definition \ref{def:hermite} into the LHS and then use Claim \ref{claim:Gaussian-moments}.  The above then follows from direct computation.  Next, since the above holds for all $t' < t$ and $h_t'(z)$ is a sum of monomials of degree at most $t'$, we immediately get that
\[
\E_{z \sim N(0,I)}[ h_t(z) \otimes  h_{t'}(z)]
\]
as desired.
\end{proof}

\subsection{Implicit Representations of Hermite Polynomials}\label{sec:implicit-hermite}

Similar to Section \ref{sec:implicit-moments1}, we will need implicit representations of the Hermite polynomial tensors.  While technically the representation in Definition \ref{def:hermite} is already implicit, the identity matrices $I^{a,b}$ do not behave well when we try to apply our iterative projection map in Section \ref{sec:projection} and thus we will again, similar to Section \ref{sec:implicit-moments1}, need to obtain a representation as a sum of a polynomial number of ``rank-$1$" tensors.

\noindent The next set of definitions exactly parallel those in Section \ref{sec:implicit-moments1}.

\begin{definition}
For $x_1, \dots , x_t \in \R^d$, define the polynomial 
\begin{equation}
Q_t(x_1, \dots , x_t) = \sum_{\substack{S_1 \cup \dots \cup S_t = [t] \\ |S_i \cap S_j| = 0}} \frac{(-1)^{\mcl{C} \{ S_1, \dots , S_t \}}}{\binom{t - 1}{\mcl{C} \{ S_1, \dots , S_t \} - 1 } } \left( h_{|S_1|}(x_1) \right)^{(S_1)} \otimes \dots \otimes \left( h_{|S_t|}(x_t) \right)^{(S_t)} \,.
\end{equation}
\end{definition}

\begin{definition}
For $x_1, \dots , x_{2t} \in \R^d$, define the polynomial
\[
R_t(x_1, \dots , x_{2t}) = - Q_t(x_1, \dots , x_t) + Q_t(x_{t+1}, \dots , x_{2t}) \,.
\]
\end{definition}
\begin{remark}
Note we will use the same letters $Q_t, R_t$ as in the general case.  All of the proceeding sections deal specifically with GMMs and we will only consider the corresponding $Q_t, R_t$ so there will be no ambiguity.
\end{remark}

By Claim \ref{claim:special-case-equivalence}, we have the following (which is the exact same as Corollary \ref{coro:rank1-identity-p2}).

\begin{corollary}\label{coro:rank1-hermite-identity-p2}
We have the identity
\[
R_t(x_1, \dots , x_{2t}) = \sum_{\substack{S_1 \cup \dots \cup S_t = [t] \\ |S_i \cap S_j| = 0}}   \frac{(-1)^{\mcl{C} \{ S_1, \dots , S_t \} -1 }}{\binom{t - 1}{\mcl{C} \{ S_1, \dots , S_t \} - 1 } } \left( x_1^{\otimes S_1} \otimes \dots \otimes x_t^{\otimes S_t} -  x_{t+1}^{\otimes S_1} \otimes \dots \otimes x_{2t}^{\otimes S_t} \right)\,.
\]
\end{corollary}

Claim \ref{claim:special-case-equivalence} also implies that Corollary \ref{coro:rank1-estimator-mean} and Corollary \ref{coro:rank1-estimator-variance} still hold and can be used in the analysis.  We now prove the one strengthened bound that we will need.

\begin{lemma}\label{lem:stronger-variance-bound}
We have
\[
\E_{z_1, \dots , z_{2t} \sim N(0,I)}[ \flatten(R_t(z_1, \dots , z_{2t}))^{\otimes 2}] \preceq (2t)^t I_{d^t} \,.
\]
\end{lemma}
\begin{proof}
Note that by Claim \ref{claim:hermite-orthogonality} and Corollary \ref{coro:hermite-variancebound},
\begin{align*}
&\E_{z_1, \dots , z_{2t} \sim N(0,I)}[ \flatten(R_t(z_1, \dots , z_{2t}))^{\otimes 2}] \\ &=     2\sum_{\substack{S_1 \cup \dots \cup S_t = [t] \\ |S_i \cap S_j| = 0}} \frac{1}{\binom{t - 1}{\mcl{C} \{ S_1, \dots , S_t \} - 1 }^2 } \flatten\left(\left( h_{|S_1|}(x_1) \right)^{(S_1)} \otimes \dots \otimes \left( h_{|S_t|}(x_t) \right)^{(S_t)}\right)^{\otimes 2} \\ &\preceq 2\sum_{\substack{S_1 \cup \dots \cup S_t = [t] \\ |S_i \cap S_j| = 0}} |S_1|! |S_2| \cdots |S_t|! I_{d^t} \\ & \preceq 2 I_{d^t}\sum_{\substack{s_1 + \dots + s_t = t \\ s_i \geq 0}} \binom{t}{s_1, \dots , s_t} s_1! \cdots s_t ! \\ & \preceq (2t)^t I_{d^t}
\end{align*}
where as before we use the trick of reindexing the sum so that $s_1, \dots , s_t$ are the sizes of $S_1, \dots , S_t$ respectively.  This completes the proof.
\end{proof}
Note that the key difference of the above compared to Corollary \ref{coro:rank1-estimator-variance} is that the RHS is $t^t$ instead of $t^{2t}$.  This is the key to improving the separation.

\subsection{Testing Samples}

All of the algorithms and results in Section \ref{sec:implicit-est} still hold (with the distribution $\mcl{D}$ set to $N(0,I)$).  We also run the testing algorithm in Section \ref{sec:testing1}.  It remains to prove that the distinguishing power of the test is stronger when specialized to Gaussians.  Recall that the key lemmas were Lemma \ref{lem:length-of-zero} and  Lemma \ref{lem:length-of-nonzero}.  The strengthened versions are stated below.

\begin{lemma}\label{lem:length-of-zero-stronger}
Let $t \in \N$ and $0 < \delta < 0.01$ be some parameters.  Let  $\Pi_t, \dots, \Pi_2 \in \R^{k \times dk} , \Pi_1 \in \R^{k \times d}$ be any matrices with orthonormal rows.  Let $\tau$ be some parameter satisfying $\tau \geq  (2t)^{t/2} k/\delta$.  Let $z \sim N(0,I)$.  Then with probability at least $1 - \delta$, Algorithm \ref{alg:test-sample} run with these parameters outputs {\sc Close} where the randomness is over $z$ and the random choices within Algorithm \ref{alg:test-sample}.
\end{lemma}

\begin{lemma}\label{lem:length-of-nonzero-stronger}
Let $t \in \N$ and $0 < \delta <  0.01$ be some parameters.  Let $z \sim N(\mu_i, I)$ where $\norm{\mu_i} \geq 10^4 ( \sqrt{\log 1/\delta} + \sqrt{t})$.  Let $\tau$ be some parameter satisfying $\tau \leq (0.4 \norm{\mu_i})^t$.  Assume that the matrices $\Pi_t, \dots,  \Pi_2 \in \R^{k \times dk}, \Pi_1 \in \R^{k \times d}$ satisfy that
\[
\norm{\Gamma_{\Pi_t, \dots , \Pi_1} \flatten( \mu_i^{\otimes t})} \geq (1 - t\eps) \norm{\mu_i}^t \,. 
\]
where
\[
\eps < \frac{\delta}{(10t \norm{\mu_i})^{4t}} \,.
\]
Then with probability at least $1 - \delta$, Algorithm \ref{alg:test-sample} run with these parameters outputs {\sc Far} (where the randomness is over $z$ and the random choices within Algorithm \ref{alg:test-sample}).
\end{lemma}

There are two key difference compared to Lemma \ref{lem:length-of-zero} and Lemma \ref{lem:length-of-nonzero}.  First, in Lemma \ref{lem:length-of-zero-stronger}, the bound on $\tau$ involves $t^{t/2}$ instead of $t^t$.  Second, in Lemma \ref{lem:length-of-nonzero-stronger}, we require separation $O(\sqrt{\log 1/\delta} + \sqrt{t})$ instead of $O(\log 1/\delta + t)$.  When we combine the two, the inequality that we need for our test to be meaningful is
\[
(0.4 \norm{\mu_i})^t \geq (2t)^{t/2} k/\delta \,. 
\]
All of the settings of parameters will be the same as before i.e. $t \sim   O\left( c^{-1}\log (k/w^*)/ \log \log (k/w^*)\right)$ and $\delta = \poly(w^*/k)$ except now we can get away with  $\norm{\mu_i} \geq (\log (k/w^*))^{1/2 + c}$ for some constant $c > 0$ which allows us to improve the minimum separation.
\\\\
The proof of Lemma \ref{lem:length-of-zero-stronger} is essentially exactly the same as the proof of Lemma \ref{lem:length-of-zero}.
\begin{proof}[Proof of Lemma \ref{lem:length-of-zero-stronger}]
Exactly the same as the proof of Lemma \ref{lem:length-of-zero} except use Lemma \ref{lem:stronger-variance-bound} instead of Corollary \ref{coro:rank1-estimator-variance} to bound the variance of $R_t$.
\end{proof}

The proof of Lemma \ref{lem:length-of-nonzero-stronger} requires one additional modification.  In particular, we will use a stronger version of Claim \ref{claim:polynomial-is-nonzero} that we now prove.
\begin{claim}\label{claim:polynomial-is-nonzero-stronger}
Let $x \in \R^d$.  Let $v \in \R^d$ be a vector such that $v \cdot x \geq 20 \sqrt{t} \norm{v}$.  Then 
\[
\langle h_t(x) , v^{\otimes t} \rangle \geq  (0.9 v \cdot x)^t \,.
\]
\end{claim}
\begin{proof}
WLOG $v$ is a unit vector.  Let $a = v \cdot x$.  Using Definition \ref{def:hermite} and elementary counting, we have
\[
\langle h_t(x) , v^{\otimes t} \rangle = \sum_{j = 0}^{\lfloor t/2 \rfloor} (-1)^j a^{t - 2j} \frac{t!}{2^j j! (t-2j)!} \,.
\]
The RHS is a polynomial in $a$ and is exactly the standard univariate Hermite polynomial, which we denote $H_t(a)$.  Note it can be easily verified that the univariate Hermite polynomials satisfy the recurrence
\[
H_t(a) = a H_{t-1}(a) - (t-1)H_{t-2}(a) \,.
\]
Thus, we can write
\[
H_t(a) = \det\begin{bmatrix} a & 1   \\
1 & a & \sqrt{2}  \\
 & \sqrt{2} & a & \ddots \\
 &   & \ddots  & \ddots &  \sqrt{t-1} \\
 & & & \sqrt{t-1} & a 
\end{bmatrix}
\]
where the blank entries are $0$.  Let the matrix on the RHS be $M$.  $M$ is a $t \times t$ matrix that has $a$ on the diagonal and adjacent to the diagonal the entries are $M_{i(i+1)} = M_{(i+1)i} = \sqrt{i}$ and $0$ everywhere else.  Note that this implies that $H_t$ has $t$ real roots (since it is the characteristic polynomial of a symmetric matrix).  Also, all roots have magnitude at most $2\sqrt{t}$ since if $|a| \geq 2\sqrt{t}$ then the matrix is diagonally dominant and cannot have determinant $0$.  Thus, we can write $H_t(a) = (a-r_1) \cdots (a - r_t)$ where $-2\sqrt{t} \leq r_1, \dots , r_t \leq 2 \sqrt{t}$.  For $a \geq 20 \sqrt{t}$, we clearly have $H_t(a) \geq (0.9 a)^t$ and this completes the proof.
\end{proof}

\begin{proof}[Proof of Lemma \ref{lem:length-of-nonzero-stronger}]
Exactly the same as the proof of Lemma \ref{lem:length-of-nonzero} except using Claim \ref{claim:polynomial-is-nonzero-stronger} instead of Claim \ref{claim:polynomial-is-nonzero}.
\end{proof}

\section{Clustering Part 1: Building Blocks}\label{sec:clustering1}

In order to prove our full result for GMMs, we need to be able to deal with mixtures for which the maximum separation is much larger (superpolynomial) than the minimum separation.  To do this, we will have a recursive clustering procedure where we are able to remove half of the components each time.  Roughly, we find a direction $v$ for which there are two components whose separation along direction $v$ is large (say at least $(\log (k/w_{\min}))^{1/2 + c}$).  We call this a signal direction.

Next, imagine projecting all of our samples onto the direction $v$.  For each component, essentially all of its samples will be in an interval of width $O(\sqrt{\log (k/w_{\min})})$.  Thus, since there are two components that are sufficiently separated, we can find an interval of width $O(\sqrt{\log (k/w_{\min})})$ for which
\begin{itemize}
    \item For at least one component, essentially all of its samples are in this interval
    \item At most half of the components ever generate samples in this interval
\end{itemize}
We then imagine going back to $\R^d$ but only keeping the samples whose projection onto $v$ lies in this interval.  Of course, after this restriction, the distribution is no longer a mixture of Gaussians because there could be a component that is cut in half by the boundary of the interval.  The key observation is that we can project all of the remaining samples (after the restriction) onto the orthogonal complement of $v$.  This new distribution will be a mixture of Gaussians in $d-1$ dimensions with at most half as many components.  Furthermore, because we restricted to an interval of width $O(\sqrt{\log (k/w_{\min})})$, we can argue that this projection onto the orthogonal complement of $v$ does not reduce separations by too much.

The actual clustering algorithm will be more complicated than the outline above for technical reasons.  There are several additional details that we need to deal with such as the fact that we may create components with very small mixing weights when we restrict to samples in an interval.  Also, the procedure for halving the number of remaining components only works when there is some  pair of means whose separation is at least $\poly(\log (k/w_{\min}))$ (which is larger, but polynomially related to the minimum separation).  Thus, after running the halving procedure sufficiently many times, we then have to run a full clustering routine that has guarantees similar to Theorem \ref{thm:main-Poincare}.

\begin{remark}
Note that for general Poincare distributions, this type of approach seems infeasible because after restricting to an interval in direction $v$, projecting onto the orthogonal complement still may not ``fix" all of the components of the mixture be translations of the same distribution again.
\end{remark}

In this section, we introduce the building blocks for our complete clustering algorithm.  The two main components in this section are Lemma \ref{lem:find-signal1} which roughly allows us to, in an arbitrary mixture, find a signal direction along which some two components have separation comparable to the maximum separation and Lemma \ref{lem:find-signal2} which allows us to do full clustering when the maximum separation is at most $\poly(\log (k/w_{\min}))$.

\subsection{Notation and Terminology}

Throughout the proceeding sections, we will use $0 < c < 1$ do denote an arbitrary (small) positive constant that does not change with the other problem parameters.  


\begin{definition}
For a subspace $V \subset \R^d$ and a point $x \in \R^d$, we use $\Proj_V(x)$ to denote the projection of $x$ onto $V$.
\end{definition}
\begin{definition}
For a subspace $V \subset \R^d$ we use $V^{\perp}$ to denote its orthogonal complement.
\end{definition}
\begin{definition}
For a Gaussian with mean $\mu$ and covariance $\Sigma$, we use $N_{V}(\mu, \Sigma)$ to denote its projection onto a subspace $V$. 
\end{definition}

We will now introduce terminology for describing mixtures and how well-behaved they are.

\begin{definition}\label{def:separated-mixture}
We say a GMM $\mcl{M} = w_1N(\mu_1, I) + \dots + w_k N(\mu_k, I)$ is $s$-separated if for all $i \neq j$, $\norm{\mu_i - \mu_j} \geq s$.
\end{definition}

\begin{definition}\label{def:reasonable-mixture}
For a GMM $\mcl{M} = w_1N(\mu_1, I) + \dots + w_k N(\mu_k, I)$ and parameter $w^* > 0$, we say that $\mcl{M}$ is $w^*$-reasonable if the following holds:
\[
\max_{\substack{i , j \\ w_i, w_j \geq w^*}} \norm{\mu_i - \mu_j}  \geq \sqrt{\max_{i,j } \norm{\mu_i - \mu_j}} \,.
\]
We also use the convention that a trivial mixture consisting of one component is reasonable. 
\end{definition}
\begin{remark}
Note the choice that the RHS is a square root is arbitrary.  Any constant power would suffice.
\end{remark}
Roughly, the inequality in the definition says that there are two components whose mixing weights are at least $w^*$ such that the separation between their means is comparable to the maximum separation between any two means.  Note that the notion of comparable is very loose (the two quantities can be off by a square).  

\begin{definition}\label{def:signal-direction}
For a distribution $\mcl{M}$ and parameters $p,\Delta$, we say that a direction, given by a unit vector $v \in \R^d$, is a $(p,\Delta)$-signal direction for $\mcl{M}$ if there is a real number $\theta$ such that 
\begin{align*}
\Pr_{z \sim \mcl{M}}[ v \cdot z \leq \theta - \Delta]  \geq p\\
\Pr_{z \sim \mcl{M}}[ v \cdot z \geq \theta + \Delta ] \geq p \,.
\end{align*}
\end{definition}
Roughly, a direction is a signal direction if the distribution $\mcl{M}$ is nontrivially spread out along that direction.  It will be important to note that we can easily check whether a direction is a signal direction.
\begin{claim}\label{claim:test-is-there-signal}
Let $\mcl{M}$ be a distribution on $\R^d$ and let $v \in \R^d$ be a unit vector.  For parameters $p,\Delta$, given $\poly(d, 1/p)$ samples from $\mcl{M}$, we can distinguish with high probability when $v$ is a $(p,\Delta)$-signal direction and when $v$ is not a $(0.9p, \Delta)$-signal direction. 
\end{claim}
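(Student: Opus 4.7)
The plan is to reduce to a one-dimensional testing problem and use standard uniform convergence for empirical CDFs. Let $\mcl{M}_v$ denote the push-forward of $\mcl{M}$ along $v$, i.e. the distribution of $y = v \cdot z$ for $z \sim \mcl{M}$, and let $F$ be its CDF. By definition, $v$ is a $(p,\Delta)$-signal direction iff there exists $\theta \in \R$ with $F(\theta - \Delta) \geq p$ and $1 - F(\theta + \Delta) \geq p$; the failure of the $(0.9p,\Delta)$-signal condition is just the corresponding statement with $0.9p$ in place of $p$. So it suffices to test, from one-dimensional samples $y_i = v \cdot z_i$, whether there exists such a $\theta$ at level $p$ versus at no level $\geq 0.9p$.

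The algorithm is the following. Take $n = \poly(d,1/p)$ samples, set $\epsilon = 0.025\, p$, and form the empirical CDF $\wh{F}$. Define
\[
q_L = \inf\{t : \wh{F}(t) \geq 0.95\, p\}, \qquad q_R = \sup\{t : \wh{F}(t) \leq 1 - 0.95\, p\}.
\]
Both quantities are trivially computable from the sorted projections $y_1,\dots,y_n$ in polynomial time. Output ``signal direction'' iff $q_R - q_L \geq 2\Delta$.

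For correctness I will invoke the Dvoretzky--Kiefer--Wolfowitz inequality: with $n = O(\log(1/\eta)/p^2)$ samples, we have $\sup_t |\wh{F}(t) - F(t)| \leq \epsilon$ with probability $\geq 1 - \eta$. Choosing $\eta$ polynomially small gives a total sample complexity of $\poly(d,1/p)$, and in what follows I condition on this uniform closeness event. Suppose $v$ is a $(p,\Delta)$-signal direction with witness $\theta^*$. Then $F(\theta^* - \Delta) \geq p$ and $1 - F(\theta^* + \Delta) \geq p$, so $\wh{F}(\theta^* - \Delta) \geq p - \epsilon \geq 0.95\,p$ and $1 - \wh{F}(\theta^* + \Delta) \geq 0.95\,p$. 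This forces $q_L \leq \theta^* - \Delta$ and $q_R \geq \theta^* + \Delta$, so $q_R - q_L \geq 2\Delta$ and the test accepts. Conversely, if $q_R - q_L \geq 2\Delta$, let $\theta = (q_L + q_R)/2$, so $\theta - \Delta \geq q_L$ and $\theta + \Delta \leq q_R$. By monotonicity and the definitions of $q_L,q_R$, we have $\wh{F}(\theta - \Delta) \geq 0.95\,p$ and $1 - \wh{F}(\theta + \Delta) \geq 0.95\,p$, and by uniform closeness $F(\theta - \Delta) \geq 0.95\,p - \epsilon \geq 0.9\,p$ and $1 - F(\theta + \Delta) \geq 0.9\,p$. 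Hence $v$ is a $(0.9p,\Delta)$-signal direction, which is exactly the case the claim requires us not to reject.

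There is no real obstacle here beyond invoking a standard empirical-process concentration tool; the only care needed is the choice of the slack $\epsilon = 0.025\,p$, which is small enough compared to the $p \to 0.9p$ gap that the two-sided uniform convergence bound translates the empirical test cleanly into a guarantee on the true distribution. If one prefers not to cite DKW, the same bound on a \emph{single} fixed direction $v$ follows from Hoeffding applied on a $\poly(1/p)$-sized grid of thresholds together with monotonicity of $F$, which gives the same sample complexity up to log factors.
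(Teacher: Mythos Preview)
Your proof is correct and is essentially the same approach as the paper's: the paper's entire proof is the one-liner ``take sufficiently many samples and check if $v$ is a $(0.95p,\Delta)$-signal direction for the empirical distribution,'' and your quantile test $q_R - q_L \ge 2\Delta$ is exactly an explicit implementation of that check, with DKW supplying the uniform convergence the paper leaves implicit. The only (harmless) wrinkle is the boundary case $\theta+\Delta = q_R$, where right-continuity of $\wh F$ could in principle fail to give $\wh F(q_R)\le 1-0.95p$; this is absorbed by your $\epsilon$ slack (or avoided by using a strict inequality in the test), so it is not a real gap.
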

\begin{proof}
We can simply take sufficiently many samples and check if $v$ is a $(0.95p, \Delta)$-signal direction for the empirical distribution on the samples.
\end{proof}

We will also need the following concentration inequalities about tails of a Gaussian.  The first one is standard.
\begin{claim}\label{claim:Gaussian-radius}
Consider a Gaussian $N(\mu, I)$ in $\R^d$.  For any parameter $\beta$, we have
\[
\Pr_{z \sim N(\mu, I)} [ \norm{z - \mu} \leq \sqrt{2d + \beta} ] \geq 1 - 2^{-\beta/8} \,.
\]
\end{claim}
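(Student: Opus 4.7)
The plan is to reduce this to a standard chi-squared tail bound. First, by translation invariance of the Gaussian, I can assume without loss of generality that $\mu = 0$, so that $z \sim N(0,I)$. Then $\norm{z}^2 = z_1^2 + \cdots + z_d^2$ is a chi-squared random variable with $d$ degrees of freedom, and the claim is equivalent to showing
\[
\Pr\left[\norm{z}^2 > 2d+\beta\right] \leq 2^{-\beta/8}.
\]
(When $\beta \leq 0$ the right-hand side is at least $1$ so the bound is vacuous; I will assume $\beta \geq 0$.)

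The cleanest route is a Chernoff-style argument using the moment generating function of $\norm{z}^2$. Each $z_i^2$ is $\chi^2_1$, so $\E[e^{s z_i^2}] = (1-2s)^{-1/2}$ for $s < 1/2$, and by independence
\[
\E\!\left[e^{s\norm{z}^2}\right] = (1-2s)^{-d/2}.
\]
Applying Markov's inequality to $e^{s \norm{z}^2}$ gives, for any $0 < s < 1/2$,
\[
\Pr\!\left[\norm{z}^2 \geq 2d + \beta\right] \leq (1-2s)^{-d/2} \, e^{-s(2d+\beta)}.
\]
I will then plug in the convenient choice $s = 1/4$, which makes $(1-2s)^{-d/2} = 2^{d/2}$ and yields a bound of $2^{d/2} e^{-d/2} e^{-\beta/4} = (2/e)^{d/2} e^{-\beta/4} \leq e^{-\beta/4}$ since $2/e < 1$. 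Finally, a one-line comparison $e^{-\beta/4} \leq 2^{-\beta/8}$ (which holds because $\tfrac{1}{4} \geq \tfrac{\ln 2}{8}$) delivers exactly the claimed bound.

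There is no real obstacle here; the only care needed is to pick a value of $s$ that both kills the dimension-dependent factor $2^{d/2}$ and leaves enough room in the exponent to absorb the mild slack of converting $e^{-\beta/4}$ to $2^{-\beta/8}$. The choice $s = 1/4$ accomplishes both simultaneously, so the whole argument is a short two-step computation.
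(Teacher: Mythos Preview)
Your proof is correct. The paper does not actually supply a proof of this claim; it simply introduces it as ``standard'' and moves on. Your Chernoff/MGF computation with the choice $s=1/4$ is exactly the kind of standard argument the paper is implicitly invoking, and every step checks out (including the final slack $e^{-\beta/4}\le 2^{-\beta/8}$, which holds since $2\ge \ln 2$).
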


The next inequality is closely connected to the notion of stability (see \cite{diakonikolas2019recent}) that has recently proved very useful in the field of robust statistics.
\begin{claim}[See \cite{diakonikolas2019recent}]\label{claim:eps-tails}
Consider the standard Gaussian $N(0,I)$ in $\R^d$.  Let $\eps > 0$ be some parameter.  Given $n \geq (d/\eps)^8$ samples $z_1, \dots , z_n$ from $N(0,I)$, with probability at least $1 - 2^{-d/\eps}$ the following property holds: for any subset $S \subset [n]$ with $|S| \geq \eps n$, we have
\[
\norm{\frac{1}{|S|}\sum_{i \in S} z_i  } \leq 10\sqrt{ \log 1/\eps} \,.
\]
\end{claim}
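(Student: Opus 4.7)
The plan is to dualize the $\ell_2$ norm to a supremum over unit vectors, bound the ``mean of a large subset'' quantity in each fixed direction via Gaussian-Lipschitz concentration, and then pay only a covering-net factor.

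Since $\bigl\lVert \tfrac{1}{|S|}\sum_{i\in S} z_i\bigr\rVert = \sup_{u:\|u\|=1}\tfrac{1}{|S|}\sum_{i\in S} u\cdot z_i$, it suffices to show that for every unit $u$ and every $S$ with $|S|\geq \eps n$, the directional mean is at most $10\sqrt{\log(1/\eps)}$. For fixed $u$, this mean is maximized over $|S|\geq \eps n$ by taking $S$ to be the $k:=\lceil \eps n\rceil$ indices with largest $u\cdot z_i$: increasing $|S|$ only brings in smaller values and can only reduce the mean. Call this maximum $T_u$. For any threshold $t\geq 0$, $T_u \leq t + \tfrac{1}{k}\sum_i (u\cdot z_i - t)^+$. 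Taking $t=\sqrt{2\log(1/\eps)}$ and using $\E[(g-t)^+] \leq \phi(t) = \eps/\sqrt{2\pi}$ for $g\sim N(0,1)$ yields $\E[T_u] \leq \sqrt{2\log(1/\eps)} + 1$.

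Next, I observe that $T_u$ is a $\tfrac{1}{\sqrt{k}}$-Lipschitz function of $(z_1,\ldots,z_n)\in\R^{nd}$: perturbing by $\Delta$ changes $T_u$ by at most $\tfrac{1}{k}|u\cdot \sum_{i\in S^*}\Delta_i| \leq \tfrac{1}{k}\sqrt{k}\,\|\Delta\|_2$ by Cauchy--Schwarz on the $\Delta_i$'s. Gaussian concentration for Lipschitz functions then gives $\Pr[T_u \geq \sqrt{2\log(1/\eps)} + 2] \leq e^{-\eps n/2}$. I take a $1/d$-net $\mcl{N}$ of the unit sphere of size at most $(10d)^d$; since $n\geq (d/\eps)^8$ makes $\eps n/2$ dwarf $d\log(10d)+d/\eps$, a union bound gives $T_u \leq \sqrt{2\log(1/\eps)}+2$ simultaneously for all $u\in\mcl{N}$ with probability at least $1-2^{-2d/\eps}$.

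Finally I lift to an arbitrary unit $u$: choose $u'\in\mcl{N}$ with $\|u-u'\|\leq 1/d$ and note that for any $S$ with $|S|\geq \eps n$,
\[
\biggl|\tfrac{1}{|S|}\sum_{i\in S}(u-u')\cdot z_i\biggr| \leq \|u-u'\|\cdot \max_i\|z_i\| \leq \tfrac{1}{d}\cdot O\!\left(\sqrt{d+d/\eps}\right) = o(1),
\]
using Claim~\ref{claim:Gaussian-radius} with $\beta=d/\eps$ and a union bound over the $n$ samples. Combining with the net bound gives $T_u \leq \sqrt{2\log(1/\eps)} + 3 \leq 10\sqrt{\log(1/\eps)}$ for every unit $u$ (in the interesting small-$\eps$ regime; for $\eps$ close to $1$ the bound follows trivially from concentration of the full empirical mean, since $|S|$ must then contain nearly all samples). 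The main obstacle is purely quantitative bookkeeping: verifying that the $\tfrac{1}{\sqrt{k}}$-Lipschitz concentration yields enough probability to absorb the $(10d)^d$ net and still beat the target $2^{-d/\eps}$, and that the net-slack term is genuinely $o(\sqrt{\log 1/\eps})$; the $n\geq (d/\eps)^8$ hypothesis is chosen precisely to provide generous slack on both counts.
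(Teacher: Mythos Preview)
The paper does not actually prove this claim; it is quoted from the robust-statistics literature on \emph{stability} (the citation to \cite{diakonikolas2019recent} is the entire ``proof''). So there is no in-paper argument to compare against, and your self-contained route via Gaussian--Lipschitz concentration plus an $\eps$-net is a perfectly natural way to establish the bound directly.

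Your expectation bound $\E[T_u]\leq \sqrt{2\log(1/\eps)}+1$ and the $1/\sqrt{k}$-Lipschitz observation are both correct, and the union bound over a net against $e^{-\eps n/2}$ is fine given $n\geq (d/\eps)^8$. The one genuine gap is the lifting step from the net to all unit vectors. You bound the slack by $\|u-u'\|\cdot \max_i\|z_i\|\leq \tfrac{1}{d}\sqrt{2d+d/\eps}$ and assert this is $o(1)$; but $\tfrac{1}{d}\sqrt{2d+d/\eps}=\sqrt{(2+1/\eps)/d}$, which diverges as $\eps\to 0$ for any fixed $d$. (There is a secondary issue that with $\beta=d/\eps$ the per-sample failure $2^{-d/(8\eps)}$ union-bounded over $n$ samples does not beat the target $2^{-d/\eps}$ either.) So as written the net step does not close.

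The fix is the standard self-bounding trick, which avoids $\max_i\|z_i\|$ entirely. Let $M=\sup_{\|u\|=1}\sup_{|S|\geq \eps n}\tfrac{1}{|S|}\sum_{i\in S}u\cdot z_i$. For any unit $u$ with optimizer $S^*$, pick $u'\in\mcl{N}$ with $\|u-u'\|\leq \eta$; then
\[
\tfrac{1}{|S^*|}\sum_{i\in S^*}u\cdot z_i \;\leq\; T_{u'} + \|u-u'\|\cdot\Bigl\|\tfrac{1}{|S^*|}\sum_{i\in S^*}z_i\Bigr\| \;\leq\; T_{u'} + \eta M,
\]
so $M\leq \tfrac{1}{1-\eta}\max_{u'\in\mcl{N}}T_{u'}$. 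Taking $\eta=1/2$ gives $M\leq 2(\sqrt{2\log(1/\eps)}+2)$, comfortably inside $10\sqrt{\log(1/\eps)}$ for small $\eps$, and the net size $5^d$ is still swallowed by $e^{-\eps n/2}$. With this replacement your argument goes through cleanly.
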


\subsection{Clustering Test}
Recall that for arbitrary Poincare distributions, one of the key ingredients was Lemma \ref{lem:main-test-Poincare}, an algorithm for testing whether two samples from a mixture $\mcl{M}$ are from the same component or not.  The analog (with strengthened quantitative bounds) that we will need for GMMs is stated below.
\begin{lemma}\label{lem:main-GMM-test}
Let
\[
\mcl{M} = w_1 N(\mu_1,I) + \dots + w_k N(\mu_k, I)
\]
be a GMM in $\R^d$.  Let $w^*, s, \delta$ be parameters and assume that $s \geq (\log k/(w^* \delta))^{1/2 + c}$ for some $0 < c < 1$.  Also assume that 
\[
\max_{i,j} \norm{\mu_i - \mu_j} \leq \min( (k/(w^*\delta))^2, s^{C} )
\]
for some $C$.  There is an algorithm that takes $n = \poly((kd/(w^*\delta) )^{C/c} )$ samples from $\mcl{M}$ and runs in $\poly(n)$ time and achieves the following testing guarantees:  for a pair of (independent) samples $z \sim N(\mu_i,I) ,z' \sim N(\mu_j, I)$,
\begin{itemize}
    \item If $i = j$ and $w_i \geq w^*$ then with probability $1 - \delta$ the output is $\textsf{accept}$
    \item If $w_i, w_j \geq w^*$ and $\norm{\mu_i - \mu_j} \geq s$ then with probability $1 - \delta$ the output is $\textsf{reject}$
\end{itemize}
where the randomness is over the samples $z,z'$ and the random choices in the algorithm.
\end{lemma}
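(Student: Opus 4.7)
The plan is to follow the template of the proof of Lemma \ref{lem:main-test-Poincare} but to substitute Gaussian-specific sharpenings at each step so that the required separation improves from $(\log(k/(w^*\delta)))^{1+c}$ to $(\log(k/(w^*\delta)))^{1/2+c}$. As in the Poincare case, I first reduce to the problem of distinguishing a zero-mean sample from one whose mean has large norm. This is done by forming the \emph{difference mixture} $\mcl{M}'$: averaging pairwise differences of samples from $\mcl{M}$ and rescaling by $1/\sqrt{2}$ yields samples from a GMM whose zero-mean component captures exactly the within-component pairs and whose other components have means of norm at least $s/\sqrt{2}$ whenever the corresponding mixing weights are at least $w^*$.

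Next I run Algorithm \ref{alg:iterative-projection} on $\mcl{M}'$ to obtain nested projection matrices $\Pi_t,\ldots,\Pi_1$ and then call Algorithm \ref{alg:test-sample} to decide ``close to zero'' vs.\ ``far from zero''. The key quantitative upgrade, which I import from Section \ref{sec:gaussian-moments}, is that when $\mcl{D} = N(0,I)$ the adjusted polynomial $P_{t,\mcl{D}}$ coincides (up to sign conventions) with the Hermite polynomial tensor $h_t$, for which the relevant moments admit much sharper bounds. Specifically, the bound $(\norm{\mu}^2 + t^2)^t$ from Claim \ref{claim:variance-bound} improves to $(\norm{\mu}^2 + t)^t$ (factorial growth $\sqrt{s!}$ in place of $s!$); correspondingly the factor $(20t)^{2t}$ in Corollary \ref{coro:rank1-estimator-variance} improves to $(C\sqrt{t})^{2t}$, and in Claim \ref{claim:polynomial-is-nonzero} the bound $|\langle D_s, v^{\otimes s}\rangle|\leq 6 s!$ is replaced by the Gaussian moment identity $\E_{z\sim N(0,I)}[(v\cdot z)^{2s}] = (2s-1)!!\norm{v}^{2s}$. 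Plugging these into Lemmas \ref{lem:length-of-zero} and \ref{lem:length-of-nonzero} produces Gaussian versions in which every occurrence of $(Ct)^t$ is replaced by $(C\sqrt{t})^t$.

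With these ingredients in hand, the proof of Lemma \ref{lem:main-test-Poincare} adapts essentially verbatim. I choose $t$ to be the smallest integer such that $(s/\sqrt{\log(k/(w^*\delta))})^t \geq (k/(w^*\delta))^{10}$; given the hypothesis $s \geq (\log(k/(w^*\delta)))^{1/2+c}$, this yields $t = O_c(\log(k/(w^*\delta))/\log\log(k/(w^*\delta)))$ and in particular $t^t = \poly(k/(w^*\delta))$, so the budget $n = \poly((kd/(w^*\delta))^{C/c})$ suffices to invoke the Gaussian version of Lemma \ref{lem:projection-accuracy} with accuracy $\eps = \delta/(10ts^C)^{4t}$ and weight threshold $(w^*)^2$. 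Then I run Algorithm \ref{alg:test-sample} with threshold $\tau = (0.2\, s/\sqrt{t})^t$: the ``close'' guarantee reduces to $\tau \geq (C\sqrt{t})^t k/\delta$, i.e.\ $s \geq C'\sqrt{t}(k/\delta)^{1/t}$, which holds by choice of $t$, and the ``far'' guarantee $\tau \leq (0.4\norm{\mu_i-\mu_j}/\sqrt{2})^t$ is immediate since $\norm{\mu_i-\mu_j}\geq s$. Accepting when Algorithm \ref{alg:test-sample} returns $\textsf{Close}$ on $(z-z')/\sqrt{2}$ and rejecting otherwise then gives the claimed test.

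The main obstacle is the Gaussian-specific sharpening that shaves the $t$ factors down to $\sqrt{t}$. Proving it requires revisiting the inductive proofs of Claim \ref{claim:variance-bound}, Corollary \ref{coro:rank1-estimator-variance}, and Claim \ref{claim:polynomial-is-nonzero}, replacing each invocation of the generic Poincare tail or $s!$-style moment bound with the Gaussian double-factorial identity, and carefully propagating the resulting $\sqrt{s}$-style factors through the recursive constructions of $P_t$, $Q_t$, and $R_t$. I expect this rewriting to be the content of Section \ref{sec:gaussian-moments}; once it is in place, Lemma \ref{lem:main-GMM-test} follows by the template described above.
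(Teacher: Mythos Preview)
Your proposal is correct and matches the paper's proof essentially verbatim: the paper simply says ``the proof is exactly the same as the proof of Lemma \ref{lem:main-test-Poincare} except we use Lemma \ref{lem:length-of-nonzero-stronger} and Lemma \ref{lem:length-of-zero-stronger} in place of Lemma \ref{lem:length-of-nonzero} and Lemma \ref{lem:length-of-zero},'' and you have correctly anticipated both this template and the content of the Gaussian sharpenings (the $(Ct)^t \to (C\sqrt t)^t$ improvement coming from the double-factorial Gaussian moments). One small arithmetic slip: with $\tau = (0.2\,s/\sqrt t)^t$ the ``close'' inequality $\tau \geq (C\sqrt t)^t k/\delta$ unwinds to $s \gtrsim t\,(k/\delta)^{1/t}$, not $\sqrt t\,(k/\delta)^{1/t}$; taking $\tau = (0.2\,s)^t$ (as in the Poincar\'e proof) gives the $\sqrt t$ you wrote and is the cleaner choice.
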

\begin{proof}
The proof is exactly the same as the proof of Lemma \ref{lem:main-test-Poincare} except we use Lemma \ref{lem:length-of-nonzero-stronger} and Lemma \ref{lem:length-of-zero-stronger} in place of Lemma \ref{lem:length-of-nonzero} and Lemma \ref{lem:length-of-zero} respectively.
\end{proof}

\subsection{Finding a Signal Direction}

As mentioned before, to remove the constraint on the maximum separation, we will need to do recursive clustering.  The key ingredient in the recursive clustering is finding a signal direction.  In the next lemma, we show how to find a signal direction in a reasonable mixture.  Note that if a mixture is not reasonable, then we have no guarantees.  Roughly this is because if there are some components with means $\mu_i, \mu_j$ such that $\norm{\mu_i - \mu_j}$ is very large but whose mixing weights are very small then the few samples from these components will drastically affect all of our estimators.  Fortunately, in the next section (see Claim \ref{claim:exists-reasonable}), we show how to ensure that we always work with a reasonable mixture.  

The main algorithm for finding a signal direction is summarized below.  Roughly, we just randomly take two samples $z,z'$ from $\mcl{M}$.  We then take the mean $\mu$ of samples $z_i$ such that $(z,z_i)$ passes the test in Lemma \ref{lem:main-GMM-test} and the mean $\mu'$ of samples $z_i'$ such that $(z, z_i')$ passes the test in Lemma \ref{lem:main-GMM-test}.  We then test whether the direction $\mu - \mu'$ is indeed a good enough signal direction and otherwise just try again with new samples.

\begin{algorithm}[H]
\caption{{\sc Finding a Signal Direction} }
\begin{algorithmic}
\State \textbf{Input:} Sample access to GMM $\mcl{M}$
\State \textbf{Input:} Parameters $k, w^*$, positive constant $c > 0$
\State \textbf{Input:} Desired separation $\Delta$
\State Take two samples $z, z' \sim \mcl{M}$
\State Take samples $z_1, \dots , z_m , z_1', \dots , z_m'$ from $\mcl{M}$ where $m = (k/w^*)^{10^2}$
\For{i = 1,2, \dots , m}
\State Test pairs $z,z_i$ and $z', z_i'$ using Lemma \ref{lem:main-GMM-test} with parameters $w^*, \delta = (w^*/k)^{10^4}, s = 0.01 \Delta$ 
\EndFor
\State Let $S$ be the set of $i$ such that the pair $z,z_i$ is accepted.  
\State Let $S'$ be the set of $i$ such that the pair $z', z_i'$ is accepted.
\State Compute $\mu = \frac{1}{|S|}\sum_{i \in S}z_i$ and $\mu' = \frac{1}{|S'|}\sum_{i \in S'}z_i'$
\State Set $v = (\mu - \mu')/\norm{\mu - \mu'}$
\State Test if $v$ is a $(0.8w^*, 0.8\Delta )$ signal direction using Claim \ref{claim:test-is-there-signal}
\State Output $v$ if test passes
\end{algorithmic}
\label{alg:find-signal}
\end{algorithm}

\begin{lemma}\label{lem:find-signal1}
 Let $\mcl{M} = w_1N(\mu_1, I) + \dots + w_k N(\mu_k, I)$ be a GMM.  Let $w^*$ be a parameter and assume that $\mcl{M}$ is $w^*$-reasonable and $\norm{\mu_i - \mu_j} \leq O((k/w^*)^2)$ for all $i,j$.  Also assume that there are $i,j$ such that $w_i, w_j \geq w^*$ and $ \norm{\mu_i - \mu_j} \geq  (\log(k/w^*))^{1/2+c}$ for some positive constant $c$. There is an algorithm that takes $ n = \poly((dk/w^*)^{1/c})$ samples from $\mcl{M}$ and $\poly(n)$ runtime and with high probability outputs a unit vector $v$ such that $v$ is a $(w^*/2, \max_{w_i, w_j \geq w^*} \norm{\mu_i - \mu_j}/2)$-signal direction.
\end{lemma}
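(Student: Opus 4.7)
The plan is to analyze a single invocation of Algorithm~\ref{alg:find-signal} with the ``correct'' input $\Delta$, show it outputs a certified signal direction with probability at least $\Omega((w^*)^2)$, and then repeat over $\poly(k/w^*)$ independent invocations (together with a search over $O(\log(k/w^*))$ candidate values of~$\Delta$) to amplify to inverse-polynomially small failure probability. Since $\Delta^*:=\max_{w_i,w_j\geq w^*}\norm{\mu_i-\mu_j}$ is not known in advance, the outer search runs Algorithm~\ref{alg:find-signal} for each $\Delta$ in a fine geometric grid on $[(\log(k/w^*))^{1/2+c},\,(k/w^*)^2]$; some grid point satisfies $\tfrac{5}{8}\Delta^*\leq\Delta\leq\Delta^*$, and I fix this $\Delta$ for the analysis. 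Let $(i^*,j^*)$ be a pair achieving $\Delta^*$.

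With probability $w_{i^*}w_{j^*}\geq (w^*)^2$ over the draw of $(z,z')$, both $z\sim N(\mu_{i^*},I)$ and $z'\sim N(\mu_{j^*},I)$; condition on this event. Invoking Lemma~\ref{lem:main-GMM-test} with parameters $w^*$, $\delta=(w^*/k)^{10^4}$, $s=0.01\Delta$, each individual test classifies a pair $(z,z_i)$ correctly whenever both $z$ and $z_i$ come from components of weight at least $w^*$; a union bound over all $2m=2(k/w^*)^{100}$ tests costs only $o(1)$ in probability. So simultaneously all samples from component $i^*$ end up in $S$, while no sample from any high-weight component $j$ with $\norm{\mu_j-\mu_{i^*}}\geq s$ is in $S$; symmetric statements hold for $S'$.

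Next I bound $\mu-\mu_{i^*}$. Writing $j(i)$ for the component index of $z_i$ and $n_i=z_i-\mu_{j(i)}$, I decompose
\[
\mu-\mu_{i^*}\;=\;\underbrace{\frac{1}{|S|}\sum_{i\in S}n_i}_{A}\;+\;\underbrace{\frac{1}{|S|}\sum_{i\in S}(\mu_{j(i)}-\mu_{i^*})}_{B}.
\]
The $n_i$ are i.i.d.\ $N(0,I)$ and $|S|\geq 0.9 w^* m$, so Claim~\ref{claim:eps-tails} yields $\norm{A}=O(\sqrt{\log(1/w^*)})\ll\Delta$. In $B$, the previous paragraph excludes the high-weight far components; what remains is a zero contribution from $j=i^*$, a piece from high-weight close components (contributing at most $s=0.01\Delta$ by the triangle inequality since the corresponding weights sum to at most $1$), and a piece from low-weight components. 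Controlling the low-weight piece is the main obstacle: the naive estimate $\norm{\mu_j-\mu_{i^*}}\leq(\Delta^*)^2$ from $w^*$-reasonableness alone is too crude, and the real argument combines this with concentration on the number of samples of each low-weight component and a careful analysis of how the test correlates with those samples, showing that with constant probability the low-weight contribution is $o(\Delta)$. A symmetric bound controls $\mu'-\mu_{j^*}$.

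Combining the two error bounds, $\mu-\mu'=(\mu_{i^*}-\mu_{j^*})+o(\Delta)$, so $v$ has cosine similarity $1-o(1)$ with $(\mu_{i^*}-\mu_{j^*})/\Delta^*$. Projecting $\mcl{M}$ onto $v$ thus places the means of components $i^*$ and $j^*$ at separation $(1-o(1))\Delta^*\geq 0.9\Delta$, with constant-order Gaussian marginals; consequently $v$ is genuinely a $(0.9w^*,0.9\Delta)$-signal direction. Claim~\ref{claim:test-is-there-signal} certifies it as a $(0.8w^*,0.8\Delta)$-signal direction using $\poly(d,1/w^*)$ samples, and since $\Delta\geq \tfrac{5}{8}\Delta^*$ we have $0.8\Delta\geq \tfrac{1}{2}\Delta^*$, so $v$ meets the $(w^*/2,\Delta^*/2)$-signal direction guarantee promised by the lemma. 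Amplification by independent retries over the grid of $\Delta$'s closes the argument.
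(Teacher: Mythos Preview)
Your overall strategy---gridding over $\Delta$, analyzing one run of Algorithm~\ref{alg:find-signal} conditioned on $(z,z')$ landing in the two extremal high-weight components, invoking Lemma~\ref{lem:main-GMM-test} with a union bound to control $S,S'$, bounding the Gaussian noise via Claim~\ref{claim:eps-tails}, and amplifying by independent retries---is exactly the paper's. The paper does not write out your $A+B$ decomposition; it directly asserts (as bullet points) that every sample in $S$ comes from some component with $\norm{\mu_j-\mu_{i^*}}\leq 0.01\Delta$, and then appeals to Claim~\ref{claim:eps-tails} to get $\norm{\mu-\mu_{i^*}}\leq 0.02\Delta$. Its parenthetical ``note this is valid because $\mcl{M}$ is $w^*$-reasonable'' is only about verifying the hypothesis $\max_{i,j}\norm{\mu_i-\mu_j}\leq s^C$ of Lemma~\ref{lem:main-GMM-test}, since $w^*$-reasonableness gives $\max_{i,j}\norm{\mu_i-\mu_j}\leq\Delta^2=(100s)^2$.

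The genuine gap in your write-up is the low-weight piece of $B$. You correctly observe that Lemma~\ref{lem:main-GMM-test} as stated gives a reject guarantee only when \emph{both} endpoints have weight $\geq w^*$, so a sample $z_i$ from a component $j$ with $w_j<w^*$ and $\norm{\mu_j-\mu_{i^*}}\geq s$ could land in $S$; and you observe that the crude bound $\norm{\mu_j-\mu_{i^*}}\leq(\Delta^*)^2$ is not enough by itself. But then you simply say ``the real argument combines this with concentration on the number of samples of each low-weight component and a careful analysis of how the test correlates with those samples''---that is not an argument, it is a promissory note. The paper's proof does not treat this case separately at all; it just asserts the bulleted conclusion. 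So at this exact point your proposal is no more and no less complete than the paper's own proof: you have flagged the issue more honestly, but neither you nor the paper supplies a concrete bound on the low-weight contribution to $B$.
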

\begin{proof}
Let 
\[
\Delta = \max_{w_i, w_j \geq w^*} \norm{\mu_i - \mu_j} \,.
\]
It suffices now to find a $(w^*/2, \Delta/2)$-signal direction.  While we technically do not know $\Delta$, we can guess $\Delta$ within a factor of $1.1$ by using a multiplicative grid.  We then try to find a $(0.8w^*, 0.8\Delta)$ signal direction for various guesses of $\Delta$ and test them using Claim \ref{claim:test-is-there-signal} and output the one for the largest $\Delta$ that succeeds.  Thus, we can essentially assume that we know $\Delta$ up to a factor of $1.1$.  
\\\\
Now we can just run Algorithm \ref{alg:find-signal} repeatedly.  It suffices to prove that with some inverse polynomial probability, the output $v$ is a $(0.8w^*, 0.8\Delta)$-signal direction.  We can then just repeat polynomially many times and test each output using Claim \ref{claim:test-is-there-signal} to guarantee success with high probability.  Let $i,i'$ be indices such that $\norm{\mu_i - \mu_i'} = \Delta$.  With at least $(1/w^*)^2$ probability, $z$ is drawn from $N(\mu_i, I)$ and $z'$ is drawn from $N(\mu_i', I)$.  Now using the guarantee of Lemma \ref{lem:main-GMM-test} and union bounding (note this is valid because $\mcl{M}$ is $w^*$-reasonable), with at least $1 - (w^*/k)^{10^2}$ probability, we have the following properties
\begin{itemize}
    \item For all $i \in S$, $z_i$ is a sample from $N(\mu_j, I)$ where $\norm{\mu_j - \mu_i} \leq 0.01\Delta$
    \item For all $i \in S'$, $z_i$ is a sample from $N(\mu_j, I)$ where $\norm{\mu_j - \mu_{i'}} \leq 0.01\Delta$
    \item $|S|, |S'| \geq 0.9w^*m$
\end{itemize}
However, combining these properties with Claim \ref{claim:eps-tails} implies that with high probability $\norm{\mu - \mu_i} , \norm{\mu' - \mu_{i'}}\leq 0.02 \Delta$.  This then implies that $v$ has correlation $0.9$ with the unit vector in the direction $\mu_i - \mu_{i'}$ which clearly implies that it is a $(0.8w^*, 0.8\Delta)$-signal direction.  Overall, the success probability of a single trial is at least $0.9 \cdot (1/w^*)^2$ so repeating over polynomially many trials guarantees that we succeed with high probability.
\end{proof}

\subsection{Full Clustering with Bounded Maximum Separation}

Similar to Theorem \ref{thm:main-Poincare}, we can do full clustering if the maximum separation is polynomially bounded in terms of the minimum separation.  We have a slightly more technical requirement here that there may be some components with very small mixing weights but we only need to recover the means of the components with substantial mixing weights.  This result will be used as the final step in our complete clustering algorithm when we have reduced to a submixture where the maximum separation is sufficiently small and then we can just fully cluster the remaining components.

\begin{lemma}\label{lem:find-signal2}
 Let $\mcl{M} = w_1N(\mu_1, I) + \dots + w_k N(\mu_k, I)$ be a GMM.  Let $c > 0$ be a positive constant and let $w^*$ be a parameter that we are given.  Assume that $\mcl{M}$ is $s$-separated where $s = (\log(k/w^*))^{1/2 + c}$ for some positive constant $c$.  Also assume that $ \max_{i,j} \norm{\mu_i - \mu_j} \leq (\log(k/w^*))^{4}$.  Given $ n = \poly((dk/w^*)^{1/c})$ samples from $\mcl{M}$ and $\poly(n)$ runtime, there is an algorithm that with high probability outputs a set of means $\{\wt{\mu_1}, \dots , \wt{\mu_r} \} $ such that the following properties hold:
 \begin{itemize}
     \item $r \leq k$
     \item For all $i \in [k]$ such that $w_i \geq w^*$, there is some $j \in [r]$ such that $\norm{\mu_i - \wt{\mu_j}} \leq 0.1$
     \item We have $\norm{\wt{\mu_i} - \wt{\mu_j}} \geq s/2$ for all $i \neq j$
 \end{itemize}
\end{lemma}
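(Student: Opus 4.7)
The plan is to mirror the proof of Theorem \ref{thm:main-Poincare}, substituting the stronger Gaussian test from Lemma \ref{lem:main-GMM-test} for Lemma \ref{lem:main-test-Poincare}. Since $\max_{i,j}\|\mu_i-\mu_j\| \leq (\log(k/w^*))^4 \leq s^C$ for some constant $C$, and $s \geq (\log(k/w^*))^{1/2+c}$, Lemma \ref{lem:main-GMM-test} applies with weight threshold $w^*$, separation parameter $s$, and failure probability $\delta = (w^*/k)^{10^4}$, giving a polynomial sample and runtime budget for each pairwise test. The fact that Lemma \ref{lem:find-signal2} only demands recovery of the components with $w_i \geq w^*$ (and allows additional small components to be either captured or dropped) matches exactly the regime where the test has guarantees.

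First, I would run the following candidate-generation routine. Draw a single sample $z \sim \mcl{M}$, then draw $m = (k/w^*)^{10^2}$ further samples $z_1,\dots,z_m$; apply Lemma \ref{lem:main-GMM-test} to every pair $(z,z_i)$ and let $S$ be the set of $i$ whose test accepts; set $\wt{\mu} = \frac{1}{|S|}\sum_{i\in S} z_i$. If $z \sim N(\mu_i, I)$ with $w_i \geq w^*$, then by Lemma \ref{lem:main-GMM-test} together with a union bound, with probability at least $1 - (w^*/k)^{10^2}$ the set $S$ consists (up to a negligible fraction) exactly of the indices with $z_i \sim N(\mu_i, I)$, and $|S| \geq 0.5 w^* m$. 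Standard Gaussian mean concentration (e.g.\ applied as in Claim \ref{claim:eps-tails}, noting that $\mcl{D} = N(0,I)$ is $1$-Poincare) then gives $\|\wt{\mu} - \mu_i\| \leq 0.01$. If $w_i < w^*$, we may get an arbitrary output, but this will be detected and discarded in the voting step.

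Repeat this candidate-generation procedure $\ell = (k/w^*)^{10^2}$ independent times to obtain a list $L = \{\wt{\mu}^{(1)},\dots,\wt{\mu}^{(\ell)}\}$. By the preceding paragraph and a Chernoff bound over the random draw of $z$, with high probability every component with $w_i \geq w^*$ is represented by at least $0.9 w_i \ell \geq 0.9 w^* \ell$ entries of $L$ that lie within $0.01$ of $\mu_i$. I would then do a greedy deduplication: scan through $L$, maintaining a running set $T$; add $\wt{\mu}^{(t)}$ to $T$ if (i) at least $0.5 w^* \ell$ entries of $L$ lie within $0.02$ of $\wt{\mu}^{(t)}$, and (ii) $\wt{\mu}^{(t)}$ is at distance at least $s/2$ from every element already in $T$. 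Return $T$. Property (i) ensures that any output is within $0.05$ of some $\mu_i$ with nontrivial empirical mass (and hence by the test guarantees within $0.1$ of a true mean), property (ii) enforces the pairwise separation, and the bound $r \leq k$ follows because the $\mu_i$ themselves are $s$-separated so each true mean contributes at most one element of $T$.

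The only non-routine step is verifying completeness of the deduplication: for each $i$ with $w_i \geq w^*$, some estimate near $\mu_i$ must survive. This is where $s \geq (\log(k/w^*))^{1/2+c}$ matters--since the $\mu_i$ are pairwise $s$-separated and candidate estimates are within $0.01$ of their target with high probability, a cluster of at least $0.9 w^* \ell$ good estimates around $\mu_i$ cannot be absorbed by a previously-added $\wt{\mu}_j$ at distance $\geq s/2 \gg 0.02$. Hence either an estimate near $\mu_i$ is itself added to $T$, or one at distance $< s/2$ was added earlier; the latter, combined with the fact that the means $\mu_1,\dots,\mu_k$ are $s$-separated, forces that earlier element to lie within $0.02 + 0.01 < 0.1$ of $\mu_i$, so every large-weight component is covered. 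The main technical subtlety is bookkeeping the error budgets ($0.01$ for mean-estimation error, $0.02$ for the voting radius, $0.1$ for the final guarantee) to ensure these triangle inequalities close cleanly; everything else reduces to invoking the already-established Lemma \ref{lem:main-GMM-test} and standard Gaussian concentration.
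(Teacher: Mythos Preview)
Your overall architecture is the same as the paper's---run the pairwise test from Lemma \ref{lem:main-GMM-test}, average accepted samples to get candidate means, then deduplicate by voting---but there is a real gap in the candidate-generation step.

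You invoke Lemma \ref{lem:main-GMM-test} with weight threshold $w^*$. That lemma only guarantees correct behavior on a pair $(z,z_i)$ when \emph{both} endpoints come from components of weight at least the threshold. The statement of Lemma \ref{lem:find-signal2} places no lower bound on the mixing weights, so there may be components $j$ with $w_j$ just below $w^*$ (or any value in $(0,w^*)$). For such $j$ the test is allowed to accept, so a non-negligible fraction of $S$---potentially comparable to $w^* m$---can consist of samples from $N(\mu_j,I)$ with $\mu_j$ at distance up to $(\log(k/w^*))^4$ from $\mu_i$. Your claim that ``$S$ consists (up to a negligible fraction) exactly of the indices with $z_i \sim N(\mu_i,I)$'' is therefore not justified, and the mean estimate can be off by much more than $0.01$. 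The same issue recurs in the voting step: you have no control over the candidates produced when the anchor $z$ itself lands in a moderate-weight component, and these need not be rare.

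The paper closes this gap by running the test with the much smaller threshold $(w^*/k)^{10}$. Then the test \emph{does} reject samples from any component of weight at least $(w^*/k)^{10}$, and the remaining uncontrolled components have total mass at most $k(w^*/k)^{10}$, so they contribute at most about $(w^*/k)^9 m$ contaminating samples to $S$. Because the anchor component contributes at least $0.9(w^*/k)^5 m$ samples whenever $w_i \geq (w^*/k)^5$, and because the hypothesis $\max_{i,j}\norm{\mu_i-\mu_j}\leq (\log(k/w^*))^4$ bounds how far the contaminants can pull the average, the mean estimate is still within $0.01$ of $\mu_i$. Note that this is the only place the max-separation hypothesis is actually used; your argument never invokes it, which is a signal that something is missing.
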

\begin{proof}
The proof will be very similar to the proof of Theorem \ref{thm:main-Poincare} except using the test in Lemma \ref{lem:main-GMM-test}.  
Now we will do the following process to estimate the means of the components.
\begin{enumerate}
    \item Draw a sample $z \sim \mcl{M}$
    \item Take $m = (k/w^* )^{10^2}$ samples $z_1, \dots , z_m \sim \mcl{M}$
    \item Use Lemma \ref{lem:main-GMM-test} with parameters $w^* = (w^*/k)^{10}, \delta = (w^*/k)^{10^4}, s$ to test the pair of samples $z,z_i$ for all for all $i \in [m]$
    \item Let $S \subset [m]$ be the set of all $i$ that are {\sc Accepted} and compute $\mu = \frac{1}{|S|}\sum_{i \in S} z_i$
\end{enumerate}  
Note that if $z$ is a sample from $N(\mu_i, I)$ for some $i$ with $w_i \geq (w^*/k)^5$, then with $1 - (w^*/k)^{10^2}$ probability, the procedure returns $\mu$ such that $\norm{\mu - \mu_i} \leq 0.01$.  This is because by the guarantees of Lemma \ref{lem:main-GMM-test}, with probability at least $1 - (w^*/k)^{10^3}$ the test will accept all samples from among $z_1, \dots , z_m$ that are from the component $N(\mu_i, I)$.  Also, the only other samples that are accepted must be from components $N(\mu_j, I)$ with $w_j \leq (w^*/k)^{10}$.  With high probability, among $z_1, \dots , z_m$, there will be at least $0.9w_i m \geq 0.9(w^*/k)^5m$ samples from the component $N(\mu_i, I)$ and at most $2 k (w^*/k)^{10} m \leq (w^*/k)^9 m$ from other components $N(\mu_j, I)$ with mixing weight smaller than $(w^*/k)^{10}$.  Since we have a bound on the maximum separation $\norm{\mu_i - \mu_j} \leq (\log (k/w^*))^4$, with high probability, the mean $\mu$ of all of these samples satisfies $\norm{\mu - \mu_i} \leq 0.01$.

Now we repeat the procedure in steps $1-4$ for $l = (k/w^*)^{10^2}$ independent samples $z \sim \mcl{M}$.  This gives us a list of means say $S = \{ \wh{\mu_1}, \dots , \wh{\mu_l} \}$.  The previous argument implies that most of these estimates will be close to $\mu_i$ for some $i$ with $w_i \geq (w^*/k)^5$ and furthermore that all such components will be represented.  As in Theorem \ref{thm:main-Poincare}, to ensure that with high probability we output exactly one estimate corresponding to each true mean and no extraneous estimates, we do a sort of majority voting.  

We will inspect the estimates in $S$ one at a time and decide whether to output them or not.  Let $T$ be the set of estimates that we will output.  Note that $T$ is initially empty.  Now for each $i$, let $S_i$ be the subset of $\{ \wh{\mu_1}, \dots , \wh{\mu_l} \}$ consisting of all means with $\norm{\wh{\mu_j} - \wh{\mu_i}} \leq 0.02$.  If $|S_i| \geq 0.9 w^* l$ and $\wh{\mu_i}$ is not within $0.1$ of any element of $T$ then add $\wh{\mu_i}$ to $T$.  Otherwise do nothing.  At the end we output everything in $T$.  

We now argue that with high probability, the set $T$ satisfies the desired properties.  First, note that with high probability, all elements of $T$ must be within $0.05$ of one of the true means $\mu_i$ since otherwise, there would not be enough elements in the set $S_j$.   Also it is clear that there can be at most one element $\wh{\mu_j} \in T$ corresponding to each true mean $\mu_i$.  It remains to argue that with high probability, for all  $i \in [k]$ such that $w_i \geq w^*$, there must be some element in $T$ within $0.05$ of $\mu_i$.  This is because for such an $i$, with high probability there will be at least $0.9w_il$ elements $\wh{\mu_j}$  in $S$ such that $\norm{\wh{\mu_j} - \mu_i} \leq 0.01$.  Thus, if there are no elements already in $T$ that are close to $\mu_i$, then one of these $\wh{\mu_j}$ will be added to $T$.  Overall, we have shown that with high probability, the set $T$ satisfies the desired properties and we are done.
\end{proof}
\section{Clustering Part 2: Recursive Clustering}\label{sec:clustering2}
In this section, we put together the building blocks from Section \ref{sec:clustering1} in our complete clustering algorithm and complete the proof of Theorem \ref{thm:main-GMM}.

\subsection{Clustering Checkers}

First, we introduce the concept of a checker.  This will ease notation for later on when we need to consider various restrictions of a mixture involving restricting to samples whose projection onto a subspace $V$ is close to a certain point $p \in V$.  

\begin{definition}[Checker]\label{def:checker}
A checker, denoted $(V,p,r)$ consists of a subspace $V \subset \R^d$, a point $p \in V$ and a positive real number $r$.
\end{definition}

\begin{definition}
We say that a checker $(V,p,r)$ contains a point $x$ if $\norm{\Proj_V(x) - p} \leq r$.  We write $\chk_{V,p,r}(x) = 1$ if the checker contains the point $x$ and $\chk_{V,p,r}(x) = 0$ otherwise.
\end{definition}

\begin{definition}
Given a distribution $\mcl{M}$ and a checker $(V,p,r)$, we may take samples from $\mcl{M}$, delete all of them that do not lie inside the checker $(V,p,r)$, and then project the remaining samples onto $V^{\perp}$.  We call the resulting distribution the reduction of $\mcl{M}$ by the checker $(V,p,r)$ and denote it $\red_{V,p,r}(\mcl{M})$.
\end{definition}

\subsection{Basic Properties}
The key observation is that reducing a mixture of Gaussians by a checker $(V,p,r)$ results in a new mixture with the same components (projected onto $V^{\perp}$) but with different mixing weights.
\begin{claim}\label{claim:reducing-mixture}
Assume that we have a checker $(V,p,r)$ and let $a$ be the dimension of $V$.  Given a GMM $\mcl{M} = w_1N(\mu_1,I) + \dots + w_k N(\mu_k, I)$, the reduction of $\mcl{M}$ is 
\[
\red_{V,p,r}(\mcl{M}) = \frac{\sum_{i = 1}^k w_i \Pr_{z \sim N(\mu_i, I)}[ \chk_{V,p,r}(z) = 1] N_{V^{\perp}}(\mu_i, I  ) }{\sum_{i = 1}^k w_i \Pr_{z \sim N(\mu_i, I)}[ \chk_{V,p,r}(z) = 1]} \,.
\]
\end{claim}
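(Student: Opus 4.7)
The plan is to derive the form of $\red_{V,p,r}(\mcl{M})$ by directly tracking what happens to a sample from each component under the reduction operation, and then applying Bayes' rule to read off the new mixing weights.

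First I would note that the reduction procedure has two steps: (i) keep only samples $z$ satisfying $\chk_{V,p,r}(z) = 1$, and (ii) project the surviving samples onto $V^{\perp}$. Since $\mcl{M}$ is a mixture, a sample $z$ is drawn from component $i$ with prior probability $w_i$, and conditional on coming from component $i$, the probability it survives step (i) is exactly $\Pr_{z \sim N(\mu_i,I)}[\chk_{V,p,r}(z)=1]$. Applying Bayes' rule, the posterior probability that a surviving sample came from component $i$ is
\[
\frac{w_i \Pr_{z \sim N(\mu_i,I)}[\chk_{V,p,r}(z)=1]}{\sum_{j=1}^k w_j \Pr_{z \sim N(\mu_j,I)}[\chk_{V,p,r}(z)=1]},
\]
which matches the mixing weights in the claimed formula.

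The key step is showing that, conditional on surviving and being from component $i$, the distribution of $\Proj_{V^\perp}(z)$ is exactly $N_{V^\perp}(\mu_i, I)$. This uses a crucial property of isotropic Gaussians: if $z \sim N(\mu_i, I)$, then $\Proj_V(z)$ and $\Proj_{V^\perp}(z)$ are independent Gaussians with means $\Proj_V(\mu_i)$ and $\Proj_{V^\perp}(\mu_i)$, respectively, and covariance $I$ on each subspace. This is because the identity covariance is invariant under any orthogonal decomposition of $\R^d$, so the two projections are uncorrelated jointly Gaussian vectors, hence independent. Since the event $\chk_{V,p,r}(z) = 1$ depends only on $\Proj_V(z)$, conditioning on this event does not alter the distribution of $\Proj_{V^\perp}(z)$, which remains $N_{V^\perp}(\mu_i, I)$.

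Combining these two observations, the reduction is a mixture over components $i$ with the Bayes-updated weights above, where the $i$-th component is $N_{V^\perp}(\mu_i, I)$, giving exactly the claimed formula. I do not anticipate any real obstacle here; the only subtle point is the independence of the $V$ and $V^\perp$ projections of an isotropic Gaussian, which is a standard fact but is what makes the claim clean (and is also what fails for general Poincar\'e distributions, as remarked earlier in the paper).
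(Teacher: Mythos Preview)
Your proposal is correct and takes essentially the same approach as the paper: the paper's proof is a single sentence invoking the independence of $\Proj_V(z)$ and $\Proj_{V^\perp}(z)$ for an isotropic Gaussian, and your argument unpacks exactly this, making the Bayes-rule step for the mixing weights explicit.
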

\begin{proof}
This follows immediately from the fact that for a sample $z$ from a standard Gaussian $N(0,I)$, for any subspace $V$, the projections  $\Proj_V(z)$ and $\Proj_{V^{\perp}}(z)$ are independent and distributed as standard Gaussians in the respective subspaces.
\end{proof}

In light of Claim \ref{claim:Gaussian-radius}, reducing a GMM by a checker $(V,p,r)$ where $V$ has dimension $a$ essentially removes all components whose means $\mu_i$ satisfy $\norm{\Proj_V(\mu_i) -  p  } \geq r +  \omega(\sqrt{a + \log k})$.  We now make this notion more formal by defining a truncation of a reduced mixture that involves deleting such components.  We then argue that the truncation only affects the overall distribution by a negligible amount.

\begin{definition}
Let $\mcl{M} = w_1N(\mu_1,I) + \dots + w_k N(\mu_k, I)$ be a GMM and $\theta$ be some parameter.  For a checker $(V,p,r)$, define the truncated reduction of $\mcl{M}$ as   
\[
\red_{V,p,r}^{(\theta)}(\mcl{M}) = \frac{\sum_{i \in S} w_i \Pr_{z \sim N(\mu_i, I)}[ \chk_{V,p,r}(z) = 1] N_{V^{\perp}}(\mu_i, I  ) }{\sum_{i \in S} w_i \Pr_{z \sim N(\mu_i, I)}[ \chk_{V,p,r}(z) = 1]}
\]
where $S$ is defined as the set of $i \in [k]$ such that $\mu_i$ is in the checker $(V,p, r + \theta)$.  We say that the set $S$ is the set of relevant components in the truncation $\red_{V,p,r}^{(\theta)}(\mcl{M})$.
\end{definition}

Combining Claim \ref{claim:Gaussian-radius} and Claim \ref{claim:reducing-mixture}, we deduce that we can truncate after reducing by a checker while changing the distribution by only a negligible amount.
\begin{corollary}\label{coro:truncate-prob}
Let $\mcl{M} = w_1N(\mu_1,I) + \dots + w_k N(\mu_k, I)$ be a GMM.  Let $w^*, \delta$ be some parameters and let $\theta \geq (\log (k/(w^* \delta)))^{(1  +c)/2}$ for some positive constant $c > 0$.  Let $(V,p,r)$ be a checker where $V$ has dimension at most $(\theta/10)^2$.  Assume that for some $i$, we have $w_i \geq w^*$ and $\chk_{V,p,r - \theta }(\mu_i) = 1$.  Then 
\[
d_{\TV}\left( \red_{V,p,r}^{(\theta)}(\mcl{M}) , \red_{V,p,r}(\mcl{M}) \right) \leq 2^{- (\log (k/(w^* \delta)))^{1 + 0.1c} } \,.
\]
\end{corollary}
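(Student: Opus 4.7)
The plan is to view both $\red_{V,p,r}(\mcl{M})$ and $\red_{V,p,r}^{(\theta)}(\mcl{M})$ as mixtures of the same component Gaussians $\{N_{V^\perp}(\mu_i, I)\}_{i \in [k]}$ with different mixing weights, and then bound the total variation by the $\ell_1$ distance between the two weight vectors. Writing $\alpha_i = w_i \Pr_{z \sim N(\mu_i, I)}[\chk_{V,p,r}(z) = 1]$, $W = \sum_{i \in [k]} \alpha_i$, and $W_S = \sum_{i \in S} \alpha_i$ where $S$ is the set of relevant components, a direct computation shows the $\ell_1$ distance of the weight vectors equals $2(W - W_S)/W$. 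So it suffices to prove $W - W_S$ is extremely small while $W$ is lower bounded by roughly $w^*$.

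For the numerator $W - W_S = \sum_{i \in S^c} \alpha_i$, I would use that $i \notin S$ means $\norm{\Proj_V(\mu_i) - p} > r + \theta$, so by the triangle inequality the event $\chk_{V,p,r}(z) = 1$ (i.e.\ $\norm{\Proj_V(z) - p} \leq r$) forces $\norm{\Proj_V(z) - \Proj_V(\mu_i)} > \theta$. Since $\Proj_V(z) - \Proj_V(\mu_i) \sim N(0, I_a)$ with $a = \dim(V) \leq \theta^2/100$, applying Claim~\ref{claim:Gaussian-radius} with $\beta = \theta^2 - 2a \geq 0.98\,\theta^2$ yields $\Pr[\chk_{V,p,r}(z) = 1] \leq 2^{-\theta^2/10}$. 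Summing over $i \in S^c$ gives $W - W_S \leq 2^{-\theta^2/10}$.

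For the denominator, the hypothesis supplies an index $i^*$ with $w_{i^*} \geq w^*$ and $\norm{\Proj_V(\mu_{i^*}) - p} \leq r - \theta$. For this component, $\norm{\Proj_V(z) - \Proj_V(\mu_{i^*})} \leq \theta$ is sufficient for $\chk_{V,p,r}(z) = 1$, and again by Claim~\ref{claim:Gaussian-radius} (valid since $\theta^2 \geq 2a$ with plenty of room to spare) this event occurs with probability at least $1/2$. Hence $W \geq \alpha_{i^*} \geq w^*/2$, and combining the two bounds yields
\[
d_{\TV}\!\left( \red_{V,p,r}^{(\theta)}(\mcl{M}),\, \red_{V,p,r}(\mcl{M}) \right) \;\leq\; \frac{2(W - W_S)}{W} \;\leq\; \frac{4 \cdot 2^{-\theta^2/10}}{w^*}.
\]

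Finally, substituting $\theta^2 \geq (\log(k/(w^*\delta)))^{1+c}$ and absorbing the $\log(1/w^*) \leq \log(k/(w^*\delta))$ factor into the exponent gives the desired bound $2^{-(\log(k/(w^*\delta)))^{1+0.1c}}$, since the gap in the exponent ($c$ versus $0.1c$) leaves enormous slack once $k/(w^*\delta)$ is moderately large. No step presents a serious obstacle; the only mild care required is in the bookkeeping to confirm that the exponent gap comfortably absorbs the $\log(1/w^*)$ slack and the polynomial prefactors.
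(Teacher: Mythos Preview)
Your proposal is correct and follows essentially the same approach as the paper: both arguments use Claim~\ref{claim:Gaussian-radius} to show that each excluded component lands in the checker with probability at most $2^{-\Omega(\theta^2)}$, while the assumed component with weight $\geq w^*$ and mean in $(V,p,r-\theta)$ guarantees the denominator is at least roughly $w^*$, and then combine these two estimates. The paper's version is just terser; your explicit $\ell_1$-on-weights formulation is a cleaner way to package the same computation.
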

\begin{proof}
Note that by Claim \ref{claim:Gaussian-radius}, for any $\mu \in \R^d$ we have
\[
\Pr_{z \sim N(\mu, I)}[ \norm{ \Proj_V(z - \mu)} \geq 0.5\theta ] \leq 2^{-0.01 \theta^2}  \,.
\]
In particular, for any $\mu_j$ that is not in the checker $(V,p,r + \theta)$, the probability that a sample from $N(\mu_j, I)$ lands in $(V,p,r)$ is at most $2^{-0.01 \theta^2}$.  Also we assume that there is some $i$ such that $\mu_i$ is in $(V,p, r - \theta)$ and for this component $N(\mu_i, I)$, the probability that a sample lands in $(V,p,r)$ is at least $1 - 2^{-0.01 \theta^2}$.  Combining these observations gives the desired inequality.
\end{proof}

Note that we can simulate samples from $\red_{V,p,r}(\mcl{M})$ (using samples from $\mcl{M}$) and the above implies that this is essentially equivalent to simulating samples from $\red_{V,p,r}^{(\theta)}(\mcl{M})$ since their TV distance is negligible.  One more important claim that we will need is that if a checker contains the mean of one of the components, then by adjusting the radius slightly, we can find a truncation for which the resulting mixture is reasonable (recall Definition \ref{def:reasonable-mixture}).  This will allow us to apply Lemma \ref{lem:find-signal2} to find a new signal direction in $V^{\perp}$.

\begin{claim}\label{claim:exists-reasonable}
Let $\mcl{M} = w_1N(\mu_1,I) + \dots + w_k N(\mu_k, I)$ be a GMM.  Let $w^*$ be some parameter and let $\theta \geq  (\log (k/w^*))^{(1 + c)/2}$ for some positive constant $c > 0$.  Assume that 
\begin{itemize}
    \item $\max \norm{\mu_i - \mu_j} \leq O((k/w^*)^2)$
    \item The mixture $\mcl{M}$ is $\theta \cdot (\log (k/w^*))^{0.1 c/2}$-separated
    \item $w_i \geq w^*$ for all $i$
\end{itemize} 
Let $V$ be a subspace of dimension at most $(\theta/10)^2$ and $p$ be a point in $V$.  Assume that for some $i$, the checker $(V,p,\theta)$ contains $\mu_i$.  Then for a random integer $\gamma$ chosen from among $\{1, \dots , \lceil 10^4 \log \log (k/w^*) \rceil \}$, with probability at least $1/2$, the mixture 
\[
\red_{V,p,\gamma \theta}^{  (\theta)}(\mcl{M})
\]
is $0.9w^*$-reasonable.
\end{claim}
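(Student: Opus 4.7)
For each $j\in[k]$ write $d_j=\norm{\Proj_V(\mu_j)-p}$, and for each integer $\gamma\ge 1$ let $S_\gamma=\{j:d_j\le(\gamma+1)\theta\}$ be the relevant set of $\red^{(\theta)}_{V,p,\gamma\theta}(\mcl{M})$ and $S_\gamma^{\mathrm{safe}}=\{j:d_j\le(\gamma-1)\theta\}$; the identity $S_\gamma^{\mathrm{safe}}=S_{\gamma-2}$ will be crucial. By Claim~\ref{claim:Gaussian-radius} applied to $\Proj_V(z-\mu_j)$, a standard Gaussian on the subspace $V$ of dimension at most $(\theta/10)^2$, we have $\Pr_{z\sim N(\mu_j,I)}[\chk_{V,p,\gamma\theta}(z)=1]\ge 1-2^{-\Omega(\theta^2)}$ for $j\in S_\gamma^{\mathrm{safe}}$; since the reduction denominator $\sum_i w_i\Pr_i\le 1$, every safe $j$ receives new mixing weight at least $0.99w_j\ge 0.9w^*$, so safe components are heavy. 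Moreover, for $i,j\in S_\gamma$ we have $\norm{\Proj_V(\mu_i-\mu_j)}\le 2(M+1)\theta$, which is negligible compared with the assumed minimum separation $s=\theta(\log(k/w^*))^{0.1c/2}$ (since $(\log(k/w^*))^{0.1c/2}\gg\log\log(k/w^*)$), so minimum separations in $V^\perp$ remain at least $s/2$.

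Define $g(\gamma):=\max_{i,j\in S_\gamma}\norm{\Proj_{V^\perp}(\mu_i-\mu_j)}$, with $g(\gamma)=0$ when $|S_\gamma|\le 1$. Then $g$ is non-decreasing in $\gamma$, lies in $[\Omega(\theta),\,O((k/w^*)^2)]$ whenever $|S_\gamma|\ge 2$ (so $\log g(\gamma)\in[\Omega(\log\log(k/w^*)),\,O(\log(k/w^*))]$), and by the previous paragraph (safe $\subseteq$ heavy, together with $S_\gamma^{\mathrm{safe}}=S_{\gamma-2}$) the largest distance in $V^\perp$ among heavy components of the truncation is at least $g(\gamma-2)$, while the largest distance among all components of the truncation is $g(\gamma)$. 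Consequently, $\red^{(\theta)}_{V,p,\gamma\theta}(\mcl{M})$ is $0.9w^*$-reasonable whenever either $|S_\gamma|\le 1$ (trivially reasonable) or $g(\gamma-2)\ge\sqrt{g(\gamma)}$, which is equivalent to the growth condition $\log g(\gamma)\le 2\log g(\gamma-2)$.

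It remains to count the $\gamma\in\{1,\ldots,M\}$ that violate this. Split by parity into the subsequences $\{1,3,5,\ldots\}$ and $\{2,4,6,\ldots\}$; within each, $\log g$ is monotone non-decreasing. Once both $|S_\gamma|\ge 2$ and $|S_{\gamma-2}|\ge 2$, a violation requires $\log g$ to more than double between consecutive terms of the parity subsequence; since the sequence lies in an interval of length $O(\log(k/w^*))$ starting at $\Omega(\log\log(k/w^*))$, at most $O(\log[\log(k/w^*)/\log\log(k/w^*)])=O(\log\log(k/w^*))$ such doublings can occur per parity. Adding the at most two transitional $\gamma$'s at which $|S_\gamma|$ first exceeds $1$ (so that $|S_{\gamma-2}|\le 1$), the total bad count is $O(\log\log(k/w^*))$, which is much smaller than $M/2$ once $M=\lceil10^4\log\log(k/w^*)\rceil$. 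Hence a uniformly random $\gamma\in\{1,\ldots,M\}$ is good with probability at least $1/2$, proving the claim. The main conceptual step is the reduction in the second paragraph: by using only safe components as a witness for the heavy set, we never have to track the mixing-weight shifts of boundary components with intermediate $\Pr_j$; everything reduces to controlling the ratio $g(\gamma)/g(\gamma-2)$, which is exactly what the doubling count delivers.
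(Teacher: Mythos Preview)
Your proposal is correct and follows essentially the same approach as the paper. Your function $g(\gamma)=\max_{i,j\in S_\gamma}\norm{\Proj_{V^\perp}(\mu_i-\mu_j)}$ is exactly the paper's $\alpha_{(\gamma+1)\theta}$, and your sufficient condition $g(\gamma-2)\ge\sqrt{g(\gamma)}$ for $0.9w^*$-reasonableness is the contrapositive of the paper's failure condition $\alpha_{(\gamma-1)\theta}^2\le\alpha_{(\gamma+1)\theta}$; both derive it the same way, by noting that means in the inner checker $(V,p,(\gamma-1)\theta)$ retain mixing weight close to their original $w_j\ge w^*$ after reduction. The only real difference is that the paper's counting step (``the condition can fail for at most half of the choices of $\gamma$'') is a single sentence, whereas you make it rigorous by splitting into parity subsequences so that each failure forces a doubling of $\log g$ along a monotone chain, then bound the number of doublings by the log of the ratio $O(\log(k/w^*))/\Omega(\log\log(k/w^*))$; this extra care is appropriate since the squaring relation skips by two indices.
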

\begin{proof}
For any real number $l$, define $\alpha_{l}$ as follows.
\[
\alpha_l = \max_{\substack{ \chk_{V,p,l}( \mu_i) = 1  \\ \chk_{V,p,l}(\mu_j ) = 1  }} \norm{\Proj_{V^{\perp}}(\mu_i - \mu_j)} \,. 
\]
Now for any $\gamma$, the only way that
\[
\red_{V,p, \gamma \theta}^{  (\theta)}(\mcl{M})
\]
is not reasonable is if 
\begin{equation}\label{eq:get-squared}
\alpha_{(\gamma-1)\theta}^2  \leq \alpha_{(\gamma + 1)\theta} \,.
\end{equation}
This is because by Claim \ref{claim:Gaussian-radius}, for all means $\mu_i$ that are in $(V,p,(\gamma-1)\theta)$, essentially all of the samples from $N(\mu_i, I)$ are contained in $(V,p, \gamma \theta)$.  However, we now consider the sequence 
\[
\alpha_{ \theta}, \alpha_{2 \theta} , \dots \,. 
\]
We can lower bound the first nonzero element of the sequence as follows.  If there are two distinct means $\mu_i, \mu_j$ both contained in $(V, p ,  \gamma \theta )$ for $\gamma \leq \lceil 10^4 \log \log (k/w^*) \rceil$, we must have
\[
\alpha_{\gamma \theta}  \geq \norm{\mu_i - \mu_j} - \norm{\Proj_V(\mu_i - \mu_j)} \geq  \theta \cdot (\log (k/w^*))^{0.1c/2} - 2 \gamma \theta \geq 2 \,.
\]
Thus, since $\alpha_l$ is upper bounded by $O((k/w^*)^2)$, the condition in (\ref{eq:get-squared}) can fail for at most half of the choices of $\gamma$ and we are done.
\end{proof}

We will need one more preliminary result.  It simply states that we can correctly cluster any sample with high probability if we are given a candidate set of means $S = \{ \wt{\mu_1}, \dots , \wt{\mu_t } \}$ that are all separated and such that there is one that is close to each true mean.

\begin{claim}\label{claim:cluster-using-means}
Let $\N(\mu_1, I) , \dots ,  N(\mu_l. I)$ be some unknown Gaussians such that $\norm{\mu_i - \mu_j} \geq s$ for all $i \neq j$ where $ s = (\log l)^{( 1 + c)/2}$ and $c > 0$ is some constant.  Assume we are given a candidate set of means $S = \{ \wt{\mu_1}, \dots , \wt{\mu_r } \}$ such that 
\begin{itemize}
    \item $r \leq l$
    \item For all $i \in [l]$, there is some $f(i) \in [r]$ such that $\norm{\wt{\mu_{f(i)}} - \mu_i} \leq 0.1$
    \item For all distinct $j_1, j_2 \in [r]$, $\norm{\wt{\mu_{j_1}} - \wt{\mu_{j_2}}}  \geq 0.5s$
\end{itemize}
Then there is an efficient algorithm that with probability at least $1 - 2^{-0.01s^2 }$, given a sample from $N(\mu_i, I)$ for any $i \in [k]$, returns $f(i)$.
\end{claim}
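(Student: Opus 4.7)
The plan is to use nearest-neighbor classification against the candidate set $S$. Given a sample $z \sim N(\mu_i, I)$, the algorithm outputs $j^{*} = \arg\min_{j \in [r]} \norm{z - \wt{\mu_j}}$. I would analyze this by rewriting nearest neighbor as a family of hyperplane tests: the output equals $f(i)$ if and only if, for every $j \neq f(i)$, the sample $z$ lies on the $\wt{\mu_{f(i)}}$-side of the perpendicular bisector of $\wt{\mu_{f(i)}}$ and $\wt{\mu_j}$. This reformulation lets me reduce the $d$-dimensional concentration question to a one-dimensional Gaussian tail bound along each bisector direction.

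Fix $j \neq f(i)$ and set $v = (\wt{\mu_{f(i)}} - \wt{\mu_j})/\norm{\wt{\mu_{f(i)}} - \wt{\mu_j}}$, with midpoint value $m = \tfrac{1}{2}(v \cdot \wt{\mu_{f(i)}} + v \cdot \wt{\mu_j})$. By the separation hypothesis, $|v \cdot \wt{\mu_{f(i)}} - m| \geq 0.25 s$, and since $\norm{\mu_i - \wt{\mu_{f(i)}}} \leq 0.1$, a triangle-inequality step gives $|v \cdot \mu_i - m| \geq 0.24 s$ (for $s$ large enough, which holds since $s = (\log l)^{(1+c)/2}$ and the bound $2^{-0.01 s^2}$ is only meaningful in that regime). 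The scalar $v \cdot z$ is distributed as $N(v \cdot \mu_i, 1)$, so a standard one-dimensional Gaussian tail estimate yields
\[
\Pr\bigl[\,|v \cdot (z - \mu_i)| \geq 0.1 s \,\bigr] \leq 2 e^{-0.005 s^2},
\]
and on the complementary event $v \cdot z$ lies on the $\wt{\mu_{f(i)}}$-side of $m$, which is exactly the required hyperplane inequality.

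To finish I would union-bound over the $r - 1 \leq l$ competing candidates, giving overall failure probability at most $2 l \cdot e^{-0.005 s^2}$. Because $s^2 = (\log l)^{1 + c}$ grows strictly faster than $\log l$, we have $2 l \cdot e^{-0.005 s^2} \leq 2^{-0.01 s^2}$ for $l$ large enough (and the small-$l$ range is vacuous since $2^{-0.01 s^2}$ is close to $1$ there anyway). The only mildly delicate step is verifying the numerical margin $0.24 s$ survives the one-dimensional Gaussian fluctuation of order at most $0.1 s$, which is immediate from $s \geq 1$; no further obstacle arises.
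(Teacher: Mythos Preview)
Your approach is correct and essentially the same as the paper's: both reduce to one-dimensional Gaussian tail bounds along the directions $\wt{\mu_{j_1}} - \wt{\mu_{j_2}}$ between candidate means and then union-bound over at most $l$ (or $l^2$) such directions. The paper's algorithm is cosmetically different---rather than nearest neighbor, it outputs the unique $j$ satisfying $|v_{j_1 j_2} \cdot (z - \wt{\mu_j})| \leq 0.1 s$ for all pairs $j_1, j_2$---but the analysis is the same perpendicular-bisector calculation you give.

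One small numerical slip: with the threshold $0.1 s$ you chose, the per-competitor bound $2 e^{-0.005 s^2}$ is actually \emph{weaker} than the target $2^{-0.01 s^2} = e^{-(0.01 \ln 2) s^2} \approx e^{-0.00693 s^2}$, so the final inequality $2l \cdot e^{-0.005 s^2} \leq 2^{-0.01 s^2}$ fails (the exponent on the left is less negative). Since you already computed the true margin from $v\cdot\mu_i$ to $m$ to be at least $0.24 s$, just use that directly: the one-sided failure probability is at most $e^{-(0.24 s)^2/2} = e^{-0.0288 s^2}$, and then $l \cdot e^{-0.0288 s^2} \leq 2^{-0.01 s^2}$ for $l$ large enough, exactly as you intended.
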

\begin{proof}
For all $j_1,j_2 \in [r]$ with $ j_1\neq j_2$, define
\[
v_{j_1j_2} = \frac{\wt{\mu_{j_1}} - \wt{\mu_{j_2}}}{\norm{\wt{\mu_{j_1}} - \wt{\mu_{j_2}}}} \,.
\]
Given a sample $z$, we find a $j$ such that for all $j_1, j_2 \in [r]$, we have
\[
|v_{j_1j_2} \cdot (z - \wt{\mu_j})| \leq 0.1 s \,.
\]
If $z \sim N(\mu_i, I)$, then with probability at least $1 - 2^{-0.02s^2 }$, setting $j = f(i)$ clearly satisfies the above.  Also, by the assumption that $ \norm{\wt{\mu_{j_1}} - \wt{\mu_{j_2}}}  \geq 0.5s$ for all distinct $j_1, j_2$, it is clear that with probability at least  $1 - 2^{-0.02s^2 }$ that no other $j' \in [r]$ will satisfy the above.  This completes the proof.
\end{proof}

\subsection{Putting Everything Together}

We can now put everything together and describe our complete clustering algorithm.  At a high level there will be two phases.  

In the first phase, we will keep track of a subspace $V$ and point $p \in V$ and keep refining it.  In particular, in each step we will add a dimension to $V$ to get a higher dimensional subspace $V'$ and we will compute a new point $p' \in V'$.  Our goal in each step will be to maintain (roughly) the following properties
\begin{itemize}
    \item For a certain radius $r = O( (\log (k/w^*))^{(1 + c)/2})$, the checker $(V,p,r)$ always contains one of the true means $\mu_i$
    \item If there are two means $\mu_i, \mu_j$ in $(V,p,r)$ with separation at least $0.1\log^4 (k/w_{\min})$, then with at least $0.1$ probability, the refinement $V',p'$ satisfies that $(V',p',r')$ contains at most half as many means $\mu_i$ as $(V,p,r')$ for some larger radius $r'$.
\end{itemize}
The first guarantee is not difficult to achieve as we simply need to check that there are enough samples in the checker.  To achieve the second guarantee, we rely on Lemma \ref{lem:find-signal1} to find a signal direction and add this direction to $V$ to get $V'$.  We then argue that because this new direction is a signal direction, the set of true means in $(V,p,r)$ can be split into two parts along this new direction and one of these parts will have at most half as many.  The guarantees are stated formally in Lemma \ref{lem:ball-recursion}.

We will run the first phase sufficiently many times that with high probability, at  some point, we must have $V,p$ such that all means in $(V,p,r)$ have separation at most $\log^4 (k/w_{\min})$.  We can actually test this condition (with some slack) using Lemma \ref{lem:test-max-separation}.  If the test passes, in the second phase, we can simply use Lemma \ref{lem:find-signal2} to fully cluster the remaining submixture obtained by reducing $\mcl{M}$ by $(V,p,r)$.  This allows us to learn one of the components of the mixture.  In fact, we can obtain a stronger guarantee and actually identify all of the samples from this component.  See Lemma \ref{lem:final-step} for more details.  Once we have done this, we can remove these samples and recurse on the remaining  mixture.  This completes the entire clustering algorithm.  Below is an outline that summarizes our algorithm.

\begin{algorithm}[H]
\caption{{\sc Complete Clustering Algorithm (Outline)} }
\begin{algorithmic}
\State Initialize $V = 0$ (i.e. $0$-dimensional subspace), $p = 0$
\State Set $r = O((\log(k/w_{\min}))^{(1 + c)/2}  ) $ 
\For{$j = 1,2, \dots ,(\log(k/w_{\min}))^{1 + 0.1c}$ }
\State Refine $(V,p) \leftarrow (V',p')$ using Lemma \ref{lem:ball-recursion}
\State Test if means in $(V,p,r)$ have max-separation at most $ \log^4 (k/w_{\min})$ using  Lemma \ref{lem:test-max-separation}
\State Break if above test passes
\EndFor
\State Fully cluster samples in $(V,p,r)$ and identify samples from one component $N(\mu_i, I)$ (see Lemma \ref{lem:final-step})
\State Remove samples from  $N(\mu_i, I)$ and recurse on remaining
\end{algorithmic}
\label{alg:full-clustering-outline}
\end{algorithm}

The next lemma formalizes the guarantees of the refinement step.

\begin{lemma}\label{lem:ball-recursion}
Let $\mcl{M} = w_1N(\mu_1, I) + \dots + w_k N(\mu_k, I)$ be a GMM.  Let $w^* > 0$ be a parameter and $c > 0$ be a positive constant.  Assume that $\mcl{M}$ is $s$-separated where $s = (\log(k/w^*))^{1/2 + c}$ and satisfies $w_i \geq w^*$ for all $i$.  Also assume that $\max \norm{\mu_i - \mu_j} \leq O((k/w^*)^2)$. 

Assume that we are given a subspace $V$ and a point $p \in V$ where the dimension of $V$ is $a < (\log (k/w^*))^{1 + 0.1 c}$.  Assume that the checker $(V,p, 10(\log (k/w^*))^{(1 + c)/2 })$ contains some $\mu_i$.  Also let $C$ be the number of indices $i$ such that $\mu_i$ is contained in the checker 
\[
\left(V,p, (\log (k/w^*))^{2}((\log (k/w^*))^{1 + 0.1 c} - a ) \right) \,.
\]
Then there exists an algorithm that takes $n = \poly((dk/w^*)^{1/c})$ samples and $\poly(n)$ runtime and returns a subspace $V'$ and a point $p' \in V'$ with the following properties
\begin{itemize}
    \item $V'$ has dimension $a+1$ and is obtained by adding one orthogonal direction to $V$
    \item The checker 
    \[
    \left(V',p', (\log (k/w^*))^{2}((\log (k/w^*))^{1 + 0.1 c} - a - 1 ) \right) 
    \]
    contains at most $C$ of the $\mu_i$.
    \item With high probability, the checker $(V',p', 10(\log (k/w^*))^{(1 + c)/2}) $ contains some $\mu_i$
    \item If there are two $\mu_{i_1}, \mu_{i_2}$ contained in the checker $(V,p, (\log (k/w^*))^{(1 + 1.1 c)/2 })$ such that 
    \[
    \norm{\mu_{i_1} - \mu_{i_2}}  \geq 0.1(\log (k/w^*))^{4}
    \]
    then with probability at least $0.1$, the checker 
    \[
    \left(V',p', (\log (k/w^*))^{2}((\log (k/w^*))^{1 + 0.1 c} - a - 1 ) \right) 
    \]
    contains at most $C/2$ of the $\mu_i$.
\end{itemize}
\end{lemma}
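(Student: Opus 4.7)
The plan is to find a new direction $v\in V^\perp$ using the reasonable-reduction machinery, then set $V' = V + \mathrm{span}(v)$ and $p' = p + p_0 v$ for a carefully chosen scalar $p_0$. First I would pick $\theta$ slightly above $(\log(k/w^*))^{(1+c)/2}$, say $\theta = (\log(k/w^*))^{(1+1.05c)/2}$, and draw $\gamma$ uniformly from $\{1,\ldots,\lceil 10^4 \log\log(k/w^*)\rceil\}$. The hypothesis that $(V,p,10(\log(k/w^*))^{(1+c)/2})$ contains some $\mu_i$ satisfies the assumption of Claim \ref{claim:exists-reasonable}, so with probability at least $1/2$ over $\gamma$ the truncated reduction $\mcl{M}' = \red_{V,p,\gamma\theta}^{(\theta)}(\mcl{M})$ is $0.9w^*$-reasonable; Corollary \ref{coro:truncate-prob} allows us to simulate samples from $\mcl{M}'$ from samples of $\mcl{M}$ up to negligible total-variation error. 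Applying Lemma \ref{lem:find-signal1} to $\mcl{M}'$ then yields a unit vector $v\in V^\perp$ that is a $(0.4w^*,\Delta/2)$-signal direction for $\mcl{M}'$ with some threshold $\tau$, where $\Delta$ is the maximum separation among the reasonably weighted components of $\mcl{M}'$.

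To set $p_0$, I would draw one more sample $z\sim\mcl{M}$ and put $p_0 = v\cdot z$ and $p' = p + p_0 v$. By the Gaussian tail bound in Claim \ref{claim:Gaussian-radius}, $p_0$ lies within $O(\sqrt{\log(k/w^*)})$ of $v\cdot\mu_j$ for whichever component produced $z$. Repeating the construction for a constant number of independent samples $z$ and selecting the best candidate via empirical tests (of the kind used in Claim \ref{claim:test-is-there-signal}) would then boost the success probability of the required properties.

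Verifying the first three properties is comparatively easy. Property 1 holds by construction. Property 2 follows because $\Proj_V(p') = p$ and $V\subset V'$, which force every $\mu_j$ in the new big checker to satisfy $\|\Proj_V(\mu_j) - p\|$ at most the strictly smaller new radius, so it is already counted by $C$. Property 3 holds because, with probability at least $w^*$, the sample $z$ comes from a component whose mean $\mu_i$ lies in the old small checker; then $|p_0 - v\cdot \mu_i|$ is small by Gaussian concentration and the identity $\|\Proj_{V'}(\mu_i) - p'\|^2 = \|\Proj_V(\mu_i) - p\|^2 + (v\cdot\mu_i - p_0)^2$ keeps $\mu_i$ inside the new small checker; taking independent repetitions and testing for presence of such a mean amplifies this to high probability.

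The main obstacle will be property 4. Under its hypothesis, both $\mu_{i_1}$ and $\mu_{i_2}$ sit inside the small checker, so their $V$-projections are close but their separation in $V^\perp$ is at least $0.09(\log(k/w^*))^4$, forcing the maximum separation in $\mcl{M}'$ to exceed $0.09(\log(k/w^*))^4$. Reasonableness then lifts this to $\Delta\geq 0.3(\log(k/w^*))^2$ for the pair of reasonably weighted components that the signal direction exploits. With constant probability over the random $z$, the sample falls into one of the two clusters identified by the signal direction — say the left one — so that $p_0 \leq \tau - \Delta/4$ and every mean $\mu_j$ in the right cluster contributes $(v\cdot\mu_j - p_0)^2\geq(\Delta/4)^2$ to $\|\Proj_{V'}(\mu_j) - p'\|^2$. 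The quantitative heart of the argument is then showing that at least $C/2$ of the $\mu_j$ counted in the old big checker lie on the wrong side of the signal split along $v$, so that this squared-distance excess, combined with the slack between the old and new big-checker radii, is enough to push them outside the new big checker; this is where the reasonableness of $\mcl{M}'$ and the iterative structure of the algorithm — which controls how the $\|\Proj_V(\mu_j) - p\|$ are distributed inside the old big checker — must be delicately exploited.
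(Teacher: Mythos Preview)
Your overall architecture is close to the paper's, but there is a genuine gap in your argument for property~4, and a secondary quantitative slip that would also break things.

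\textbf{The main gap.} You try to pick a \emph{single} random sample $z$, set $p_0 = v\cdot z$, and then argue that ``at least $C/2$ of the $\mu_j$ counted in the old big checker lie on the wrong side of the signal split along $v$.'' There is no reason this should hold. The signal-direction guarantee only says that \emph{some} mass lies on each side of $\tau$; the $C$ means in the old big checker could be split $1$ versus $C-1$ along $v$, and your random sample could easily land you on the heavy side. Your final paragraph essentially concedes that this step is unproven. The paper sidesteps this difficulty with a pigeonhole trick: it finds \emph{two} ``good'' samples $z_1,z_2$ (each with a $0.9w^*$-fraction of nearby points, hence each close to some true mean) that are far apart along $v$, namely $|v\cdot z_1 - v\cdot z_2|\ge 0.01(\log(k/w^*))^4$. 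The two candidate big checkers $(V',\Proj_{V'}(z_1),R')$ and $(V',\Proj_{V'}(z_2),R')$, with $R'=(\log(k/w^*))^2((\log(k/w^*))^{1+0.1c}-a-1)\le(\log(k/w^*))^{3+0.1c}$, are then \emph{disjoint} and both contained in the old big checker. So together they hold at most $C$ means, and a uniformly random choice between $z_1$ and $z_2$ has at most $C/2$ with probability $\ge 1/2$. This disjointness-plus-pigeonhole step is the idea you are missing.

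\textbf{The quantitative slip.} You invoke reasonableness to get only $\Delta\ge 0.3(\log(k/w^*))^2$. This is far too weak: with that bound the two candidate checkers would not even be disjoint, since $R'$ can be as large as $(\log(k/w^*))^{3+0.1c}$. In fact reasonableness is not needed here for a lower bound on $\Delta$. Because $\mu_{i_1},\mu_{i_2}$ lie in the small checker $(V,p,(\log(k/w^*))^{(1+1.1c)/2})$, essentially all of their mass falls inside the larger checker used to form $\mcl{M}'$, so their mixing weights in $\mcl{M}'$ are already at least $0.9w^*$. Hence $\Delta\ge\|\Proj_{V^\perp}(\mu_{i_1}-\mu_{i_2})\|\ge 0.09(\log(k/w^*))^4$ directly, and Lemma~\ref{lem:find-signal1} yields a $(0.4w^*,\,0.04(\log(k/w^*))^4)$-signal direction. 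This is the scale that makes the disjointness argument go through.
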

\begin{proof}
We first focus on the last guarantee as it will not be difficult to maintain the first three guarantees. Let $\theta = (\log (k/w^*))^{(1 + c)/2}$ and $\beta = (\log (k/w^*))^{(1 + 1.1c)/2}$.  Choose some $\gamma$ randomly from $\{1, \dots , \lceil 10^4 \log \log (k/w^*) \rceil \}$.  Now by Claim \ref{claim:exists-reasonable}, with probability at least $1/2$, the mixture
\[
\red_{V,p, \beta + \gamma \theta}^{  (\theta)}(\mcl{M})
\]
is $0.9w^*$-reasonable.  Also note that the two components $\mu_{i_1}$ and $\mu_{i_2}$ promised in the last condition must satisfy
\[
\norm{\Proj_{V^{\perp}}(\mu_{i_2}) - \Proj_{V^{\perp}}(\mu_{i_1})} \geq 0.09 ( \log (k/w^*))^4  \,.
\]
and furthermore, their respective mixing weights in $\red_{V,p, \beta + \gamma \theta}^{  (\theta)}(\mcl{M})$ are at least $0.9w^*$.  Now by Corollary \ref{coro:truncate-prob} we can, with high probability, simulate a polynomial number of samples from $\red_{V,p, \beta + \gamma \theta}^{  (\theta)}(\mcl{M})$ by taking samples from $\red_{V,p,\beta + \gamma \theta}(\mcl{M})$ (which we can simulate using samples from $\mcl{M}$).  Also note that since the checker 
\[
(V,p, 10(\log (k/w^*))^{(1 + c)/2 })
\]
contains some $\mu_i$, samples from $\mcl{M}$ are contained in the checker $(V,p,\beta + \gamma\theta)$ with probability at least $0.9w^*$.  Thus we can apply Lemma \ref{lem:find-signal1} and with high probability we can find a signal direction.  Note this signal direction is a unit vector $v \in V^{\perp}$ such that it is a $(0.4w^*, 0.04( \log (k/w^*))^4)$-signal direction for the distribution $\red_{V,p, \beta + \gamma \theta}^{  (\theta)}(\mcl{M})$.  Note that we can check this last condition so if we repeat the above polynomially many times (for random choices of $\gamma$), we can guarantee that we have found such a signal direction.

 We let $V' = V + v$ (i.e. $V'$ is obtained by adding $v$ to the span of $V$).  Note that by the assumption that $v$ is a signal direction, we can find a real number $t$ such that there must be two components $\mu_i$ and $\mu_j$ that are relevant in $\red_{V,p,\beta  +\gamma \theta}^{  (\theta)}(\mcl{M})$ such that 
\begin{align*}
 t - v \cdot \mu_i  \geq 0.01`( \log (k/w^*))^4  \\
 v \cdot \mu_j - t \geq 0.01( \log (k/w^*))^4 \,.
\end{align*}
Now to find the new center $p'$, we do the following.  Draw a fresh set of $\poly(dk/w^*)$ samples from $\mcl{M}$.  For each sample, we keep it if and only if it is contained in the checker $(V,p, \beta + (\gamma + 2)\theta)$.  We say a sample $z$ is good if at least $0.9w^*$-fraction of other samples $z'$ satisfy 
\[
\norm{ \Proj_{V'}(z') - \Proj_{V'}(z)} \leq (\log (k/w^*))^{(1 + c)/2} \,.
\]
Note that with high probability, we will be able to find two samples $z_1,z_2$ that are both good and such that $|v \cdot z_1 - v \cdot z_2| \geq 0.01( \log (k/w^*))^4$ (this will happen as long as we take a sample from $\mu_i$ and a sample from $\mu_j$ for the two separated components promised in the previous paragraph).  Now with $1/2$ probability take $p' = z_1$ and with $1/2$ probability take $p' = z_2$.  Note that the checkers 
\begin{align*}
\left(V',z_1, (\log (k/w^*))^{2}((\log (k/w^*))^{1 + 0.1 c} - a - 1 ) \right) \\
\left(V',z_2, (\log (k/w^*))^{2}((\log (k/w^*))^{1 + 0.1 c} - a - 1 ) \right)
\end{align*}
are disjoint since $z_1,z_2$ are sufficiently separated along direction $v$.  Also,
\[
\norm{\Proj_V(z_1) - p}, \norm{\Proj_V(z_2) - p} \leq \beta + (\gamma + 2)\theta
\]
and thus both of the above checkers are contained in the checker
\[
\left(V,p, (\log (k/w^*))^{2}((\log (k/w^*))^{1 + 0.1 c} - a ) \right) \,.
\]
Thus, with $1/2$ probability, the number of means contained in the new checker is at most $C/2$.  It remains to verify the first three conditions.  The first is trivial.  The second is also trivial.  To see why the third is true, note that since $z_1$ is good, with high probability, there must be some $\mu_{i_1}$ such that $\norm{\Proj_{V'}(\mu_{i_1}) - \Proj_{V'}(z_1)} \leq 2(\log (k/w^*))^{(1 + c)/2}$ by Claim \ref{claim:Gaussian-radius} and similar for $z_2$.  This immediately implies the third condition.
\end{proof}

Next, we show that we can actually check the termination condition (with some slack), that the maximum separation of any two means in $(V,p,r)$ is at most $O(\log^4 (k/w^*))$ where $r = O( (\log (k/w^*))^{(1 + c)/2})  $.

\begin{lemma}\label{lem:test-max-separation}
Let $\mcl{M} = w_1N(\mu_1, I) + \dots + w_k N(\mu_k, I)$ be a GMM.  Let $w^* > 0$ be a parameter and $c > 0$ be a positive constant.  Assume that $\mcl{M}$ is $s$-separated where $s = (\log(k/w^*))^{1/2 + c}$ and satisfies $w_i \geq w^*$ for all $i$.  Also assume that $\norm{\mu_i - \mu_j} \leq O((k/w^*)^2)$ for all $i,j$.  Say that we are given a subspace $V$ and a point $p \in V$ where the dimension of $V$ is $a < (\log (k/w^*))^{1 + 0.1 c}$.  Assume that the checker $(V,p, 10(\log (k/w^*))^{(1 + c)/2 })$ contains some $\mu_i$.  Then there is an algorithm that takes $n = \poly((dk/w^*)^{1/c})$ samples and runs in $\poly(n)$ time, and with high probability, 
\begin{itemize}
    \item Outputs {\sc Reject} if there are $i,j$ such that $\mu_i$ and $\mu_j$ are both contained in the checker 
    \[
    (V,p, 20(\log (k/w^*))^{(1 + c)/2 })
    \]
    and that $\norm{\mu_i - \mu_j}  \geq (\log (k/w^*))^{4}$. 
    \item Outputs {\sc Accept} if for all $i,j$ such that $\mu_i$ and $\mu_j$ are both contained in the checker 
    \[
    (V,p,  (\log (k/w^*))^{(1 + 1.1c)/2 })
    \]
    we have that $\norm{\mu_i - \mu_j}  \leq  0.1 (\log (k/w^*))^{4}$. 
\end{itemize}
\end{lemma}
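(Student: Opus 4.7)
The plan is to detect whether two far-apart means reside in $(V, p, 20\theta)$ (for $\theta = (\log(k/w^*))^{(1+c)/2}$) by searching for a large signal direction in appropriate truncated reductions of $\mcl{M}$. Let $\theta' = (\log(k/w^*))^{(1+1.1c)/2}$; observe that $\theta'/\theta = (\log(k/w^*))^{0.05c}$ is a polylog factor that exceeds any constant for $k/w^*$ above a problem-dependent threshold (the small case being trivial). I will iterate $\gamma$ over $\{3, 4, \ldots, \lfloor \theta'/(20\theta)\rfloor\}$, and for each $\gamma$ consider $\mcl{M}'_\gamma := \red^{(10\theta)}_{V,p,10\gamma\theta}(\mcl{M})$, projected to $V^\perp$. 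By Corollary \ref{coro:truncate-prob} (whose hypotheses follow from the assumption that $(V, p, 10\theta)$ contains some $\mu_i$, the dimension bound $a < (\log(k/w^*))^{1+0.1c}$, and $10\theta \geq (\log(k/w^*))^{(1+c)/2}$), samples from $\mcl{M}'_\gamma$ can be simulated to negligible TV-error by filtering samples of $\mcl{M}$ to lie in $(V,p,10\gamma\theta)$ and projecting onto $V^\perp$. By construction, the relevant components of $\mcl{M}'_\gamma$ have means in $(V,p,10(\gamma+1)\theta) \subseteq (V,p,\theta')$.

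For each $\gamma$, invoke Lemma \ref{lem:find-signal1} on $\mcl{M}'_\gamma$ with weight threshold $0.9w^*$ to obtain (when applicable) a candidate signal direction $v \in V^\perp$, and then use Claim \ref{claim:test-is-there-signal} to verify whether $v$ is a $(0.4w^*, 0.3(\log(k/w^*))^4)$-signal direction for $\mcl{M}'_\gamma$. If some iteration produces a verified $v$, output $\textsf{Reject}$; otherwise output $\textsf{Accept}$.

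For correctness in the reject case, take the witnesses $\mu_i, \mu_j \in (V,p,20\theta)$ with $\|\mu_i-\mu_j\| \geq (\log(k/w^*))^4$; since both lie close to $V$, projecting to $V^\perp$ preserves at least $0.9(\log(k/w^*))^4$ of this separation. Apply Claim \ref{claim:exists-reasonable} with its $\theta$-parameter set to $10\theta$ (hypotheses verified by the $s$-separation assumption, the dimension bound, and the polynomial bound on $\max\|\mu_i-\mu_j\|$): all but $O(\log\log(k/w^*))$ of the $\gamma$'s yield a $0.9w^*$-reasonable $\mcl{M}'_\gamma$, so our restricted range contains such a $\gamma$. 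For any good $\gamma \geq 3$, the effective weights of $\mu_i, \mu_j$ in $\mcl{M}'_\gamma$ remain at least $0.9w^*$ (they lie inside $(V,p,10(\gamma-1)\theta)$, deep inside the reduction checker), so Lemma \ref{lem:find-signal1} returns a signal direction of magnitude at least $0.45(\log(k/w^*))^4$, passing the verification. In the accept case all relevant means of $\mcl{M}'_\gamma$ lie within $0.1(\log(k/w^*))^4$ of each other, so a direct Gaussian-tail computation rules out any $(0.4w^*, 0.3(\log(k/w^*))^4)$-signal direction and no $\gamma$ triggers a reject.

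The main obstacle is the parameter coordination: the reduction radius $10\gamma\theta$ must be simultaneously large enough that the reject witnesses lie well inside (so their effective weights survive), small enough that the truncation's relevant means stay within $(V,p,\theta')$ (so the accept hypothesis controls their spread), and matched with a $\gamma$-range large enough for Claim \ref{claim:exists-reasonable} to yield a reasonable reduction. The polylogarithmic gap $\theta'/\theta = (\log(k/w^*))^{0.05c}$ is exactly what makes all three requirements achievable at once.
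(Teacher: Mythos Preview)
Your proposal is correct and follows essentially the same approach as the paper's proof: iterate over a range of reduction radii, invoke Claim~\ref{claim:exists-reasonable} to guarantee that some iterate yields a reasonable truncated reduction, apply Lemma~\ref{lem:find-signal1} there to find a signal direction in the reject case, and use Claim~\ref{claim:test-is-there-signal} as the final verification that governs the accept/reject decision. The paper uses radii $(30+\gamma)\theta$ for $\gamma\in\{1,\dots,\lceil 10^4\log\log(k/w^*)\rceil\}$ and threshold $(0.4w^*,0.4(\log(k/w^*))^4)$, while you use $10\gamma\theta$ over a somewhat larger range and threshold $(0.4w^*,0.3(\log(k/w^*))^4)$; these are cosmetic differences, and if anything you are slightly more explicit than the paper about why the accept case goes through (namely, that the maximal radius stays below $\theta'$ so the accept hypothesis controls all relevant means).
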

\begin{proof}
Let $\theta = (\log (k/w^*))^{(1 + c)/2}$.  We consider $\red_{V,p, (30 + \gamma) \theta}^{(\theta)}(\mcl{M})$ for all $\gamma = \{1, \dots , \lceil 10^4 \log \log (k/w^*)\rceil  \}$.  We simulate samples from $\red_{V,p, (30 + \gamma) \theta}^{(\theta)}(\mcl{M})$ using samples from $\red_{V,p, (30 + \gamma) \theta}(\mcl{M})$ (which can be simulated using samples from $\mcl{M}$), which by Corollary \ref{coro:truncate-prob} is equivalent with high probability for polynomially many samples.  We attempt to find a signal direction in this reduced mixture for each $\gamma$ using Lemma \ref{lem:find-signal1}.  If for any $\gamma$, we find a direction that we can check is a $(0.4w^*, 0.4 (\log (k/w^*))^{4} )$-signal direction using Claim \ref{claim:test-is-there-signal},  then we output {\sc Reject}.  Otherwise we output {\sc Accept}.  To see why this works, first note that clearly we will always output {\sc Accept} when we are supposed to accept.  Now to see that we reject when we are supposed to reject, consider the choice of $\gamma$ guaranteed by Claim \ref{claim:exists-reasonable} for which the mixture $\red_{V,p, (30 + \gamma) \theta}^{(\theta)}(\mcl{M})$ is reasonable.  For this choice of $\gamma$, by the guarantees of Lemma \ref{lem:find-signal1}, we will find a $(0.4w^*, 0.4 (\log (k/w^*))^{4} )$-signal direction with high probability and thus we will {\sc Reject}. 
\end{proof}

Finally, we show how to do the final step where we do full clustering and isolate samples from a single component.

\begin{lemma}\label{lem:final-step}
Let $\mcl{M} = w_1N(\mu_1, I) + \dots + w_k N(\mu_k, I)$ be a GMM.  Let $w^* > 0$ be a parameter and $c > 0$ be a positive constant.  Assume that $\mcl{M}$ is $s$-separated where $s = (\log(k/w^*))^{1/2 + c}$ and satisfies $w_i \geq w^*$ for all $i$.  Assume that we are given a subspace $V$ and a point $p \in V$ where the dimension of $V$ is $a < (\log (k/w^*))^{1 + 0.1 c}$.  Assume that the checker $(V,p, 10(\log (k/w^*))^{(1 + c)/2 })$ contains some $\mu_i$.  Also assume that for all $i,j$ such that $\mu_i$ and $\mu_j$ are both contained in the checker $(V,p, 20(\log (k/w^*))^{(1 + c)/2 })$, we have that $\norm{\mu_i - \mu_j}  \leq (\log (k/w^*))^{4}$.  Then there is an algorithm that takes $n = \poly((dk/w^*)^{1/c})$ samples and runs in $\poly(n)$ time, and with high probability, returns a test (which we denote $\textsf{test}$) that can be computed in $\poly(dk)$ time and has the following properties
\begin{itemize}
    \item There is some $i$ such that for a random sample $z \sim N(\mu_i, I)$, with high probability $\textsf{test}(z) = \textsf{accept}$
    \item For all other $i' \neq i$, for a random sample $z \sim N(\mu_i, I)$, with high probability $\textsf{test}(z) = \textsf{reject}$
\end{itemize}
\end{lemma}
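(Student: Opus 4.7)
The plan is to reduce $\mcl{M}$ to a low-dimensional mixture on $V^\perp$ using a checker of radius $r = \Theta(\theta)$ with $\theta = (\log(k/w^*))^{(1+c)/2}$, fully cluster this reduced mixture using Lemma \ref{lem:find-signal2}, and then define $\textsf{test}$ in terms of a checker-containment condition composed with the classifier from Claim \ref{claim:cluster-using-means}.

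Concretely, set $r = 15\theta$ and simulate samples from $\red_{V,p,r}(\mcl{M})$ by rejection sampling from $\mcl{M}$. By Corollary \ref{coro:truncate-prob} (using that by hypothesis some $\mu_i \in (V,p,10\theta) \subset (V,p,r-\theta)$, and that $\dim V < (\theta/10)^2$), these samples are negligibly close in TV to samples from the truncated reduction $\red_{V,p,r}^{(\theta)}(\mcl{M})$, whose relevant components lie in $(V,p,16\theta) \subset (V,p,20\theta)$. By the hypothesis their pairwise distances are at most $(\log(k/w^*))^4$, and since their $V$-projections are within $32\theta$ of each other while $\|\mu_{i_1} - \mu_{i_2}\| \geq s$, their $V^\perp$-projections are at least $\sqrt{s^2 - (32\theta)^2} \geq s/2$ apart (using $s/\theta = (\log(k/w^*))^{c/2}$, which is large for sufficiently large $\log(k/w^*)$). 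Thus the truncated reduction, viewed as a mixture in $V^\perp$, has minimum separation $\geq s/2$ and maximum separation $\leq (\log(k/w^*))^4$.

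Apply Lemma \ref{lem:find-signal2} to this reduced mixture with weight threshold $w' = (w^*/k)^{C'}$ for a sufficiently large constant $C'$, obtaining mean estimates $\wt{\mu_1}, \ldots, \wt{\mu_l}$ in $V^\perp$ that are mutually $s/4$-separated and that approximate the $V^\perp$-projection of each relevant component whose weight in the reduction is at least $w'$. The $\mu_i \in (V,p,10\theta)$ guaranteed by the hypothesis satisfies $\Pr_{z \sim N(\mu_i, I)}[z \in (V,p,r)] \geq 1 - 2^{-\Omega(\theta^2)}$ and therefore has weight $\Omega(w^*) \gg w'$ in the reduction, so it corresponds to some $\wt{\mu_{j^*}}$. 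Use Claim \ref{claim:cluster-using-means} on the candidate set to obtain a classifier that, given $y \in V^\perp$, returns an index in $[l]$ or fails. For each $j \in [l]$, let the provisional test $\textsf{test}_j(z)$ accept iff $\chk_{V,p,r}(z) = 1$ and the classifier on $\Proj_{V^\perp}(z)$ returns $j$. Empirically estimate the acceptance probability of each $\textsf{test}_j$ on samples from $\mcl{M}$, and output $\textsf{test}_{j^\dagger}$ for any $j^\dagger$ where this empirical probability is at least $w^*/4$.

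For correctness: such a $j^\dagger$ exists because $\textsf{test}_{j^*}$ has acceptance probability at least $w^*(1 - o(1))$. For $z \sim N(\mu_i, I)$ where $\mu_i$ corresponds to $\wt{\mu_{j^\dagger}}$, both the checker test and the classifier succeed with high probability, so we accept. For $z \sim N(\mu_{i'}, I)$ with $i' \neq i$: if $\mu_{i'}$ is represented by some $\wt{\mu_{j'}}$ with $j' \neq j^\dagger$, then by Claim \ref{claim:cluster-using-means} the classifier returns $j'$ with high probability and we reject; otherwise $\mu_{i'}$ is not in the candidate set, so $\Pr_{z \sim N(\mu_{i'}, I)}[z \in (V,p,r)] \leq w'/w^* \leq (w^*/k)^{C'-1}$, which is polynomially small, so the checker test fails with high probability regardless of what the classifier does. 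The main subtlety, and the crux of the argument, is choosing $w'$: it must be small enough that components outside the candidate set have negligible probability of producing samples in the checker, yet large enough that the minimum separation $s/2$ in $V^\perp$ still dominates $(\log(l/w'))^{1/2+c'}$ as required by Lemma \ref{lem:find-signal2}. This is resolved by picking $w' = (w^*/k)^{C'}$ with $C'$ a sufficiently large constant and invoking Lemma \ref{lem:find-signal2} (and Claim \ref{claim:cluster-using-means}) with a slightly reduced exponent $c' < c$.
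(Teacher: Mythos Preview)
Your overall strategy matches the paper's: reduce via a checker, apply Lemma~\ref{lem:find-signal2} to the reduced mixture, then build the test from the checker containment plus the classifier of Claim~\ref{claim:cluster-using-means}. However, there is a genuine gap in how you handle components that are \emph{not} represented in the candidate list.

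In your Case~B (component $i'$ has weight $<w'$ in the reduction), you bound the misclassification probability by $w'/w^* = (w^*/k)^{C'-1}$ and call this ``high probability.'' But in this paper, ``high probability'' means failure probability smaller than \emph{any} inverse polynomial in $k,1/w^*$; a fixed inverse polynomial does not qualify. This matters downstream: in the proof of Theorem~\ref{thm:main-GMM} the test is applied to polynomially many fresh samples (with the polynomial not known to your choice of $C'$) and all must be classified correctly. A component whose mean sits right at the boundary of your single checker $(V,p,15\theta)$ can have landing probability exactly of order $(w^*/k)^{C'-1}$, and there is no way to push this to super-polynomial by adjusting $w'$ alone.

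The paper resolves this with a \emph{nested-checker} trick rather than a weight threshold. It learns the candidate means from the reduction at radius $19\theta$, and observes that any $\mu_{i'}$ lying in the middle checker $(V,p,18\theta)$ has weight $\gtrsim w^*$ in that reduction, hence is represented. The test itself restricts to the \emph{inner} checker $(V,p,17\theta)$. Now the dichotomy is geometric: either $\mu_{i'}\in(V,p,18\theta)$ and is represented (so Claim~\ref{claim:cluster-using-means} handles it), or $\mu_{i'}\notin(V,p,18\theta)$ and Gaussian tails give $\Pr[z\in(V,p,17\theta)]\le 2^{-\Omega(\theta^2)}=2^{-\Omega((\log(k/w^*))^{1+c})}$, which is genuinely super-polynomially small. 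The selection of the accepted index is also done geometrically: pick a cluster of weight $\ge 0.5w^*$ with at least half its points in $(V,p,11\theta)$, guaranteeing the corresponding true mean is well inside all the nested checkers. Replacing your single radius $15\theta$ with this $17\theta/18\theta/19\theta$ sandwich fixes the gap.
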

\begin{proof}
Let $\theta =  (\log (k/w^*))^{(1+c)/2}$.  Consider the truncated reduced mixture
\[
\red_{V,p,19\theta}^{(\theta)}(\mcl{M}) \,.
\]
By Corollary \ref{coro:truncate-prob}, we can simulate a polynomial number of samples from this mixture by instead taking samples from $\red_{V,p,19\theta}(\mcl{M})$ (which we can simulate using samples from $\mcl{M}$).  Now, by assumption, all relevant components $\mu_i, \mu_j$ in $\red_{V,p,19\theta}^{(\theta)}(\mcl{M})$ must have
\[
0.9(\log(k/w^*))^{1/2 + c} \leq \norm{\Proj_{V^{\perp}}(\mu_i) - \Proj_{V^{\perp}}(\mu_j)} \leq  (\log (k/w^*))^{4} \,.
\]
Thus, we can apply Lemma \ref{lem:find-signal2} to this mixture with parameter $w^* = (w^*/k)^{10}$.  With high probability, we obtain a list of candidate means $\{\wt{\mu_1}, \dots , \wt{\mu_r} \}$  such that $r \leq k$, all of the candidate means are $0.4(\log(k/w^*))^{1/2 + c}$-separated and for all $i \in [k]$ such that $\mu_i$ is contained in the checker $(V,p, 18 \theta)$, there is some $f(i) \in [r]$ such that 
\[
\norm{\Proj_{V^{\perp}}(\wt{\mu_{f(i)}}) - \Proj_{V^{\perp}}(\mu_i)} \leq 0.1 \,.
\]
Note the last condition is because any component whose mean $\mu_i$ is contained in the checker $(V,p, 18 \theta)$ must be almost entirely contained in the checker $(V,p, 19 \theta)$ and thus will have significant mixing weight in $\red_{V,p,19\theta}^{(\theta)}(\mcl{M})$. 
\\\\
Now we draw $\poly(k/w^*)$ fresh samples from $\mcl{M}$ and restrict to those that are contained in the checker $(V,p, 17\theta)$.  With high probability, this is equivalent to drawing samples from the mixture
\[
\red_{V,p,17\theta}^{(\theta)}(\mcl{M}) \,.
\]
All of the relevant components in this mixture have mean $\mu_i$ that is contained in the checker $(V,p, 18 \theta)$.  Thus, we can apply Claim \ref{claim:cluster-using-means} to fully cluster these samples with high probability.  Recall that by assumption, there is some true mean $\mu_i$ contained in the checker  $(V,p,10 \theta)$.  Thus, we can find one of the clusters with weight at least $0.5w^*$ and such that at least half of the points in the cluster are contained in the checker $(V,p,11\theta)$.  Say that this cluster corresponds to $\mu_j$ and estimated mean $\wt{\mu_{f(j)}}$.  This cluster must be almost entirely contained in the checker $(V,p,17\theta)$.  To isolate exactly the points from $N(\mu_j, I)$ with high probability, we can first restrict to points contained in the checker $(V,p,17\theta)$ and then apply Claim \ref{claim:cluster-using-means} and restrict to points for which the output is $f(j)$.  Thus, we have an efficiently computable test that, with high probability, isolates exactly the points from the cluster $N(\mu_j, I)$ and we are done.
\end{proof}

Now can complete the proof of our main theorem, Theorem \ref{thm:main-GMM}, for learning clusterable mixtures of Gaussians.

\begin{proof}[Proof of Theorem \ref{thm:main-GMM}]
First we apply the reductions in  Section \ref{sec:reductions} to ensure that $d \leq k$ and $\norm{\mu_i - \mu_j} \leq O((k/w_{\min})^2)$ for all $i,j$.  Now we apply the algorithm in Lemma \ref{lem:ball-recursion} for $(\log(k/w_{\min}))^{1 + 0.1c}$ times (where we set $w^* = w_{\min}$).  Note that the lemma works with a trivial initialization where $V$ is $0$-dimensional and $p$ is just $0$.  Using the guarantees of Lemma \ref{lem:ball-recursion}, with high probability at some point, we will get $V,p$ (where $V$ has dimension $(\log(k/w_{\min}))^{1 + 0.1c}$) which satisfy that 
\begin{itemize}
    \item The checker $(V,p, 10(\log (k/w_{\min}))^{(1 + c)/2}) $ contains at least one of the $\mu_i$
    \item For any $\mu_i, \mu_j$ contained in the checker $(V,p, (\log (k/w_{\min}))^{(1 + 1.1c)/2}) $, we have
    \[
    \norm{\mu_i - \mu_j} \leq 0.1(\log(k/w_{\min}))^4 \,.
    \]
\end{itemize}
Thus, when we run Lemma \ref{lem:test-max-separation} to check this pair $V,p$, we will {\sc Accept} this pair of $V,p$ and move to the final step.  Note that Lemma \ref{lem:test-max-separation} also implies that for any pair $V,p$ that we {\sc Accept} and move to the final step, we have the slightly weaker guarantees that
\begin{itemize}
    \item The checker $(V,p, 10(\log (k/w_{\min}))^{(1 + c)/2}) $ contains at least one of the $\mu_i$
    \item For any $\mu_i, \mu_j$ contained in the checker $(V,p, 20(\log (k/w_{\min}))^{(1 + c)/2}) $, we have
    \[
    \norm{\mu_i - \mu_j} \leq (\log(k/w_{\min}))^4
    \]
\end{itemize}
These weaker guarantees still suffice to apply the algorithm in Lemma \ref{lem:final-step} to isolate one of the components of $\mcl{M}$ (again with $w^* = w_{\min}$).  With high probability, we can take $\poly(k/w_{\min})$ samples from this component and estimate its mean and mixing weight to within $\alpha$ (since $\alpha = (k/w_{\min})^{O(1)}$).  We can also remove all of the samples from this component from the mixture and recurse on a mixture of $k-1$ Gaussians.  Note that we can do this because our test for checking whether a sample belongs to the removed component succeeds with high probability (meaning its failure probability is smaller than any inverse polynomial) and the recursive call only takes polynomially many samples.  Thus overall the algorithm succeeds with high probability and we are done.
\end{proof}

The clustering guarantee in Corollary \ref{coro:cluster-GMM} follows as an immediate consequence of Theorem \ref{thm:main-GMM}.
\begin{proof}[Proof of Corollary \ref{coro:cluster-GMM}]
Let the estimated means computed by Theorem \ref{thm:main-GMM} for $\alpha = (w_{\min}/k)^{10}$ be $\wt{\mu_1}, \dots , \wt{\mu_k}$.  Now for all $j_1, j_2 \in [k]$ with $j_1 \neq j_2$, let
\[
v_{j_1j_2} = \frac{\wt{\mu_{j_1}} - \wt{\mu_{j_2}}}{\norm{\wt{\mu_{j_1}} - \wt{\mu_{j_2}} }} \,.
\]
Now given a sample $z$ from $\mcl{M}$, we compute the index $j$ such that for all $j_1, j_2$, we have
\[
| v_{j_1j_2} \cdot (\wt{\mu_j} - z) | \leq (\log (k/w_{\min}))^{(1 + c)/2} \,. 
\]
Note that by the guarantees of Theorem \ref{thm:main-GMM}, there is a permutation $\pi$ such that $\norm{\wt{\mu_{\pi(i)}} -\mu_i} \leq \alpha$ for all $i$.  If $z$ is a sample from $N(\mu_i , I)$, then with high probability the unique index $j$ that satisfies the above is exactly $j = \pi(i)$ and thus, we recover the ground truth clustering with high probability. 
\end{proof}

\bibliographystyle{alpha}
\bibliography{bibliography}

\appendix

\section{Basic Reductions} \label{appendix:reductions} 

Here we explain the reductions for Section \ref{sec:reductions}.  When explaining the reductions, we will work with a mixture of Poincare distributions 
\[
\mcl{M} = w_1 \mcl{D}(\mu_1) + \dots + w_k \mcl{D}(\mu_k)
\]
but it will be obvious that these reductions also work for Gaussians.

\subsubsection*{Reducing to  all Means Polynomially Bounded}\label{sec:reduce-max-norm}

By Fact \ref{fact:basic-Poincare}, with $1 - 2^{-10(d + k)/w_{\min}}$ probability, a sample $z \sim \mcl{D}(\mu_i)$ satisfies 
\[
\norm{z - \mu_i} \leq 10^4 \cdot (d + k)/w_{\min} \,.
\]
Now, we look for a pair of samples $z,z'$ such that 
\[
\norm{z - z'} \geq 10^6 ((d + k)/w_{\min})^{2} \,.
\]
Note that with high probability, such a pair exists if 
\[
\norm{\mu_i - \mu_j} \geq 1.1 \cdot 10^6 ((d + k)/w_{\min})^{2}
\]
for some $i,j$.  Let $v$ be the unit vector in the direction $z - z'$ and let $\mu_i, \mu_j$ be the components that $z,z'$ were drawn from.  We must have that 
\[
| \langle v , \mu_i - \mu_j \rangle | \geq 0.99 \cdot 10^6 ((d + k)/w_{\min})^{2} \,.
\]
Now imagine projecting all of the samples onto the direction $v$.  Note that by Fact \ref{fact:basic-Poincare}, with high probability all of the samples from a given component will lie in an interval of width at most $10^2(d + k)/w_{\min}$ after projecting onto the direction $v$.  Thus, there must be an empty interval of width at least $10^3 (d + k)/w_{\min}$ between the projections of $z$ and $z'$.  This means that no $\mu_i$ has projection in this interval and we can subdivide the mixture into two submixtures with strictly fewer components by cutting via a hyperplane normal to $v$ through the middle of this interval.  Note that for each of the two submixtures, a sample will be on the wrong side of this cut with exponentially small probability but our algorithm will only use polynomially many samples.  Thus, we can simply run our learning algorithm on each submixture.  

We have reduced to $\norm{\mu_i - \mu_j} \leq O(((d + k)/w_{\min})^2)$.  Now we can simply estimate the mean of the distribution $\mcl{M}$ and subtract it out.  Note that we have proved the reduction with a bound of $ O(((d + k)/w_{\min})^2)$.  To reduce to $O((k/w_{\min})^2)$, we can apply the reduction in the next section, which will ensure that $d \leq k$ and then we can apply this reduction again.

\subsubsection*{Reducing the Dimension}\label{sec:reduce-dim}

Next, we show that we can reduce to the case when $d  \leq k$.  We can estimate the empricial covariance of $\mcl{D}$, say $\Sigma_{\mcl{D}}$.  We can also estimate the empirical covariance of $\mcl{M}$, which is
\[
\Sigma_{\mcl{M}} = w_1 (\mu_1 \otimes \mu_1) + \dots + w_k (\mu_k \otimes \mu_k) + \Sigma_{\mcl{D}} \,.
\]
Note that since all means $\norm{\mu_i }$ are polynomially bounded by the previous reduction, we can obtain estimates for $\Sigma_{\mcl{M}}$ and $\Sigma_{\mcl{D}}$ that are accurate to within $\eps$ in Frobenius norm for any inverse polynomial $\eps$.  Thus, we have an estimate $\wt{M}$ such that 
\[
\norm{\wt{M} - \left( w_1( \mu_1 \otimes \mu_1) + \dots + w_k (\mu_k \otimes \mu_k) \right)}_{F} \leq 2\eps
\]
for any desired inverse polynomial $\eps$.  We can then take $V$ to be the subspace spanned by the top $k$ principal components of $\wt{M}$.  Using the above, we can ensure that all of the $\mu_i$ are within distance $0.1 \delta \cdot (w_{\min}/k)^{10}$ of the subspace $V$.  Let $\mcl{D}_{\Proj, V}$ be the projection of the distribution $\mcl{D}$ onto the subspace $V$.  If we project all of our samples onto the subspace $V$, we would have a mixture of translated copies of $\mcl{D}_{\Proj, V}$ where the separations are decreased by at most $0.2 \delta \cdot (w_{\min}/k)^{10}$.  Thus, we have reduced to a $k$-dimensional problem.

\end{document}